\newcommand{\mathsym}[1]{{}}
\let\oldbfseries=\bfseries
\let\oldmdseries=\mdseries
\let\oldnormalfont=\normalfont
\renewcommand{\bfseries}{\oldbfseries\boldmath}
\renewcommand{\mdseries}{\oldmdseries\unboldmath}
\renewcommand{\normalfont}{\oldnormalfont\unboldmath}
\numberwithin{equation}{section}
\newcommand\hypersetup[1]{}\fi
\DeclareMathSymbol{\Gamma}{\mathalpha}{letters}{"00}
\DeclareMathSymbol{\Delta}{\mathalpha}{letters}{"01}
\DeclareMathSymbol{\Theta}{\mathalpha}{letters}{"02}
\DeclareMathSymbol{\Lambda}{\mathalpha}{letters}{"03}
\DeclareMathSymbol{\Xi}{\mathalpha}{letters}{"04}
\DeclareMathSymbol{\Pi}{\mathalpha}{letters}{"05}
\DeclareMathSymbol{\Sigma}{\mathalpha}{letters}{"06}
\DeclareMathSymbol{\Upsilon}{\mathalpha}{letters}{"07}
\DeclareMathSymbol{\Phi}{\mathalpha}{letters}{"08}
\DeclareMathSymbol{\Psi}{\mathalpha}{letters}{"09}
\DeclareMathSymbol{\Omega}{\mathalpha}{letters}{"0A}
\newcommand{\gen}[1]{\mathrm{#1}}
\newcommand{\dd}{\mathrm{d}}
\newcommand{\ii}{\mathrm{i}}
\newcommand*\widebar[1]{%
  \hbox{%
    \vbox{%
      \hrule height 0.5pt 
      \kern0.25ex
      \hbox{%
        \kern-0.3em
        \ensuremath{#1}%
        \kern-0.1em
      }%
    }%
  }%
}
\newcommand{\ket}[1]{\left|#1\right\rangle}      
\newcommand{\bra}[1]{\left\langle #1\right|}     
\newcommand{\braket}[2]{\left\langle #1 \right. | \left.  #2 \right\rangle}
\newcommand{\alg}[1]{\mathfrak{#1}}
\newcommand{\beq}{\begin{equation}}
\newcommand{\eeq}{\end{equation}}
\def\[{\begin{equation}}
\def\]{\end{equation}}
\def\<{\begin{eqnarray}}
\def\>{\end{eqnarray}}
\newtheorem{mydef}{Definition}
\newtheorem{theorem}{Theorem}
\newtheorem{lemma}{Lemma} 
\newtheorem{remark}{Remark}
\newtheorem{proposition}{Proposition}
\def\mr@ignsp#1 {\ifx\:#1\@empty\else #1\expandafter\mr@ignsp\fi}%
\newcommand{\multiref}[1]{\begingroup
\xdef\mr@no@sparg{\expandafter\mr@ignsp#1 \: }%
\def\mr@comma{}%
\@for\mr@refs:=\mr@no@sparg\do{\mr@comma\def\mr@comma{,}\ref{\mr@refs}}%
\endgroup}
\newcommand{\hypref}[2]{\ifx\href\asklfhas #2\else\href{#1}{#2}\fi}
\newcommand{\Secref}[1]{Section~\multiref{#1}}
\newcommand{\Appref}[1]{Appendix~\multiref{#1}}
\newcommand{\Figref}[1]{Figure~\multiref{#1}}
\renewcommand{\eqref}[1]{(\multiref{#1})}
\newlength{\apb@width}
\newcommand{\autoparbox}[2][c]{\settowidth{\apb@width}{#2}\parbox[#1]{\apb@width}{#2}}
\asklfhas\newcommand{\href}[2]{#2}\fi
\begin{document}

\renewcommand{\thefootnote}{\fnsymbol{footnote}}
\thispagestyle{empty}
\begin{flushright}\footnotesize
ZMP-HH/14-25
\end{flushright}
\vspace{1cm}

\begin{center}%
{\Large\bfseries%
\hypersetup{pdftitle={Off-shell scalar products for the $XXZ$ chain with open boundaries}}%
Off-shell scalar products for \\ the  $XXZ$ spin chain with \\ open boundaries%
\par} \vspace{2cm}%

\textsc{W. Galleas}\vspace{5mm}%
\hypersetup{pdfauthor={Wellington Galleas}}%

\textit{II. Institut f\"ur Theoretische Physik \\ Universit\"at Hamburg, Luruper Chaussee 149 \\ 22761 Hamburg, Germany}\vspace{3mm}%

\verb+wellington.galleas@desy.de+ %

\par\vspace{3cm}

\textbf{Abstract}\vspace{7mm}

\begin{minipage}{12.7cm}
In this work we study scalar products of Bethe vectors associated with the $XXZ$ spin chain with 
open boundary conditions. The scalar products are obtained as solutions of 
a system of functional equations. The description of scalar products through functional relations follows
from a particular map having the \textit{reflection algebra} as its domain and a \textit{function space} as the codomain.
Within this approach we find a multiple contour integral representation for the scalar products in which the homogeneous limit can be 
obtained trivially.

\hypersetup{pdfkeywords={Scalar product, integral formula, open boundaries}}%
\hypersetup{pdfsubject={}}%
\end{minipage}
\vskip 1.5cm
{\small PACS numbers:  05.50+q, 02.30.IK}
\vskip 0.1cm
{\small Keywords: Scalar product, integral formula, open boundaries}
\vskip 2cm
{\small January 2015}

\end{center}

\newpage
\renewcommand{\thefootnote}{\arabic{footnote}}
\setcounter{footnote}{0}

\tableofcontents

\section{Introduction}
\label{sec:intro}

The \textit{inverse scattering method} was formulated in the late sixties by Gardner, Greene, Kruskal and Miura
\cite{inverse_scattering_1967} and it represented a large step towards the understanding of evolution equations. 
This method was originally conceived for solving the Korteweg-deVries (KdV) equation \cite{KdV_1895} but turned out to be
a very general framework, capable of integrating a large number of non-linear differential equations.
Although the application of this method is restricted to a special class of differential equations, its importance
is twofold: on the one hand, it offers a fertile soil where novel and sophisticated mathematical
structures have emerged and continue to do so; on the other hand, the inverse scattering provides a powerful tool for the exact
analysis of physical quantities. For instance, although numerical studies of the KdV equation
were already available in the sixties, the exact solution obtained in \cite{inverse_scattering_1967}
elucidated several aspects concerning the \textit{solitonic} behavior reported in \cite{Zabusky_Kruskal_1965}.
The history repeated itself in \cite{Takhtajan_Faddeev_1981} where the spin of a spin wave
was finally understood through the exact solution of the $XXX$ spin chain obtained by means
of the Bethe ansatz \cite{Bethe_1931}.
The fundamental difference between the non-linear waves governed by the KdV equation and the spin waves
of the $XXX$ chain, is that the first is described by a partial differential equation while the latter
arises from the spectrum of excitations of a quantum mechanical hamiltonian operator. The hamiltonian
of the $XXX$ model is not defined as a differential operator; hence its spectral problem is not a \textit{priori}
characterized by a differential equation. Nevertheless, the spectrum of the $XXX$ spin chain was shown by Baxter to 
admit a description in terms of functional equations \cite{Baxter_1971}.

The existence of a direct correspondence between Baxter's functional equations and integrable non-linear differential
equations is not apparent at first sight but several results suggest this relation should exist. This possibility has been
concluded from different approaches and we refer the reader to  
\cite{Givental_kim_1995, Mukhin_Tarasov_Varchenko_2014, Alexandrov_2013, Zabrodin_2013, Zabrodin_2014, Gorsky_2014, Galleas_2014}
and references therein. In particular, in the work \cite{Galleas_2014} we have presented a first principle mechanism able to convert the spectral problem of spin chains built from solutions of the Yang-Baxter equation 
\cite{Baxter_1971, Sk_Faddeev_1979, Takh_Faddeev_1979} into the solution of linear partial differential equations.
This mechanism is one of the outcomes of the \textit{Algebraic-Functional Method} 
proposed in \cite{Galleas_2008} and refined in the series of papers \cite{Galleas_2010, Galleas_2011, Galleas_2012, Galleas_SCP, Galleas_Twists}.

Among the results obtained through the algebraic-functional approach, we have demonstrated in \cite{Galleas_SCP}
how scalar products of Bethe vectors can be described by functional equations. More precisely, in \cite{Galleas_SCP} we have shown
how the Yang-Baxter algebra can be exploited in order to derive functional relations determining the scalar products associated
with the $XXZ$ chain with periodic boundary conditions. The algebraic structure underlying the $XXZ$ chain with open boundary
conditions is the so called \textit{reflection algebra} \cite{Sklyanin_1988} and here we intend to extend the results of \cite{Galleas_SCP} for 
that case. The reflection algebra is the analogue of the Yang-Baxter algebra for open spin chains and in 
\cite{Galleas_Lamers_2014} we have already demonstrated the feasibility of exploiting it along the lines of \cite{Galleas_2008}.

Scalar products of Bethe vectors are building blocks of correlation functions within the framework of the algebraic 
Bethe ansatz \cite{Korepin_1982, Korepin_book, Kitanine_1999} and, as far as the $XXX$ model with diagonal open boundaries
is concerned, those scalar products have been previously studied in \cite{Wang_2002} using the 
method of \cite{Kitanine_1999}. The results of \cite{Wang_2002} are expressed as determinants and they have
been generalized to the $XXZ$ chain in \cite{Kitanine_2007}. Also, these results have been used in \cite{Wang_2003, Kitanine_2007, Kitanine_2008}
to evaluate certain correlation functions for the $XXX$ and $XXZ$ models with open boundaries. In the works
\cite{Kitanine_2007, Kitanine_2008}, in particular, the authors have obtained multiple integral representations for correlation functions in the case of
a half-infinite lattice. It is also important to remark here that for half-infinite lattices such correlation functions have been
studied in \cite{Kedem_1995a, Kedem_1995b} through the vertex-operator approach and in \cite{Baseilhac_2013, Baseilhac_Kojima_2014a, Baseilhac_Kojima_2014b}
using the $q$-Onsager algebra. 

Here we use the terminology \textit{scalar product of Bethe vectors} but we should keep in mind the conditions in which 
such quantities are computed. Bethe vectors are parameterized by a set of complex variables which are required to satisfy
a set of algebraic equations, i.e. Bethe ansatz equations, in order to having a model eigenvector. The dual Bethe vector required
to compute the aforementioned scalar products also carries another set of complex variables. In this way we have the so called
\textit{off-shell} scalar products when both sets of variables parameterizing the scalar product are arbitrary complex variables.
On the other hand, the cases where only one or both sets of variables satisfy Bethe ansatz equations are usually refereed to as 
\textit{on-shell} scalar products. That being said, we should also remark here that the scalar products obtained in \cite{Wang_2002, Kitanine_2007}
are on-shell. The case of off-shell scalar products for open spin chains has not been considered in the literature so far to the
best of our knowledge, and we shall address this problem in the present paper. Although the computation of correlation functions
usually require on-shell scalar products, the off-shell case can still be regarded as a partition function of a vertex model
with special boundary conditions \cite{Korepin_1982, Tsuchiya_1998} as discussed in \cite{deGier_Galleas_2011}.
More precisely, here we shall present a multiple integral representation for the off-shell scalar product of the 
$XXZ$ spin chain with open boundary conditions. Interestingly, this integral representation allows one to obtain the
homogeneous limit in a trivial way.

The $XXZ$ spin-chain with open boundaries has been extensively discussed in the literature and
this paper has been organized in such a manner to avoid large overlaps with the existing literature.
Therefore, we shall simply collect the required definitions in \Secref{sec:XXZ} in order to clarify
our notation. \Secref{sec:FUN} is then devoted to the derivation of functional relations characterizing
the desired scalar products. The solution of our functional equations is then obtained in \Secref{sec:SOL}. 
Concluding remarks are discussed in \Secref{sec:CONCLUSION} and appendices \ref{sec:SYM} through \ref{sec:Ln1} are
devoted to technical details and proofs.

\section{The open $XXZ$ model and Bethe vectors}
\label{sec:XXZ}

The anisotropic Heisenberg chain was proposed by Bloch in \cite{Bloch_1930, Bloch_1932} as a model for 
remanent magnetization. The exact solution of the $XXZ$ case was then obtained by Yang and Yang through
Bethe ansatz in the series of papers \cite{Yang_Yang_1966, Yang_Yang_1966I, Yang_Yang_1966II, Yang_Yang_1966III}.
In 1967 Sutherland \cite{Sutherland_1967} found a vertex model transfer matrix exhibiting the same eigenvectors firstly obtained 
by Yang and Yang for the $XXZ$ spin chain. It is worth remarking here that, at that time, 
the relation between transfer matrices and spin chains solvable by Bethe ansatz was not clear yet.
The aforementioned works considered the case with periodic boundary conditions while the case
with parallel boundary fields was then solved in \cite{Alcaraz_1987} through a generalization of the Bethe ansatz. 

The model hamiltonian is a linear operator $\mathcal{H} \colon (\mathbb{C}^2)^{\otimes L} \to (\mathbb{C}^2)^{\otimes L}$ with 
$L \in \mathbb{Z}_{>0}$ and it explicitly reads 
\[
\label{ham}
\mathcal{H} \coloneqq \sum_{i=1}^{L-1} \sum_{\alpha \in \{x,y,z \}} J_{\alpha} \; \sigma_{i}^{\alpha} \sigma_{i+1}^{\alpha} + \sinh{(\gamma)} \coth{(h)} \; \sigma_{1}^{z} - \sinh{(\gamma)} \coth{(\bar{h})} \; \sigma_{L}^{z}  
\]
where $J_{x} = J_{y} =1$ and $J_z = \cosh{(\gamma)}$. Here $h$, $\bar{h}$ and $\gamma$ are arbitrary complex parameters while
$\sigma_i^x$, $\sigma_i^y$ and $\sigma_i^z$ are standard Pauli matrices acting on the $i$-th node of the tensor product space
$(\mathbb{C}^2)^{\otimes L}$. The hamiltonian (\ref{ham}) can be embedded as the derivative of a commuting transfer matrix at a particular point.
This construction will be discussed in what follows and here we shall also consider the definitions and notation employed
in \cite{Galleas_Lamers_2014}. Moreover, we shall restrict ourselves to presenting only the required definitions which have not been
described in \cite{Galleas_Lamers_2014}.

\paragraph{Reflection matrices.} Let $\mathcal{K}, \bar{\mathcal{K}} \colon \mathbb{C} \to \gen{End}(\mathbb{C}^2)$ be respectively
refereed to as \textit{reflection matrix} and \textit{dual reflection matrix}. Then following \cite{Sklyanin_1988}, the construction of
integrable spin chains with open boundary conditions through the \textit{Quantum Inverse Scattering Method} (QISM) requires 
$\mathcal{K}$ and $\bar{\mathcal{K}}$ to satisfy the so called \textit{reflection equations}. In this way, $\mathcal{K}$ is governed by the equation
\<
\label{Req}
&& \mathcal{R}_{12}(\lambda_1 - \lambda_2) \mathcal{K}_1 (\lambda_1) \mathcal{R}_{12}(\lambda_1 + \lambda_2) \mathcal{K}_2 (\lambda_2) \nonumber \\
&& = \mathcal{K}_2 (\lambda_2) \mathcal{R}_{12}(\lambda_1 + \lambda_2) \mathcal{K}_1 (\lambda_1) \mathcal{R}_{12}(\lambda_1 - \lambda_2) \; ,
\>
while $\bar{\mathcal{K}}$ is required to satisfy the following dual relation
\<
\label{bReq}
&& \mathcal{R}_{12}(-\lambda_1 + \lambda_2) \bar{\mathcal{K}}_1^{t_1} (\lambda_1) \mathcal{R}_{12}(-\lambda_1 - \lambda_2 - 2\gamma) \bar{\mathcal{K}}_2^{t_2} (\lambda_2) \nonumber \\
&& = \bar{\mathcal{K}}_2^{t_2} (\lambda_2) \mathcal{R}_{12}(-\lambda_1 - \lambda_2 - 2\gamma) \bar{\mathcal{K}}_1^{t_1} (\lambda_1) \mathcal{R}_{12}(-\lambda_1 + \lambda_2) \; .
\>
Both relations (\ref{Req}) and (\ref{bReq}) involve the operator-valued functions $\mathcal{R}_{ij} \colon \mathbb{C} \rightarrow \gen{End}(\mathbb{V}_i \otimes \mathbb{V}_j)$
and $\mathcal{K}_i, \bar{\mathcal{K}}_i \colon \mathbb{C} \to \gen{End}(\mathbb{V}_i \otimes \gen{id})$.
Here we have $\mathbb{V}_i \cong \mathbb{C}^2$ as we are considering the $XXZ$ spin chain hamiltonian (\ref{ham}). Also, the operator $\mathcal{R}_{12}$ corresponds to the $\mathcal{U}_q [\widehat{\alg{sl}}(2)]$ 
invariant $\mathcal{R}$-matrix satisfying the Yang-Baxter equation as described in \cite{Galleas_Lamers_2014}.
In addition to that, the symbol $t_i$ stands for the standard transposition on the space $\mathbb{V}_i$ of a generic matrix
in $\gen{End}( \mathbb{V}_i \otimes \mathbb{V}_j )$.

Here we shall consider the following solution of (\ref{Req}),
\[
\label{kmat}
\mathcal{K}(\lambda) = \left( \begin{matrix}
\sinh{(h + \lambda)} & 0 \cr
0 & \sinh{(h - \lambda)} \end{matrix} \right) \; ,
\]
where $h \in \mathbb{C}$ corresponds to an arbitrary parameter governing the field strength at one of
the boundaries of the hamiltonian (\ref{ham}). The dual reflection matrix $\bar{\mathcal{K}}$ is obtained with the help of the following lemma.

\begin{lemma}[Sklyanin] \label{mapSK}
The map $\mathfrak{d} \left( \mathcal{K}_i (\lambda) \right) \coloneqq \left. \mathcal{K}_i^{t_i} (-\lambda - \gamma) \right|_{h \to \bar{h}}$
is an isomorphism between (\ref{Req}) and (\ref{bReq}).
\end{lemma}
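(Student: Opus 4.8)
The plan is to read the map $\mathfrak{d}$ as an affine reparametrization of the spectral variable, combined with a single-site transposition and the reversible substitution $h\to\bar h$, and to show that this change of variables carries the reflection equation \eqref{Req} literally onto its dual \eqref{bReq}. Setting $\bar{\mathcal{K}}_i(\lambda):=\mathfrak{d}\bigl(\mathcal{K}_i(\lambda)\bigr)=\mathcal{K}_i^{t_i}(-\lambda-\gamma)\big|_{h\to\bar h}$ and using that $t_i$ acts on a single space and is therefore an involution, I would first record the identity $\bar{\mathcal{K}}_i^{t_i}(\lambda)=\mathcal{K}_i(-\lambda-\gamma)\big|_{h\to\bar h}$. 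This is the key observation: the transpositions decorating the factors $\bar{\mathcal{K}}_i^{t_i}$ in \eqref{bReq} cancel exactly against the transposition built into $\mathfrak{d}$, so that each $\bar{\mathcal{K}}_i^{t_i}$ may be replaced, in place, by an undecorated $\mathcal{K}_i$.

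Next I would substitute this identity into \eqref{bReq} and perform the change of variables $\mu_i:=-\lambda_i-\gamma$. The two arguments of the $R$-matrix simplify as $-\lambda_1+\lambda_2=\mu_1-\mu_2$ and $-\lambda_1-\lambda_2-2\gamma=\mu_1+\mu_2$, the shift by $-\gamma$ in $\mathfrak{d}$ and the explicit $-2\gamma$ in \eqref{bReq} being precisely what is needed to reproduce the arguments appearing in \eqref{Req}. Since the $R$-matrices in \eqref{bReq} enter untransposed and in the same left-to-right order as in \eqref{Req}, replacing the $\bar{\mathcal{K}}$-factors in place (no reordering of the product is involved) turns \eqref{bReq} into
\[
\mathcal{R}_{12}(\mu_1-\mu_2)\,\mathcal{K}_1(\mu_1)\,\mathcal{R}_{12}(\mu_1+\mu_2)\,\mathcal{K}_2(\mu_2)=\mathcal{K}_2(\mu_2)\,\mathcal{R}_{12}(\mu_1+\mu_2)\,\mathcal{K}_1(\mu_1)\,\mathcal{R}_{12}(\mu_1-\mu_2),
\]
evaluated at $h\to\bar h$. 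This is nothing but \eqref{Req} with spectral parameters renamed $\lambda_i\to\mu_i$ and $h\to\bar h$; as \eqref{Req} holds for all complex $\lambda_i$ and all $h$, it holds here. Hence $\mathcal{K}$ solves \eqref{Req} if and only if $\bar{\mathcal{K}}=\mathfrak{d}(\mathcal{K})$ solves \eqref{bReq}.

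Finally, to upgrade this equivalence of relations to the claimed isomorphism I would check that $\mathfrak{d}$ is a bijection. This is immediate from its construction: the spectral shift $\lambda\mapsto-\lambda-\gamma$ is its own inverse, the single-site transposition is an involution, and the replacement $h\to\bar h$ is reversible because $h$ and $\bar h$ are independent parameters; explicitly $\mathfrak{d}^{-1}\bigl(\bar{\mathcal{K}}_i(\lambda)\bigr)=\bar{\mathcal{K}}_i^{t_i}(-\lambda-\gamma)\big|_{\bar h\to h}$. I expect the only genuinely delicate point to be the bookkeeping of the partial transpositions together with the arithmetic of the shifted arguments — in particular verifying that the $-2\gamma$ in \eqref{bReq} conspires with the $-\gamma$ in $\mathfrak{d}$ so that \emph{no} transposition symmetry of $\mathcal{R}_{12}$ is required, the form of \eqref{bReq} having evidently been chosen precisely so that the $R$-matrices pass across untouched. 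Feeding the explicit solution \eqref{kmat} through this scheme then produces the dual reflection matrix $\bar{\mathcal{K}}(\lambda)=\mathcal{K}^{t}(-\lambda-\gamma)\big|_{h\to\bar h}$ to be used in the sequel.
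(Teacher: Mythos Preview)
Your argument is correct, and indeed more elementary than the paper's own proof. You observe that, for \eqref{Req} and \eqref{bReq} \emph{as literally written}, the substitution $\bar{\mathcal{K}}_i^{t_i}(\lambda_i)=\mathcal{K}_i(-\lambda_i-\gamma)\big|_{h\to\bar h}$ together with the affine change of variable $\mu_i=-\lambda_i-\gamma$ carries one equation onto the other verbatim, with the $R$-matrices untouched; this needs no property of $\mathcal{R}_{12}$ whatsoever, only the involutivity of the single-site transpose and the arithmetic $-\lambda_1+\lambda_2=\mu_1-\mu_2$, $-\lambda_1-\lambda_2-2\gamma=\mu_1+\mu_2$. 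Your remark that ``the form of \eqref{bReq} has evidently been chosen precisely so that the $R$-matrices pass across untouched'' is exactly the point.

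The paper's proof takes the classical Sklyanin route instead: it invokes the properties $\mathcal{R}_{21}=\mathcal{R}_{12}$, $\mathcal{R}_{12}^{t_1}=\mathcal{R}_{12}^{t_2}$, unitarity $\mathcal{R}_{12}(\lambda)\mathcal{R}_{12}(-\lambda)\propto\gen{id}$ and crossing unitarity $\mathcal{R}_{12}^{t_1}(\lambda)\mathcal{R}_{12}^{t_1}(-\lambda-2\gamma)\propto\gen{id}$. That machinery is what one needs if one starts from \eqref{Req}, applies a global (anti-)transposition, and then manipulates the resulting equation into the form \eqref{bReq}; it is also what guarantees that \eqref{bReq} is the ``right'' dual equation (the one ensuring commutativity of the double-row transfer matrices) for a generic $\mathcal{R}$. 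Your approach trades this generality for directness: given that the author has already recorded the dual equation in the convenient form \eqref{bReq}, the isomorphism is a one-line substitution and the $R$-matrix identities are superfluous for the lemma as stated.
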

\begin{proof}
We apply the map $\mathfrak{d}$ to (\ref{bReq}) taking into account the following properties satisfied by the $\mathcal{U}_q [\widehat{\alg{sl}}(2)]$ 
$\mathcal{R}$-matrix: $\mathcal{R}_{21} = \mathcal{R}_{12}$, $\mathcal{R}_{12}^{t_1} = \mathcal{R}_{12}^{t_2}$, $\mathcal{R}_{12}(\lambda) \mathcal{R}_{12}(- \lambda) \propto \gen{id}\otimes \gen{id}$
and $\mathcal{R}_{12}^{t_1} (\lambda) \mathcal{R}_{12}^{t_1} (-\lambda - 2\gamma) \propto \gen{id}\otimes \gen{id}$.
\end{proof}

Lemma \ref{mapSK} establishes a map $\mathfrak{d} \colon \mathcal{K} \mapsto \bar{\mathcal{K}}$ which can be exploited to determine
a solution of (\ref{bReq}) in a straightforward manner. From (\ref{kmat}) we then find the dual reflection matrix 
\[
\label{bkmat}
\bar{\mathcal{K}}(\lambda) \coloneqq \mathfrak{d}(\mathcal{K}(\lambda)) = \left( \begin{matrix}
\sinh{(\bar{h} - \lambda - \gamma)} & 0 \cr
0 & \sinh{(\bar{h} + \lambda + \gamma)} \end{matrix} \right) \; .
\]

\paragraph{Double-row transfer matrix.} Let $\lambda, \mu_j \in \mathbb{C}$ be respectively refereed to as spectral parameter and inhomogeneity
parameter. Then the \textit{double-row transfer matrix} or simply \textit{transfer matrix} $T \colon \mathbb{C} \to \gen{End}( (\mathbb{C}^2)^{\otimes L} )$ 
is defined as the following operator,
\[
\label{tmat}
T(\lambda) \coloneqq \gen{tr}_{0} \left[ \bar{\mathcal{K}}_0 (\lambda)  \mathop{\overleftarrow\prod}\limits_{1 \leq j \leq L} \mathcal{R}_{0 j}(\lambda - \mu_j) \; \mathcal{K}_0 (\lambda) \mathop{\overrightarrow\prod}\limits_{1 \leq j \leq L} \mathcal{R}_{0 j}(\lambda + \mu_j) \right] \; .
\]
The hamiltonian (\ref{ham}) then corresponds to
\[
\mathcal{H} = \left[ 2 \sinh{(\gamma)}^{2L} \coth{(\gamma)} \sinh{(h)} \sinh{(\bar{h})} \right]^{-1} \left. \frac{d T(\lambda)}{d \lambda} \right|_{\stackrel{\lambda = 0}{\mu_j = 0}} - L \cosh{(\gamma)} - \sinh{(\gamma)} \tanh{(\gamma)} \; .
\]

\paragraph{ABCD structure.} The trace $\gen{tr}_{0}$ in (\ref{tmat}) is taken only over the space $\mathbb{V}_0$, while
the term inside the brackets lives in $\gen{End} (\mathbb{V}_0 \otimes \mathbb{V}_{\mathcal{Q}})$ where 
$\mathbb{V}_{\mathcal{Q}} \cong (\mathbb{C}^2))^{\otimes L}$. Thus, it is convenient to employ the following representation, 
\[
\label{ABCD}
\mathop{\overleftarrow\prod}\limits_{1 \leq j \leq L} \mathcal{R}_{0 j}(\lambda - \mu_j) \; \mathcal{K}_0 (\lambda) \mathop{\overrightarrow\prod}\limits_{1 \leq j \leq L} \mathcal{R}_{0 j}(\lambda + \mu_j) \eqqcolon
\left(  \begin{matrix}
\mathcal{A}(\lambda) & \mathcal{B}(\lambda) \cr
\mathcal{C}(\lambda) & \mathcal{D}(\lambda) \end{matrix} \right) \; ,
\]
where $\mathcal{A}, \mathcal{B} , \mathcal{C} , \mathcal{D} \in \gen{End}(\mathbb{V}_{\mathcal{Q}})$. In terms of these operators,  
the transfer matrix (\ref{tmat}) reads
\[
\label{tmatAD}
T(\lambda) = \sinh{(\bar{h} - \lambda - \gamma)} \mathcal{A}(\lambda) + \sinh{(\bar{h} + \lambda + \gamma)} \mathcal{D}(\lambda) \; .
\]

\paragraph{Highest-weight property.} The highest-weight representation theory of the $\alg{sl}(2)$ Lie algebra plays a prominent role 
for the system we are considering. For instance, the vector
\[
\label{zero}
\ket{0} \coloneqq \left( \begin{matrix} 1 \cr 0 \end{matrix} \right)^{\otimes L}
\]
is a $\alg{sl}(2)$ highest-weight vector in $\mathbb{V}_{\mathcal{Q}}$. Its dual is simply
given by $\bra{0} \coloneqq (1 \quad 0)^{\otimes L}$.  The left and right action of the operators (\ref{ABCD})
on these vectors can be straightforwardly computed due to the structure of the $\mathcal{U}_q [\widehat{\alg{sl}}(2)]$ 
invariant $\mathcal{R}$-matrix entering in the definition (\ref{ABCD}). In this way we have the following expressions:
\begin{align}
\label{action}
\mathcal{A}(\lambda) \ket{0} =& \Lambda_{\mathcal{A}} (\lambda) \ket{0} & \tilde{\mathcal{D}}(\lambda) \ket{0} & = \Lambda_{\tilde{\mathcal{D}}} (\lambda) \ket{0} \cr
\bra{0} \mathcal{A}(\lambda) =& \Lambda_{\mathcal{A}} (\lambda) \bra{0} & \bra{0} \tilde{\mathcal{D}}(\lambda) & = \Lambda_{\tilde{\mathcal{D}}} (\lambda) \bra{0} \cr
\mathcal{C}(\lambda) \ket{0} =& 0 &  \bra{0} \mathcal{B}(\lambda) & = 0 \; , 
\end{align}
where $\tilde{\mathcal{D}}(\lambda) \coloneqq \mathcal{D}(\lambda) - \frac{\sinh{(\gamma)}}{\sinh{(2 \lambda + \gamma)}} \mathcal{A}(\lambda)$ 
has been defined for latter convenience. It is also useful to define the functions $a(\lambda) \coloneqq \sinh{(\lambda + \gamma)}$, 
$b(\lambda) \coloneqq \sinh{(\lambda)}$ and $c(\lambda) \coloneqq \sinh{(\gamma)}$, in such a way that the coefficients $\Lambda_{\mathcal{A}}$ and
$\Lambda_{\tilde{\mathcal{D}}}$  explicitly read
\begin{eqnarray}
\label{lambda}
\Lambda_{\mathcal{A}} (\lambda) & \coloneqq & b(h + \lambda) \prod_{j=1}^{L} a(\lambda - \mu_j) a(\lambda + \mu_j)  \nonumber \\
\Lambda_{\tilde{\mathcal{D}}} (\lambda) &\coloneqq & - \frac{b(2 \lambda)}{a(2 \lambda)} a(\lambda - h) \prod_{j=1}^{L} b(\lambda - \mu_j) b(\lambda + \mu_j)   \; .
\end{eqnarray}

\paragraph{Bethe vectors.} One of the achievements of the QISM is the algebraic construction of eigenvectors of commuting
transfer matrices. This is also the case for spin chains with open boundary conditions as shown in \cite{Sklyanin_1988}.
In particular, the eigenvectors associated with the spin chain hamiltonian (\ref{ham}) can be built with the help of the operators
defined in (\ref{ABCD}).
Those vectors $\ket{\gen{\Psi}_n} \in \gen{span}(\mathbb{V}_{\mathcal{Q}})$ will be refereed to as \textit{off-shell Bethe vectors} and
they are defined as
\[
\label{BV}
\ket{\gen{\Psi}_n} \coloneqq \mathop{\overrightarrow\prod}\limits_{1 \leq j \leq n} \mathcal{B}(\lambda_j^{\mathcal{B}}) \ket{0} \; .
\]
Strictly speaking, the vector $\ket{\gen{\Psi}_n}$ is only an eigenvector of the transfer matrix (\ref{tmatAD}) for particular
choices of the set of variables $\{ \lambda_j^{\mathcal{B}} \}$. Dual eigenvectors can also be built in a similar way and we shall
refer to them as \textit{Dual Bethe vectors}. They are defined as
 \[
\label{dBV}
\bra{\gen{\Psi}_n} \coloneqq \bra{0} \mathop{\overleftarrow\prod}\limits_{1 \leq j \leq n} \mathcal{C}(\lambda_j^{\mathcal{C}}) \; .
\]
Here we shall assume that the set of variables $\{ \lambda_j^{\mathcal{C}} \}$ is independent of the set $\{ \lambda_j^{\mathcal{B}} \}$ in order
to keep our results as general as possible.

\paragraph{Scalar product.} The quantity we are interested in the present work is the scalar product of Bethe vectors
$\mathcal{S}_n \colon \mathbb{C}^{2n} \to \mathbb{C}$ defined as,
\<
\label{scp}
\mathcal{S}_n (\lambda_1^{\mathcal{C}}, \dots , \lambda_n^{\mathcal{C}} | \lambda_1^{\mathcal{B}}, \dots , \lambda_n^{\mathcal{B}} ) &\coloneqq& \braket{\gen{\Psi}_n}{\gen{\Psi}_n} \nonumber \\
&=& \bra{0} \mathop{\overleftarrow\prod}\limits_{1 \leq i \leq n} \mathcal{C}(\lambda_i^{\mathcal{C}}) \mathop{\overrightarrow\prod}\limits_{1 \leq j \leq n} \mathcal{B}(\lambda_j^{\mathcal{B}})  \ket{0} \; .
\>
The quantity $\mathcal{S}_n$ is usually refereed to as off-shell scalar product as the variables $\lambda_j^{\mathcal{B}, \mathcal{C}}$
are arbitrary. The on-shell case corresponds to the situation where the variables $\lambda_j^{\mathcal{B}}$, and/or $\lambda_j^{\mathcal{C}}$,
are constrained by Bethe ansatz equations. Here we are interested in the off-shell case, but for completeness
we shall also present the on-shell constraint. It reads
\<
\label{BA}
&&\theta(\lambda_j^{\mathcal{B}}, h ) \theta(\lambda_j^{\mathcal{B}}, -\bar{h} ) \prod_{k=1}^L \frac{\sinh{(\lambda_j^{\mathcal{B}} - \mu_k + \gamma)}}{\sinh{(\lambda_j^{\mathcal{B}} - \mu_k)}} \frac{\sinh{(\lambda_j^{\mathcal{B}} + \mu_k + \gamma)}}{\sinh{(\lambda_j^{\mathcal{B}} + \mu_k)}} \nonumber \\
&& \qquad \qquad \qquad \quad =  \prod_{\stackrel{l=1}{l \neq j}}^n \frac{\sinh{(\lambda_j^{\mathcal{B}} - \lambda_l^{\mathcal{B}} + \gamma)}}{\sinh{(\lambda_j^{\mathcal{B}} - \lambda_l^{\mathcal{B}} - \gamma)}} \frac{\sinh{(\lambda_j^{\mathcal{B}} + \lambda_l^{\mathcal{B}} + \gamma)}}{\sinh{(\lambda_j^{\mathcal{B}} + \lambda_l^{\mathcal{B}} - \gamma)}} \; ,
\>
where $\theta (\lambda, \omega) \coloneqq \frac{\sinh{(\lambda + \omega)}}{\sinh{(\lambda - \omega + \gamma)}}$. 
The solutions of (\ref{BA}) are usually refereed to as \textit{Bethe roots}.
Also, dual Bethe vectors (\ref{dBV}) will only be eigenvectors of (\ref{tmatAD}) when the variables $\lambda_j^{\mathcal{C}}$ are 
Bethe roots. The scalar product $\mathcal{S}_n$ is a key ingredient for the computation of correlation functions \cite{Wang_2002, Kitanine_2007, Kitanine_2008}
and in what follows we shall investigate this quantity along the lines of \cite{Galleas_SCP}.

\section{Reflection algebra and functional equations}
\label{sec:FUN}

The operators $\mathcal{A}$, $\mathcal{B}$, $\mathcal{C}$ and $\mathcal{D}$ defined
in (\ref{ABCD}) are subjected to the reflection algebra relations \cite{Sklyanin_1988}.
See also \cite{Galleas_Lamers_2014} for the conventions we are using here. This property is
a direct consequence of the Yang-Baxter equation \cite{Baxter_book}, in addition to the 
reflection equation (\ref{Req}). In the present paper we aim to use the reflection 
algebra relations as a source of functional equations characterizing the scalar product 
(\ref{scp}). For that it is convenient to introduce the following definitions.

\begin{mydef}[Higher-degree monodromy set]
Let $\mathcal{M}(\lambda) \coloneqq \{ \mathcal{A}, \mathcal{B} , \mathcal{C} , \mathcal{D} \}(\lambda)$ for $\lambda \in \mathbb{C}$
be a set whose elements are the operators defined in (\ref{ABCD}). Then define the monodromy set of degree $n$ as $\mathcal{W}_n \coloneqq \mathcal{M}(\lambda_1) \otimes \mathcal{M}(\lambda_2) \otimes \dots \mathcal{M}(\lambda_n)$
and $\widetilde{\mathcal{W}}_n \coloneqq \mathbb{C}[\lambda_1^{\pm 1} , \lambda_2^{\pm 1} , \dots , \lambda_n^{\pm 1}] \otimes \mathcal{W}_n$.
\end{mydef}  

\begin{mydef}[Higher-degree reflection relation]
Let $\mathfrak{SK}_2 \subset \widetilde{\mathcal{W}}_2$ be the set of quadratic relations originated from the reflection algebra as described in
\cite{Galleas_Lamers_2014}. We then define the \textit{reflection relations of degree $n$} as the elements of $\mathfrak{SK}_n \subset \widetilde{\mathcal{W}}_n$
defined recursively through the relation
\[
\mathfrak{SK}_n \simeq \frac{\mathfrak{SK}_{n-1} \otimes \mathcal{M}(\lambda_n)}{\mathfrak{SK}_2} \; .
\]
In other words, the relations in $\mathfrak{SK}_n$ consist of the expressions obtained from $\mathcal{W}_n$ $(n>2)$ after the repeated use
of the relations in $\mathfrak{SK}_2$ \footnote{The author thanks J. Lamers for useful discussions on this point.}.
\end{mydef}

The set $\mathfrak{SK}_2$ is formed by the fundamental reflection algebra relations and they amount to 
sixteen relations in the case considered here. We shall only need a few relations in $\mathfrak{SK}_2$
for our purposes. For instance, we have $\mathcal{B}(\lambda_1) \mathcal{B}(\lambda_2) = \mathcal{B}(\lambda_2) \mathcal{B}(\lambda_1)$ and 
$\mathcal{C}(\lambda_1) \mathcal{C}(\lambda_2) = \mathcal{C}(\lambda_2) \mathcal{C}(\lambda_1)$ among them. These two relations
motivates the use of the following simplified notation.

\begin{mydef}
Let $Z \coloneqq \{ \lambda_i \in \mathbb{C} \; | \; 1 \leq i \leq n \}$ be a set of cardinality $n$. 
We then define the symbols $[ Z ]_{\mathcal{B}}$ and $[ Z ]_{\mathcal{C}}$ as
\[
\label{box}
[ Z ]_{\mathcal{B}} \coloneqq \mathop{\overrightarrow\prod}\limits_{1 \leq i \leq n} \mathcal{B}(\lambda_i) \quad \mbox{and} \quad [ Z ]_{\mathcal{C}} \coloneqq \mathop{\overleftarrow\prod}\limits_{1 \leq i \leq n} \mathcal{C}(\lambda_i) \; .
\]
We can safely use the notation (\ref{box}) since the operators $\mathcal{B}$'s commute for different values of their spectral parameters.
The same argument also applies for the operators $\mathcal{C}$'s. 
\end{mydef}

As far as the evaluation of the scalar product $\mathcal{S}_n$ is concerned, we shall make use of the following relations
in $\mathfrak{SK}_{n+1}$:
\<
\label{AB}
\mathcal{A}(\lambda_0) [\gen{Y}^{1,n}]_{\mathcal{B}} &=& \prod_{\lambda \in \gen{Y}^{1,n}} \frac{a(\lambda - \lambda_0)}{b(\lambda - \lambda_0)} \frac{b(\lambda + \lambda_0)}{a(\lambda + \lambda_0)} [\gen{Y}^{1,n}]_{\mathcal{B}} \; \mathcal{A}(\lambda_0) \nonumber \\
&-& \sum_{\lambda \in \gen{Y}^{1,n}} [\gen{Y}_{\lambda}^{0,n}]_{\mathcal{B}} \left\{ \frac{c(\lambda - \lambda_0)}{b(\lambda - \lambda_0)} \frac{b(2 \lambda)}{a(2 \lambda)} \prod_{\tilde{\lambda} \in \gen{Y}_{\lambda}^{1,n}} \frac{a(\tilde{\lambda} - \lambda)}{b(\tilde{\lambda} - \lambda)} \frac{b(\tilde{\lambda} + \lambda)}{a(\tilde{\lambda} + \lambda)} \mathcal{A}(\lambda) \right. \nonumber \\
&& \qquad \qquad \qquad \quad \left. + \; \frac{c(\lambda + \lambda_0)}{a(\lambda + \lambda_0)} \prod_{\tilde{\lambda} \in \gen{Y}_{\lambda}^{1,n}} \frac{a(\lambda - \tilde{\lambda})}{b(\lambda - \tilde{\lambda})} \frac{a(\lambda + \tilde{\lambda} + \gamma)}{b(\lambda + \tilde{\lambda} + \gamma)} \tilde{\mathcal{D}}(\lambda) \right\} \nonumber \\
\>
\<
\label{CA}
[\gen{X}^{1,n}]_{\mathcal{C}} \; \mathcal{A}(\lambda_0) &=& \prod_{\lambda \in \gen{X}^{1,n}} \frac{a(\lambda - \lambda_0)}{b(\lambda - \lambda_0)} \frac{b(\lambda + \lambda_0)}{a(\lambda + \lambda_0)} \mathcal{A}(\lambda_0) [\gen{X}^{1,n}]_{\mathcal{C}}  \nonumber \\
&-& \sum_{\lambda \in \gen{X}^{1,n}} \left\{ \frac{c(\lambda - \lambda_0)}{b(\lambda - \lambda_0)} \frac{b(2 \lambda)}{a(2 \lambda)} \prod_{\tilde{\lambda} \in \gen{X}_{\lambda}^{1,n}} \frac{a(\tilde{\lambda} - \lambda)}{b(\tilde{\lambda} - \lambda)} \frac{b(\tilde{\lambda} + \lambda)}{a(\tilde{\lambda} + \lambda)} \mathcal{A}(\lambda) \right. \nonumber \\
&& \qquad \qquad \left. + \; \frac{c(\lambda + \lambda_0)}{a(\lambda + \lambda_0)} \prod_{\tilde{\lambda} \in \gen{X}_{\lambda}^{1,n}} \frac{a(\lambda - \tilde{\lambda})}{b(\lambda - \tilde{\lambda})} \frac{a(\lambda + \tilde{\lambda} + \gamma)}{b(\lambda + \tilde{\lambda} + \gamma)} \tilde{\mathcal{D}}(\lambda) \right\} [\gen{X}_{\lambda}^{0,n}]_{\mathcal{C}} \; . \nonumber \\
\>
In (\ref{AB}) and (\ref{CA}) we have considered the sets $\gen{X}^{i,j}$ and $\gen{Y}^{i,j}$ respectively defined as
$\gen{X}^{i,j} \coloneqq \{ \lambda_k^C \; | \; i \leq k \leq j \}$ and $\gen{Y}^{i,j} \coloneqq \{ \lambda_k^B \; | \; i \leq k \leq j \}$.
In addition to that we have also employed the sets $\gen{X}_{\lambda}^{i,j} \coloneqq \gen{X}^{i,j} \backslash \{ \lambda \}$ and $\gen{Y}_{\lambda}^{i,j} \coloneqq \gen{Y}^{i,j} \backslash \{ \lambda \}$.

The relations (\ref{AB}) and (\ref{CA}) do not exhaust all possible relations in $\mathfrak{SK}_{n+1}$ which can be 
used to determine the scalar product $\mathcal{S}_n$. For instance, here we shall also make use of the following
ones:
\<
\label{DB}
\mathcal{D}(\lambda_0) [\gen{Y}^{1,n}]_{\mathcal{B}} &=& \prod_{\lambda \in \gen{Y}^{1,n}} \frac{a(\lambda_0 - \lambda)}{b(\lambda_0 - \lambda)} 
\frac{a(\lambda_0 + \lambda + \gamma)}{b(\lambda_0 + \lambda + \gamma)} [\gen{Y}^{1,n}]_{\mathcal{B}} \; \mathcal{D}(\lambda_0) \nonumber \\
&+& \sum_{\lambda \in \gen{Y}^{1,n}} [\gen{Y}_{\lambda}^{0,n}]_{\mathcal{B}} \left\{ \frac{c(2 \lambda_0)}{a(2\lambda_0)} \frac{b(2 \lambda)}{a(2 \lambda)} \frac{a(2 \lambda_0 + \gamma)}{a( \lambda_0 + \lambda)} \prod_{\tilde{\lambda} \in \gen{Y}_{\lambda}^{1,n}} \frac{a(\tilde{\lambda} - \lambda)}{b(\tilde{\lambda} - \lambda)} \frac{b(\tilde{\lambda} + \lambda)}{a(\tilde{\lambda} + \lambda)} \mathcal{A}(\lambda) \right. \nonumber \\
&& \qquad \qquad   \left. - \; \frac{a(2 \lambda_0 + \gamma)}{b(2 \lambda_0 + \gamma)} \frac{c(\lambda_0 - \lambda)}{b(\lambda_0 - \lambda)}
\prod_{\tilde{\lambda} \in \gen{Y}_{\lambda}^{1,n}} \frac{a(\lambda - \tilde{\lambda})}{b(\lambda - \tilde{\lambda})} \frac{a(\lambda + \tilde{\lambda} + \gamma)}{b(\lambda + \tilde{\lambda} + \gamma)} \tilde{\mathcal{D}}(\lambda) \right\} \nonumber \\
\>

\<
\label{CD}
[\gen{X}^{1,n}]_{\mathcal{C}} \mathcal{D}(\lambda_0)  &=& \prod_{\lambda \in \gen{X}^{1,n}} \frac{a(\lambda_0 - \lambda)}{b(\lambda_0 - \lambda)} 
\frac{a(\lambda_0 + \lambda + \gamma)}{b(\lambda_0 + \lambda + \gamma)}  \; \mathcal{D}(\lambda_0) [\gen{X}^{1,n}]_{\mathcal{C}} \nonumber \\
&+& \sum_{\lambda \in \gen{X}^{1,n}}  \left\{ \frac{c(2 \lambda_0)}{a(2\lambda_0)} \frac{b(2 \lambda)}{a(2 \lambda)} \frac{a(2 \lambda_0 + \gamma)}{a( \lambda_0 + \lambda)} \prod_{\tilde{\lambda} \in \gen{X}_{\lambda}^{1,n}} \frac{a(\tilde{\lambda} - \lambda)}{b(\tilde{\lambda} - \lambda)} \frac{b(\tilde{\lambda} + \lambda)}{a(\tilde{\lambda} + \lambda)} \mathcal{A}(\lambda) \right. \nonumber \\
&&   \left. - \; \frac{a(2 \lambda_0 + \gamma)}{b(2 \lambda_0 + \gamma)} \frac{c(\lambda_0 - \lambda)}{b(\lambda_0 - \lambda)}
\prod_{\tilde{\lambda} \in \gen{X}_{\lambda}^{1,n}} \frac{a(\lambda - \tilde{\lambda})}{b(\lambda - \tilde{\lambda})} \frac{a(\lambda + \tilde{\lambda} + \gamma)}{b(\lambda + \tilde{\lambda} + \gamma)} \tilde{\mathcal{D}}(\lambda) \right\} [\gen{X}_{\lambda}^{0,n}]_{\mathcal{C}} \; . \nonumber \\
\>

\subsection{Algebraic-functional approach}
\label{sec:FUN}

In this section we aim to derive functional relations determining the scalar product (\ref{scp}). 
The method we shall employ consists of an extension of the work \cite{Galleas_SCP} for the case of
open boundary conditions. The key idea is to use the reflection algebra as a source of functional relations
as discussed in \cite{Galleas_Lamers_2014}. This method was first proposed in \cite{Galleas_2008} for spectral 
problems and extended for vertex models partition functions in the works \cite{Galleas_2010, Galleas_2011, Galleas_2012, Galleas_2013, Galleas_proc}.
We refer to this approach as \textit{Algebraic-Functional Method} and an important step within it is to find
a suitable linear map $\gen{\pi}_n \colon \mathfrak{SK}_n \to \mathbb{C}[\lambda_1^{\pm 1} , \lambda_2^{\pm 1} , \dots , \lambda_n^{\pm 1}]$
able to convert a higher-degree reflection algebra relation into a complex multivariate function. 
Moreover, we would like to build a map $\gen{\pi}_n$ yielding the simplest possible functional equation for the scalar product (\ref{scp}).
In fact, the structure of the resulting functional equation will depend on two ingredients: the higher-degree relation
in $\mathfrak{SK}_n$ we single out and the particular realization of $\gen{\pi}_n$ we are considering.     

\begin{proposition}[The map $\gen{\pi}_n$]
The following scalar products are realizations of $\gen{\pi}_n$,
\<
\label{piBC}
\gen{\pi}_n^{\mathcal{B}} ( \mathfrak{h} ) &\coloneqq&  \bra{0} \mathfrak{h} \; [ \gen{Y}^{1,n} ]_{\mathcal{B}} \ket{0} \qquad \forall \; \mathfrak{h} \in \mathfrak{SK}_n \; , \nonumber \\
\gen{\pi}_n^{\mathcal{C}} ( \mathfrak{h} ) &\coloneqq&  \bra{0} [ \gen{X}^{1,n} ]_{\mathcal{C}} \; \mathfrak{h} \ket{0} \qquad \forall \; \mathfrak{h} \in \mathfrak{SK}_n \; .
\>
\end{proposition}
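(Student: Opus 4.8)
The plan is to confirm that each of the two scalar products satisfies the two defining features required of $\gen{\pi}_n$: it must be a $\mathbb{C}$-linear map out of $\mathfrak{SK}_n$ whose image is contained in the Laurent-polynomial ring $\mathbb{C}[\lambda_1^{\pm 1}, \ldots, \lambda_n^{\pm 1}]$. Linearity is immediate. Both $\gen{\pi}_n^{\mathcal{B}}$ and $\gen{\pi}_n^{\mathcal{C}}$ factor as a fixed right (respectively left) multiplication by the operator string $[\gen{Y}^{1,n}]_{\mathcal{B}}$ (respectively $[\gen{X}^{1,n}]_{\mathcal{C}}$) followed by the pairing $\mathcal{O} \mapsto \bra{0} \mathcal{O} \ket{0}$; both steps are linear in the operator argument, and since $\mathfrak{SK}_n \subset \widetilde{\mathcal{W}}_n$ is a module over $\mathbb{C}[\lambda_1^{\pm 1}, \ldots, \lambda_n^{\pm 1}]$ the composite respects that module structure, with the auxiliary variables $\lambda_k^{\mathcal{B}}$ (respectively $\lambda_k^{\mathcal{C}}$) entering only through the coefficient ring.

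The substance of the claim is the codomain, and here I would reduce to a single monomial. A generic element of $\mathfrak{SK}_n$ is a $\mathbb{C}[\lambda^{\pm 1}]$-combination of products $\mathcal{O}_1(\lambda_1) \cdots \mathcal{O}_n(\lambda_n)$ with each $\mathcal{O}_i \in \{\mathcal{A}, \mathcal{B}, \mathcal{C}, \mathcal{D}\}$, so by linearity it suffices to treat one such product. Every entry of each generator in the standard basis of $\mathbb{V}_{\mathcal{Q}}$ is a trigonometric polynomial in its spectral parameter: this is read off directly from the defining product \eqref{ABCD} together with the $\sinh$ entries of $\mathcal{K}$, $\bar{\mathcal{K}}$ supplied by \eqref{kmat}, \eqref{bkmat} and the analogous entries of the $\mathcal{U}_q[\widehat{\alg{sl}}(2)]$ $\mathcal{R}$-matrix, namely the functions $a$, $b$, $c$. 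The scalar $\bra{0} \mathcal{O}_1(\lambda_1) \cdots \mathcal{O}_n(\lambda_n) [\gen{Y}^{1,n}]_{\mathcal{B}} \ket{0}$ is then a finite sum of products of such entries with no division performed, hence a Laurent polynomial in the multiplicative variables $e^{\lambda_1}, \ldots, e^{\lambda_n}$, which is exactly an element of $\mathbb{C}[\lambda_1^{\pm 1}, \ldots, \lambda_n^{\pm 1}]$. The argument for $\gen{\pi}_n^{\mathcal{C}}$ is word-for-word the same with the left-acting string $[\gen{X}^{1,n}]_{\mathcal{C}}$, so both maps land in the required function space and are therefore realizations of $\gen{\pi}_n$.

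I expect the only genuine subtlety to be peripheral to the statement itself. The elements of $\mathfrak{SK}_n$ are operator identities: after reducing a product by the quadratic relations in $\mathfrak{SK}_2$ — the relations \eqref{AB}, \eqref{CA}, \eqref{DB}, \eqref{CD} — one side carries rational coefficients built from ratios of $a$, $b$, $c$, so a priori $\gen{\pi}_n$ applied to that side is only a rational function with apparent poles at coincidences of the spectral parameters. These poles must cancel, because the value of $\gen{\pi}_n$ is computed equally from the un-reduced side, which is a pure monomial and so manifestly polynomial by the previous paragraph. This consistency is what guarantees that $\gen{\pi}_n$ is well defined on the quotient-like set $\mathfrak{SK}_n$; the quantitative use of these cancellations, which turns an identity in $\mathfrak{SK}_n$ into the functional equation for the scalar product \eqref{scp} via the highest-weight relations \eqref{action} and the eigenvalues \eqref{lambda}, belongs to the derivation carried out in what follows rather than to the structural claim proved here.
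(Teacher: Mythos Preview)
Your argument is correct and follows the same route as the paper: both establish that $\gen{\pi}_n^{\mathcal{B},\mathcal{C}}$ are well-defined maps $\widetilde{\mathcal{W}}_n \to \mathbb{C}[\lambda_1^{\pm 1},\dots,\lambda_n^{\pm 1}]$ and then invoke the inclusion $\mathfrak{SK}_n \subset \widetilde{\mathcal{W}}_n$. The paper's proof is a two-line assertion of exactly this, whereas you unpack the ``readily see'' step by tracing the trigonometric-polynomial nature of the matrix entries; your closing remark on the cancellation of apparent poles in the rational coefficients of the reduced relations is a correct and pertinent observation that the paper leaves implicit.
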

\begin{proof}
From (\ref{piBC}) we can readily see that $\gen{\pi}_n^{\mathcal{B},\mathcal{C}} \colon \widetilde{\mathcal{W}}_n \to \mathbb{C}[\lambda_1^{\pm 1} , \dots , \lambda_n^{\pm 1}]$.
Therefore, as $\mathfrak{SK}_n \subset \widetilde{\mathcal{W}}_n$, our claim is automatically fulfilled.
\end{proof}

In what follows we shall demonstrate how one can use (\ref{piBC}) to convert the relations (\ref{AB})-(\ref{CD})
into two functional equations characterizing the scalar product $\mathcal{S}_n$.

\subsubsection{Equation type A}
\label{sec:TYPEA}

The functional equation obtained from the reflection algebra relations (\ref{AB}) and (\ref{CA}) 
shall be refereed to as equation type A. This equation is a consequence of the trivial identity
\[
\label{ABCA}
\gen{\pi}_{n+1}^{\mathcal{C}} \left( \mathcal{A}(\lambda_0) [\gen{Y}^{1,n}]_{\mathcal{B}} \right) =  \gen{\pi}_{n+1}^{\mathcal{B}} \left( [\gen{X}^{1,n}]_{\mathcal{C}} \; \mathcal{A}(\lambda_0) \right) \; ,
\]
which is the starting point to prove Theorem \ref{funA}. For convenience we also introduce the following notation.

\begin{mydef} \label{vec}
Let $i,j \in \mathbb{Z}$ such that $i < j$. Then define $\vec{X}^{i,j}$ and $\vec{Y}^{i,j}$ as the following $(j-i+1)$-dimensional vectors,
\<
\label{vecXY}
\vec{X}^{i,j} &\coloneqq& \left( \lambda_i^{\mathcal{C}}, \lambda_{i+1}^{\mathcal{C}}, \dots , \lambda_{j}^{\mathcal{C}}  \right) \nonumber \\
\vec{Y}^{i,j} &\coloneqq& \left( \lambda_i^{\mathcal{B}}, \lambda_{i+1}^{\mathcal{B}}, \dots , \lambda_{j}^{\mathcal{B}}  \right) \; .
\>
Also, consider $\lambda \in \gen{X}^{i,j}$ and $\bar{\lambda} \in \gen{Y}^{i,j}$, and additionally define the $(j-i)$-dimensional vectors
\<
\label{vecXYom}
\vec{X}_{\lambda}^{i,j} &\coloneqq& \left( \lambda_i^{\mathcal{C}}, \lambda_{i+1}^{\mathcal{C}}, \dots , \underline{\lambda} , \dots , \lambda_{j}^{\mathcal{C}}  \right) \nonumber \\
\vec{Y}_{\bar{\lambda}}^{i,j} &\coloneqq& \left( \lambda_i^{\mathcal{B}}, \lambda_{i+1}^{\mathcal{B}}, \dots , \underline{\bar{\lambda}}, \dots , \lambda_{j}^{\mathcal{B}}  \right) \; .
\>
In (\ref{vecXYom}) the underline stands for omission.
\end{mydef}

\begin{theorem} \label{funA}
The scalar product $\mathcal{S}_n$, as defined in (\ref{scp}), satisfies the equation
\[
\label{typeA}
M_0 \; \mathcal{S}_n (\vec{X}^{1,n} | \vec{Y}^{1,n}) + \sum_{\lambda \in \gen{X}^{1,n}} N_{\lambda}^{(\mathcal{C})} \; \mathcal{S}_n (\vec{X}_{\lambda}^{0,n} | \vec{Y}^{1,n}) 
+ \sum_{\lambda \in \gen{Y}^{1,n}} N_{\lambda}^{(\mathcal{B})} \; \mathcal{S}_n (\vec{X}^{1,n} | \vec{Y}_{\lambda}^{0,n}) = 0 \; ,
\]
with coefficients
\<
\label{coeffA}
M_0 &\coloneqq& \Lambda_{\mathcal{A}} (\lambda_0) \left\{ \prod_{\lambda \in \gen{Y}^{1,n}} \frac{a(\lambda - \lambda_0)}{b(\lambda - \lambda_0)} \frac{b(\lambda + \lambda_0)}{a(\lambda + \lambda_0)} - \prod_{\lambda \in \gen{X}^{1,n}} \frac{a(\lambda - \lambda_0)}{b(\lambda - \lambda_0)} \frac{b(\lambda + \lambda_0)}{a(\lambda + \lambda_0)} \right\} \nonumber \\
N_{\lambda}^{(\mathcal{C})} &\coloneqq& \Lambda_{\mathcal{A}} (\lambda) \frac{c(\lambda - \lambda_0)}{b(\lambda - \lambda_0)} \frac{b(2 \lambda)}{a(2 \lambda)} \prod_{\tilde{\lambda} \in \gen{X}_{\lambda}^{1,n}} \frac{a(\tilde{\lambda} - \lambda)}{b(\tilde{\lambda} - \lambda)} \frac{b(\tilde{\lambda} + \lambda)}{a(\tilde{\lambda} + \lambda)} \nonumber \\
&& \qquad + \; \Lambda_{\tilde{\mathcal{D}}}  (\lambda)  \frac{c(\lambda + \lambda_0)}{a(\lambda + \lambda_0)} \prod_{\tilde{\lambda} \in \gen{X}_{\lambda}^{1,n}} \frac{a(\lambda - \tilde{\lambda})}{b(\lambda - \tilde{\lambda})} \frac{a(\lambda + \tilde{\lambda} + \gamma)}{b(\lambda + \tilde{\lambda} + \gamma)} \nonumber \\
N_{\lambda}^{(\mathcal{B})} &\coloneqq& - \Lambda_{\mathcal{A}} (\lambda) \frac{c(\lambda - \lambda_0)}{b(\lambda - \lambda_0)} \frac{b(2 \lambda)}{a(2 \lambda)} \prod_{\tilde{\lambda} \in \gen{Y}_{\lambda}^{1,n}} \frac{a(\tilde{\lambda} - \lambda)}{b(\tilde{\lambda} - \lambda)} \frac{b(\tilde{\lambda} + \lambda)}{a(\tilde{\lambda} + \lambda)} \nonumber \\
&& \qquad - \; \Lambda_{\tilde{\mathcal{D}}} (\lambda)  \frac{c(\lambda + \lambda_0)}{a(\lambda + \lambda_0)} \prod_{\tilde{\lambda} \in \gen{Y}_{\lambda}^{1,n}} \frac{a(\lambda - \tilde{\lambda})}{b(\lambda - \tilde{\lambda})} \frac{a(\lambda + \tilde{\lambda} + \gamma)}{b(\lambda + \tilde{\lambda} + \gamma)} \; .
\>
\end{theorem}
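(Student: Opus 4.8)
The plan is to read the ``trivial identity'' \eqref{ABCA} as two evaluations of one and the same matrix element, namely
\[
\Phi(\lambda_0) \coloneqq \bra{0} [\gen{X}^{1,n}]_{\mathcal{C}} \, \mathcal{A}(\lambda_0) \, [\gen{Y}^{1,n}]_{\mathcal{B}} \ket{0} \; ,
\]
and to extract \eqref{typeA} by computing $\Phi(\lambda_0)$ in two different ways. On the left-hand side of \eqref{ABCA} the operator $\mathcal{A}(\lambda_0)$ sits immediately to the left of the string $[\gen{Y}^{1,n}]_{\mathcal{B}}$, so I would substitute the commutation relation \eqref{AB} to push $\mathcal{A}(\lambda_0)$ to the right through all the $\mathcal{B}$'s until it reaches $\ket{0}$. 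On the right-hand side $\mathcal{A}(\lambda_0)$ sits to the right of $[\gen{X}^{1,n}]_{\mathcal{C}}$, so I would instead substitute \eqref{CA} to push it to the left through all the $\mathcal{C}$'s until it reaches $\bra{0}$. Since both manipulations rewrite the same $\Phi(\lambda_0)$, equating the two results yields a linear relation among scalar products.

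The mechanism that turns these operator identities into a scalar equation is the highest-weight property. After applying \eqref{AB}, every resulting term terminates in $\mathcal{A}(\lambda_0)\ket{0}$, $\mathcal{A}(\lambda)\ket{0}$ or $\tilde{\mathcal{D}}(\lambda)\ket{0}$; by the eigenvalue actions \eqref{action} together with the explicit coefficients \eqref{lambda} these collapse to the scalars $\Lambda_{\mathcal{A}}(\lambda_0)$, $\Lambda_{\mathcal{A}}(\lambda)$ and $\Lambda_{\tilde{\mathcal{D}}}(\lambda)$, leaving behind bare strings $[\gen{X}^{1,n}]_{\mathcal{C}}$ and $[\gen{Y}^{(\cdot)}]_{\mathcal{B}}$ sandwiched between $\bra{0}$ and $\ket{0}$, i.e.\ genuine scalar products of the form \eqref{scp}. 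The crucial bookkeeping point is that when $\mathcal{A}(\lambda_0)$ is absorbed into the $\mathcal{B}$-string by the off-diagonal part of \eqref{AB} it reappears as a factor $\mathcal{B}(\lambda_0)$, so the surviving product is indexed by $\gen{Y}_{\lambda}^{0,n}$: the rapidity $\lambda_0$ enters the scalar product in the slot vacated by the summation variable $\lambda$. This is precisely why the diagonal term keeps the full set $\vec{Y}^{1,n}$ while the off-diagonal terms carry $\vec{Y}_{\lambda}^{0,n}$, and symmetrically $\vec{X}_{\lambda}^{0,n}$ on the other side; in each case the numbers of $\mathcal{B}$'s and $\mathcal{C}$'s remain equal, so that every object is again an $\mathcal{S}_n$.

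Carrying this out, the left evaluation yields $\Lambda_{\mathcal{A}}(\lambda_0)\prod_{\lambda\in\gen{Y}^{1,n}}\tfrac{a(\lambda-\lambda_0)}{b(\lambda-\lambda_0)}\tfrac{b(\lambda+\lambda_0)}{a(\lambda+\lambda_0)}\,\mathcal{S}_n(\vec{X}^{1,n}|\vec{Y}^{1,n})$ from the diagonal piece of \eqref{AB}, plus a sum over $\lambda\in\gen{Y}^{1,n}$ whose $\Lambda_{\mathcal{A}}$- and $\Lambda_{\tilde{\mathcal{D}}}$-weighted products multiply $\mathcal{S}_n(\vec{X}^{1,n}|\vec{Y}_{\lambda}^{0,n})$; the right evaluation has the same diagonal structure but with the product running over $\gen{X}^{1,n}$, plus a sum over $\lambda\in\gen{X}^{1,n}$ multiplying $\mathcal{S}_n(\vec{X}_{\lambda}^{0,n}|\vec{Y}^{1,n})$. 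Subtracting one evaluation from the other, the two diagonal contributions combine into $M_0$ as the difference $\prod_{\gen{Y}^{1,n}}-\prod_{\gen{X}^{1,n}}$ recorded in \eqref{coeffA}; the $\mathcal{C}$-sum crosses the equality and so inherits an overall $+$ sign, giving $N_{\lambda}^{(\mathcal{C})}$, whereas the $\mathcal{B}$-sum stays in place and retains the explicit minus signs of $N_{\lambda}^{(\mathcal{B})}$. Setting the result to zero is exactly \eqref{typeA}.

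I expect the only real obstacle to be sign and index bookkeeping rather than anything conceptual: one must track the overall minus signs in front of the sums in \eqref{AB} and \eqref{CA}, the direction in which each diagonal product is written (which fixes whether $M_0$ reads $\prod_{\gen{Y}}-\prod_{\gen{X}}$ or the reverse), and the relabelling of $\lambda_0$ as the $0$-th $\mathcal{B}$- or $\mathcal{C}$-rapidity so that $[\gen{Y}_{\lambda}^{0,n}]_{\mathcal{B}}\ket{0}$ and $\bra{0}[\gen{X}_{\lambda}^{0,n}]_{\mathcal{C}}$ are correctly identified with $\mathcal{S}_n(\vec{X}^{1,n}|\vec{Y}_{\lambda}^{0,n})$ and $\mathcal{S}_n(\vec{X}_{\lambda}^{0,n}|\vec{Y}^{1,n})$. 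Once these conventions are fixed consistently, matching to the closed forms in \eqref{coeffA} is routine.
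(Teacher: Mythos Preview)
Your proposal is correct and follows essentially the same approach as the paper: the paper's maps $\gen{\pi}_{n+1}^{\mathcal{B},\mathcal{C}}$ applied to the two sides of the identity \eqref{ABCA} are precisely your two evaluations of the single matrix element $\Phi(\lambda_0)=\bra{0}[\gen{X}^{1,n}]_{\mathcal{C}}\,\mathcal{A}(\lambda_0)\,[\gen{Y}^{1,n}]_{\mathcal{B}}\ket{0}$, and the ``reduction property'' $\gen{\pi}_{n+1}^{\mathcal{B},\mathcal{C}}\to\gen{\pi}_{n}^{\mathcal{B},\mathcal{C}}$ invoked there is exactly your use of the highest-weight actions \eqref{action} to strip off the trailing $\mathcal{A}$ or $\tilde{\mathcal{D}}$. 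Your explicit discussion of how $\lambda_0$ enters the $\mathcal{B}$- or $\mathcal{C}$-string in the off-diagonal terms, and of the sign bookkeeping that produces $M_0$ as a difference of two products, makes transparent what the paper leaves implicit.
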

\begin{proof}
Within the algebraic-functional approach, we firstly consider the action of the map $\gen{\pi}_{n+1}^{\mathcal{C}}$, defined
in (\ref{piBC}), on the relation (\ref{AB}). By doing so we are left with terms of the form: 
$\gen{\pi}_{n+1}^{\mathcal{C}} \left( [\gen{Y}^{1,n}]_{\mathcal{B}} \mathcal{A}(\lambda_0) \right)$, $\gen{\pi}_{n+1}^{\mathcal{C}} \left( [\gen{Y}_{\lambda}^{0,n}]_{\mathcal{B}} \mathcal{A}(\lambda) \right)$
and $\gen{\pi}_{n+1}^{\mathcal{C}} \left( [\gen{Y}_{\lambda}^{0,n}]_{\mathcal{B}} \tilde{\mathcal{D}}(\lambda) \right)$. 
Those terms satisfy the reduction property $\gen{\pi}_{n+1}^{\mathcal{C}} \to \gen{\pi}_{n}^{\mathcal{C}}$ due to the highest-weight property of the 
vector $\ket{0}$ entering in the definition (\ref{piBC}).
Next we consider the action of $\gen{\pi}_{n+1}^{\mathcal{B}}$ on the higher-degree relation (\ref{CA}). This procedure 
yields terms of the form $\gen{\pi}_{n+1}^{\mathcal{B}} \left( \mathcal{A}(\lambda_0) \; [\gen{X}^{1,n}]_{\mathcal{C}} \right)$,   
$\gen{\pi}_{n+1}^{\mathcal{B}} \left( \mathcal{A}(\lambda) \; [\gen{X}_{\lambda}^{0,n}]_{\mathcal{C}} \right)$ and $\gen{\pi}_{n+1}^{\mathcal{B}} \left( \tilde{\mathcal{D}}(\lambda) \; [\gen{X}_{\lambda}^{0,n}]_{\mathcal{C}} \right)$.
Similarly, those terms also enjoy the property $\gen{\pi}_{n+1}^{\mathcal{B}} \to \gen{\pi}_{n}^{\mathcal{B}}$ due to the highest-weight property
of (\ref{zero}). The precise relations associated with $\gen{\pi}_{n+1}^{\mathcal{B}, \mathcal{C}} \to \gen{\pi}_{n}^{\mathcal{B}, \mathcal{C}}$
can be found with the help of (\ref{action}). Lastly, we use the identifications 
\[
\label{piS}
\gen{\pi}_{n}^{\mathcal{B}} \left( [\gen{X}^{1,n}]_{\mathcal{C}} \right) = \gen{\pi}_{n}^{\mathcal{C}} \left( [\gen{Y}^{1,n}]_{\mathcal{B}} \right) = \mathcal{S}_n (\vec{X}^{1,n} | \vec{Y}^{1,n}) \; .
\]
The relations (\ref{AB}), (\ref{CA}), (\ref{piBC}) and (\ref{ABCA}) then imply the functional relation (\ref{typeA}).
\end{proof}

Interestingly, Eq. (\ref{typeA}) exhibits the same structure as the equations derived in \cite{Galleas_SCP}
for the scalar products of Bethe vectors associated with the $XXZ$ spin chain with periodic boundary conditions. 
The only difference is in the particular form of the coefficients $M_0$, $N_{\lambda}^{(\mathcal{B})}$ and $N_{\lambda}^{(\mathcal{C})}$.
In \cite{Galleas_SCP} we have presented two functional equations describing the desired scalar product and the situation is similar
in the present paper. A second functional equation satisfied by the scalar product (\ref{scp}) will be described in the next section.

\subsubsection{Equation type D}
\label{sec:TYPED}

As anticipated in the previous section, there exists a second functional equation satisfied by the scalar 
product $\mathcal{S}_n$. This second equation exhibits the same structure as (\ref{typeA}).
In fact, a large number of functional equations could be derived by exploring different reflection algebra relations and different
realizations of the maps $\gen{\pi}_{n}^{\mathcal{B},\mathcal{C}}$. This feature of the algebraic-functional method has been previously
observed for partition functions of vertex models with domain-wall boundaries as discussed in \cite{Galleas_proc}.
However, we would like to deal with the simplest possible equations and a systematic method for solving equations
with the structure of (\ref{typeA}) has been already developed in \cite{Galleas_2012, Galleas_2013, Galleas_SCP}.

This second equation for the scalar product $\mathcal{S}_n$ will be obtained from the higher-degree relations
(\ref{DB}) and (\ref{CD}). The derivation will also require an equivalent of (\ref{ABCA}) and we shall use 
the following one,
\[
\label{DBCD}
\gen{\pi}_{n+1}^{\mathcal{C}} \left( \tilde{\mathcal{D}}(\lambda_0) [\gen{Y}^{1,n}]_{\mathcal{B}} \right) =  \gen{\pi}_{n+1}^{\mathcal{B}} \left( [\gen{X}^{1,n}]_{\mathcal{C}} \; \tilde{\mathcal{D}}(\lambda_0) \right) \; .
\]

\begin{theorem} \label{funD}
Let $\widetilde{M}_0$, $\widetilde{N}_{\lambda}^{(\mathcal{C})}$ and $\widetilde{N}_{\lambda}^{(\mathcal{B})}$ be defined as 
\<
\label{coeffD}
\widetilde{M}_0 &\coloneqq& \Lambda_{\tilde{\mathcal{D}}} (\lambda_0) \left\{ \prod_{\lambda \in \gen{Y}^{1,n}} \frac{a(\lambda_0 - \lambda)}{b(\lambda_0 - \lambda)} \frac{a(\lambda_0 + \lambda + \gamma)}{b(\lambda_0 + \lambda + \gamma)} - \prod_{\lambda \in \gen{X}^{1,n}} \frac{a(\lambda_0 - \lambda)}{b(\lambda_0 - \lambda)} \frac{a(\lambda_0 + \lambda + \gamma)}{b(\lambda_0 + \lambda + \gamma)} \right\} \nonumber \\
\widetilde{N}_{\lambda}^{(\mathcal{C})} &\coloneqq& - \Lambda_{\mathcal{A}} (\lambda) \frac{c(2 \lambda_0)}{a(2 \lambda_0)} \frac{b(2 \lambda)}{a(2 \lambda)} \frac{a(2 \lambda_0 + \gamma)}{a(\lambda_0 + \lambda)} \prod_{\tilde{\lambda} \in \gen{X}_{\lambda}^{1,n}} \frac{a(\tilde{\lambda} - \lambda)}{b(\tilde{\lambda} - \lambda)} \frac{b(\tilde{\lambda} + \lambda)}{a(\tilde{\lambda} + \lambda)} \nonumber \\
&& \qquad + \; \Lambda_{\tilde{\mathcal{D}}} (\lambda) \frac{a(2 \lambda_0 + \gamma)}{b(2 \lambda_0 + \gamma)} \frac{c(\lambda_0 - \lambda)}{b(\lambda_0 - \lambda)} \prod_{\tilde{\lambda} \in \gen{X}_{\lambda}^{1,n}} \frac{a(\lambda - \tilde{\lambda})}{b(\lambda - \tilde{\lambda})} \frac{a(\lambda + \tilde{\lambda} + \gamma)}{b(\lambda + \tilde{\lambda} + \gamma)} \nonumber \\
\widetilde{N}_{\lambda}^{(\mathcal{B})} &\coloneqq& \Lambda_{\mathcal{A}} (\lambda) \frac{c(2 \lambda_0)}{a(2 \lambda_0)} \frac{b(2 \lambda)}{a(2 \lambda)} \frac{a(2 \lambda_0 + \gamma)}{a(\lambda_0 + \lambda)} \prod_{\tilde{\lambda} \in \gen{Y}_{\lambda}^{1,n}} \frac{a(\tilde{\lambda} - \lambda)}{b(\tilde{\lambda} - \lambda)} \frac{b(\tilde{\lambda} + \lambda)}{a(\tilde{\lambda} + \lambda)} \nonumber \\
&& \qquad - \; \Lambda_{\tilde{\mathcal{D}}}  (\lambda) \frac{a(2 \lambda_0 + \gamma)}{b(2 \lambda_0 + \gamma)} \frac{c(\lambda_0 - \lambda)}{b(\lambda_0 - \lambda)} \prod_{\tilde{\lambda} \in \gen{Y}_{\lambda}^{1,n}} \frac{a(\lambda - \tilde{\lambda})}{b(\lambda - \tilde{\lambda})} \frac{a(\lambda + \tilde{\lambda} + \gamma)}{b(\lambda + \tilde{\lambda} + \gamma)} \; . \nonumber \\
\>
Then the functional equation 
\[
\label{typeD}
\widetilde{M}_0 \; \mathcal{S}_n (\vec{X}^{1,n} | \vec{Y}^{1,n}) + \sum_{\lambda \in \gen{X}^{1,n}} \widetilde{N}_{\lambda}^{(\mathcal{C})} \; \mathcal{S}_n (\vec{X}_{\lambda}^{0,n} | \vec{Y}^{1,n}) 
+ \sum_{\lambda \in \gen{Y}^{1,n}} \widetilde{N}_{\lambda}^{(\mathcal{B})} \; \mathcal{S}_n (\vec{X}^{1,n} | \vec{Y}_{\lambda}^{0,n}) = 0 
\]
is satisfied by the scalar product $\mathcal{S}_n$.
\end{theorem}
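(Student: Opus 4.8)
The plan is to reproduce, step by step, the argument used for Theorem~\ref{funA}, with the trivial identity (\ref{DBCD}) now playing the role that (\ref{ABCA}) played there. First I would act with $\gen{\pi}_{n+1}^{\mathcal{C}}$ on the left-hand side of (\ref{DBCD}), inserting the higher-degree relation (\ref{DB}) to move $\tilde{\mathcal{D}}(\lambda_0)$ rightward through $[\gen{Y}^{1,n}]_{\mathcal{B}}$. This splits the result into a single diagonal contribution $\gen{\pi}_{n+1}^{\mathcal{C}}\!\left([\gen{Y}^{1,n}]_{\mathcal{B}}\,\tilde{\mathcal{D}}(\lambda_0)\right)$, weighted by the product $\prod_{\lambda\in\gen{Y}^{1,n}}\frac{a(\lambda_0-\lambda)}{b(\lambda_0-\lambda)}\frac{a(\lambda_0+\lambda+\gamma)}{b(\lambda_0+\lambda+\gamma)}$, together with a sum over $\lambda\in\gen{Y}^{1,n}$ of terms $\gen{\pi}_{n+1}^{\mathcal{C}}\!\left([\gen{Y}_{\lambda}^{0,n}]_{\mathcal{B}}\,\mathcal{A}(\lambda)\right)$ and $\gen{\pi}_{n+1}^{\mathcal{C}}\!\left([\gen{Y}_{\lambda}^{0,n}]_{\mathcal{B}}\,\tilde{\mathcal{D}}(\lambda)\right)$. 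Symmetrically, I would act with $\gen{\pi}_{n+1}^{\mathcal{B}}$ on the right-hand side of (\ref{DBCD}) and insert (\ref{CD}), obtaining a diagonal term and the mirror sums $\gen{\pi}_{n+1}^{\mathcal{B}}\!\left(\mathcal{A}(\lambda)\,[\gen{X}_{\lambda}^{0,n}]_{\mathcal{C}}\right)$ and $\gen{\pi}_{n+1}^{\mathcal{B}}\!\left(\tilde{\mathcal{D}}(\lambda)\,[\gen{X}_{\lambda}^{0,n}]_{\mathcal{C}}\right)$ over $\lambda\in\gen{X}^{1,n}$.

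The second step is the reduction $\gen{\pi}_{n+1}\to\gen{\pi}_{n}$. Since $\mathcal{A}$ and $\tilde{\mathcal{D}}$ both act diagonally on $\ket{0}$ and on $\bra{0}$ by (\ref{action}), every term above collapses: the rightmost (respectively leftmost) operator is replaced by its eigenvalue $\Lambda_{\mathcal{A}}$ or $\Lambda_{\tilde{\mathcal{D}}}$ evaluated at the appropriate argument, and the remaining string is identified with a degree-$n$ scalar product through (\ref{piS}). The diagonal contributions then yield $\Lambda_{\tilde{\mathcal{D}}}(\lambda_0)$ times the two competing products, i.e. precisely $\widetilde{M}_0\,\mathcal{S}_n(\vec{X}^{1,n}|\vec{Y}^{1,n})$; the $\gen{Y}$-sum, after extracting $\Lambda_{\mathcal{A}}(\lambda)$ and $\Lambda_{\tilde{\mathcal{D}}}(\lambda)$ from (\ref{lambda}), reassembles into $\sum_\lambda \widetilde{N}_{\lambda}^{(\mathcal{B})}\,\mathcal{S}_n(\vec{X}^{1,n}|\vec{Y}_{\lambda}^{0,n})$, and the $\gen{X}$-sum into $\sum_\lambda \widetilde{N}_{\lambda}^{(\mathcal{C})}\,\mathcal{S}_n(\vec{X}_{\lambda}^{0,n}|\vec{Y}^{1,n})$. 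Equating the two sides of (\ref{DBCD}) and collecting terms produces (\ref{typeD}).

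The delicate point — and what I expect to be the main obstacle — is the appearance of $\tilde{\mathcal{D}}$ rather than the bare $\mathcal{D}$. Unlike $\mathcal{A}(\lambda_0)$ in Theorem~\ref{funA}, the operator $\mathcal{D}(\lambda_0)$ is \emph{not} a highest-weight operator: only the shifted combination $\tilde{\mathcal{D}}(\lambda_0)=\mathcal{D}(\lambda_0)-\frac{\sinh(\gamma)}{\sinh(2\lambda_0+\gamma)}\mathcal{A}(\lambda_0)$ satisfies $\tilde{\mathcal{D}}(\lambda_0)\ket{0}=\Lambda_{\tilde{\mathcal{D}}}(\lambda_0)\ket{0}$. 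This is exactly why the identity (\ref{DBCD}) is written with $\tilde{\mathcal{D}}$: it guarantees that the operator surviving in the diagonal is $\tilde{\mathcal{D}}(\lambda_0)$, so that the reduction contributes only $\Lambda_{\tilde{\mathcal{D}}}(\lambda_0)$ to $\widetilde{M}_0$, with no stray $\Lambda_{\mathcal{A}}(\lambda_0)$ admixture. Any $\mathcal{A}(\lambda_0)$ pieces produced when the exchange relations are expressed through $\tilde{\mathcal{D}}$ occur identically on both sides of (\ref{DBCD}) — they are controlled by the already-established identity (\ref{ABCA}) — and hence cancel. Once this is in place, what remains is the purely mechanical check that the trigonometric weights in (\ref{DB}) and (\ref{CD}), multiplied by the eigenvalues (\ref{lambda}), match term by term the coefficients (\ref{coeffD}); this parallels verbatim the corresponding bookkeeping in the proof of Theorem~\ref{funA}.
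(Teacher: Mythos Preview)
Your proposal is correct and follows essentially the same route as the paper's own proof: apply $\gen{\pi}_{n+1}^{\mathcal{C}}$ to (\ref{DB}), apply $\gen{\pi}_{n+1}^{\mathcal{B}}$ to (\ref{CD}), use the highest-weight reductions (\ref{action}) to collapse $\gen{\pi}_{n+1}\to\gen{\pi}_n$, identify the resulting terms as scalar products via (\ref{piS}), and equate through (\ref{DBCD}). The paper's proof is in fact terser than yours and does not spell out the $\mathcal{D}$ versus $\tilde{\mathcal{D}}$ passage at all; your explicit discussion of that point is a genuine addition. One small correction: $\mathcal{D}(\lambda_0)$ \emph{is} diagonal on $\ket{0}$ (with eigenvalue $\Lambda_{\tilde{\mathcal{D}}}(\lambda_0)+\tfrac{c}{a(2\lambda_0)}\Lambda_{\mathcal{A}}(\lambda_0)$), so the issue is not a failure of the highest-weight property but rather that the bare-$\mathcal{D}$ identity produces a functional equation which is (\ref{typeD}) plus $\tfrac{c}{a(2\lambda_0)}$ times an equation of type~A structure; passing to $\tilde{\mathcal{D}}$ in (\ref{DBCD}) subtracts this off and isolates (\ref{typeD}) with the clean coefficient $\widetilde{M}_0\propto\Lambda_{\tilde{\mathcal{D}}}(\lambda_0)$.
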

\begin{proof}
We follow the same strategy used in the proof of Theorem \ref{funA}. The first step is to apply the map
$\gen{\pi}_{n+1}^{\mathcal{C}}$ on the relation (\ref{DB}). By doing so we are left with terms $\gen{\pi}_{n+1}^{\mathcal{C}} \left( [\gen{Y}^{1,n}]_{\mathcal{B}} \tilde{\mathcal{D}}(\lambda_0) \right)$, $\gen{\pi}_{n+1}^{\mathcal{C}} \left( [\gen{Y}_{\lambda}^{0,n}]_{\mathcal{B}} \mathcal{A}(\lambda) \right)$
and $\gen{\pi}_{n+1}^{\mathcal{C}} \left( [\gen{Y}_{\lambda}^{0,n}]_{\mathcal{B}} \tilde{\mathcal{D}}(\lambda) \right)$, which obey a reduction
relation $\gen{\pi}_{n+1}^{\mathcal{C}} \to  \gen{\pi}_{n}^{\mathcal{C}}$ due to the $\alg{sl}(2)$ highest-weight property of (\ref{zero}). Next we apply
$\gen{\pi}_{n+1}^{\mathcal{B}}$ to (\ref{CD}). The terms obtained through this procedure are of the following form:
$\gen{\pi}_{n+1}^{\mathcal{B}} \left( \tilde{\mathcal{D}}(\lambda_0) \; [\gen{X}^{1,n}]_{\mathcal{C}} \right)$, 
$\gen{\pi}_{n+1}^{\mathcal{B}} \left( \mathcal{A}(\lambda) \; [\gen{X}_{\lambda}^{0,n}]_{\mathcal{C}} \right)$ and $\gen{\pi}_{n+1}^{\mathcal{B}} \left( \tilde{\mathcal{D}}(\lambda) \; [\gen{X}_{\lambda}^{0,n}]_{\mathcal{C}} \right)$.
The reduction property  $\gen{\pi}_{n+1}^{\mathcal{B}, \mathcal{C}} \to  \gen{\pi}_{n}^{\mathcal{B},\mathcal{C}}$ then allows one to identify the scalar product $\mathcal{S}_n$
through (\ref{piS}). The mechanism above described then yields the functional relation (\ref{typeD}). 
\end{proof}

\begin{remark} \label{rema}
At this stage it is important to remark some structural properties of (\ref{typeA}) and (\ref{typeD}). Although
we can readily see that those equations are invariant under the permutations of variables
$\lambda_i^{\mathcal{B}} \leftrightarrow \lambda_j^{\mathcal{B}}$ and $\lambda_i^{\mathcal{C}} \leftrightarrow \lambda_j^{\mathcal{C}}$
for any $i,j \in \{1, 2 , \dots , n\}$, the same does not hold for the permutations $\lambda_0 \leftrightarrow \lambda_i^{\mathcal{B}}$
and $\lambda_0 \leftrightarrow \lambda_i^{\mathcal{C}}$. Therefore, the latter permutations indeed produces significantly more
independent equations describing the scalar product $\mathcal{S}_n$.
\end{remark}

\section{The scalar product $\mathcal{S}_n$}
\label{sec:SOL}

In the previous section we have derived two functional equations satisfied by the scalar product $\mathcal{S}_n$.
This set of equations is described in Theorems \ref{funA} and \ref{funD}, and their derivation makes use of higher-order reflection 
algebra relations and the highest-weight property of the vector $\ket{0}$ entering in the definition (\ref{scp}).
However, it is important to stress here that equations (\ref{typeA}) and (\ref{typeD}) carry no information about the particular
representation of the operators $\mathcal{B}$'s and $\mathcal{C}$'s we are considering in the definition of the scalar 
product (\ref{scp}). The choice of representation characterizes the particular model for which the scalar products are being computed. 
In this way, it is expected that different choices of representations for the operators $\mathcal{B}$'s and $\mathcal{C}$'s will be manifested
in different classes of solutions of the system (\ref{typeA}, \ref{typeD}). Thus, we need to characterize
the class of solutions of (\ref{typeA}, \ref{typeD}) we are interested. 
The class of solution corresponding to the scalar product of Bethe vectors associated with the $XXZ$ chain with open 
boundaries (\ref{scp}) will be discussed in what follows, but some remarks are in order before proceeding with that.

For instance, the system of equations (\ref{typeA}, \ref{typeD}) exhibits the same structure as the one
derived in \cite{Galleas_SCP} for the scalar product of Bethe vectors associated with the $XXZ$ chain with periodic
boundary conditions. The main difference between our present equations (\ref{typeA}, \ref{typeD}) and the equations of
\cite{Galleas_SCP} lies on the particular form of the equations coefficients. The steps required for solving 
 (\ref{typeA}, \ref{typeD}) need to be modified due to that difference but the general idea remains the same.
In fact, the general strategy for solving this type of functional relations was firstly
described in \cite{Galleas_2012, Galleas_2013}. The resolution of (\ref{typeA}, \ref{typeD}) will follow a number of
systematic steps exploiting properties expected for the desired scalar product (\ref{scp}).

\begin{lemma}[Doubly symmetric function] \label{symmetry}
Let $\mathcal{Z} \coloneqq (\lambda_1 , \lambda_2 , \dots , \lambda_n)$
be a generic $n$-tuple. Then the scalar product $\mathcal{S}_n$ satisfies the symmetry property, 
\<
\label{symC}
\mathcal{S}_n (\lambda_1^{\mathcal{C}}, \dots , \lambda_n^{\mathcal{C}} | \lambda_1^{\mathcal{B}}, \dots , \lambda_n^{\mathcal{B}} ) = \mathcal{S}_n (\mathcal{Z}  | \lambda_1^{\mathcal{B}}, \dots , \lambda_n^{\mathcal{B}} )  \qquad \forall \; \mathcal{Z} \in \gen{Sym}( \{ \lambda_1^{\mathcal{C}}, \lambda_2^{\mathcal{C}} , \dots , \lambda_n^{\mathcal{C}} \} ) \; . \nonumber \\
\>
In addition to that we also have the property
\<
\label{symB}
\mathcal{S}_n (\lambda_1^{\mathcal{C}}, \dots , \lambda_n^{\mathcal{C}} | \lambda_1^{\mathcal{B}}, \dots , \lambda_n^{\mathcal{B}} ) = \mathcal{S}_n ( \lambda_1^{\mathcal{C}}, \dots , \lambda_n^{\mathcal{C}} |   \mathcal{Z} )  \qquad \forall \; \mathcal{Z} \in \gen{Sym}( \{ \lambda_1^{\mathcal{B}}, \lambda_2^{\mathcal{B}} , \dots , \lambda_n^{\mathcal{B}} \} ) \; . \nonumber \\
\>

\end{lemma}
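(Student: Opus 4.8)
The plan is to derive both symmetries directly from the pairwise commutativity of the creation operators, rather than from the functional equations of Theorems \ref{funA} and \ref{funD}. The two relations $\comm{\mathcal{B}(\lambda_1)}{\mathcal{B}(\lambda_2)} = 0$ and $\comm{\mathcal{C}(\lambda_1)}{\mathcal{C}(\lambda_2)} = 0$ are among the sixteen fundamental relations collected in $\mathfrak{SK}_2$, and they are precisely what justified the well-definedness of the ordered-product notation $[\,\cdot\,]_{\mathcal{B}}$ and $[\,\cdot\,]_{\mathcal{C}}$ in (\ref{box}). The present lemma merely promotes that observation to a statement about the scalar product $\mathcal{S}_n$ itself.

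First I would recall that the symmetric group on $n$ letters is generated by adjacent transpositions. Hence, to show that $[\gen{Y}^{1,n}]_{\mathcal{B}}$ is invariant under an arbitrary permutation of $\set{\lambda_1^{\mathcal{B}}, \dots, \lambda_n^{\mathcal{B}}}$, it suffices to interchange two neighbouring factors $\mathcal{B}(\lambda_j^{\mathcal{B}}) \mathcal{B}(\lambda_{j+1}^{\mathcal{B}})$, which is exactly the content of the commutation relation above. Iterating, the ordered product $\mathop{\overrightarrow\prod}_{1 \leq j \leq n} \mathcal{B}(\lambda_j^{\mathcal{B}})$ agrees with the product formed in any other order, so $[\gen{Y}^{1,n}]_{\mathcal{B}}$ depends only on the \emph{set} $\set{\lambda_1^{\mathcal{B}}, \dots, \lambda_n^{\mathcal{B}}}$. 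The identical argument applied to the $\mathcal{C}$'s shows that $[\gen{X}^{1,n}]_{\mathcal{C}}$ depends only on $\set{\lambda_1^{\mathcal{C}}, \dots, \lambda_n^{\mathcal{C}}}$.

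Substituting these two facts into the definition (\ref{scp}) then yields the result. Since the reference vectors $\bra{0}$ and $\ket{0}$ are fixed and carry no spectral-parameter dependence, permuting the $\lambda_i^{\mathcal{C}}$ leaves $[\gen{X}^{1,n}]_{\mathcal{C}}$—and hence $\mathcal{S}_n$—unchanged, establishing (\ref{symC}); likewise, permuting the $\lambda_j^{\mathcal{B}}$ leaves $[\gen{Y}^{1,n}]_{\mathcal{B}}$ unchanged, establishing (\ref{symB}). There is no genuine obstacle here: the whole content of the lemma is that the doubly-symmetric character of $\mathcal{S}_n$ is inherited from the two abelian subalgebras generated by the $\mathcal{B}$'s and by the $\mathcal{C}$'s. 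The only point worth flagging is the consistency with Remark \ref{rema}: the two symmetries obtained here match the manifest $\lambda_i^{\mathcal{B}} \leftrightarrow \lambda_j^{\mathcal{B}}$ and $\lambda_i^{\mathcal{C}} \leftrightarrow \lambda_j^{\mathcal{C}}$ invariance of the functional equations (\ref{typeA}) and (\ref{typeD}), while—as emphasised there—no analogous invariance involving the auxiliary variable $\lambda_0$ is available.
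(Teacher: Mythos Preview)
Your argument is correct: the commutation relations $\comm{\mathcal{B}(\lambda_1)}{\mathcal{B}(\lambda_2)}=0$ and $\comm{\mathcal{C}(\lambda_1)}{\mathcal{C}(\lambda_2)}=0$ in $\mathfrak{SK}_2$ immediately force the products $[\gen{Y}^{1,n}]_{\mathcal{B}}$ and $[\gen{X}^{1,n}]_{\mathcal{C}}$ to be symmetric in their spectral parameters, and hence $\mathcal{S}_n$ as defined in (\ref{scp}) is doubly symmetric. The paper in fact acknowledges exactly this at the start of \Appref{sec:SYM}.

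However, the paper deliberately does \emph{not} stop there. Its proof instead extracts the symmetry from the functional equation (\ref{typeA}) alone: one integrates (\ref{typeA}) in $\lambda_0$ around a small contour enclosing $\lambda_k^{\mathcal{C}}$ (respectively $\lambda_k^{\mathcal{B}}$), observes that only $M_0$ and $N_{\lambda_k^{\mathcal{C}}}^{(\mathcal{C})}$ (respectively $N_{\lambda_k^{\mathcal{B}}}^{(\mathcal{B})}$) contribute, checks that their residues cancel, and deduces invariance under the cyclic shift moving $\lambda_k$ to the front; since cycles of all lengths generate the symmetric group, full symmetry follows. The point of this more laborious route is methodological: the paper wants the double symmetry to be a property of \emph{any} analytic solution of the system (\ref{typeA}, \ref{typeD}), not merely of the particular object (\ref{scp}). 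This matters downstream, where the scalar product is reconstructed purely from the functional equations plus the polynomial and asymptotic data of Lemmas \ref{polynomial} and \ref{asymptotic}; one needs to know that the functional equations themselves enforce the symmetry so that the solution space is constrained accordingly. Your direct argument proves the lemma as stated but does not deliver this stronger conclusion.
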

\begin{proof}
See \Appref{sec:SYM}. 
\end{proof}

\begin{remark}
Due to Lemma \ref{symmetry} we can also consider the simplified notation 
\[
\mathcal{S}_n (\lambda_1^{\mathcal{C}}, \dots , \lambda_n^{\mathcal{C}} | \lambda_1^{\mathcal{B}}, \dots , \lambda_n^{\mathcal{B}} ) = \mathcal{S}_n ( \gen{X}^{1,n} | \gen{Y}^{1,n} ) \; .
\]
\end{remark}

\begin{lemma}[Polynomial structure] \label{polynomial}
Let $x_i^{\mathcal{B}, \mathcal{C}} \coloneqq e^{2 \lambda_i^{\mathcal{B} , \mathcal{C}}}$. Then $\mathcal{S}_n$ is of the form,
\[
\label{poly}
\mathcal{S}_n ( \gen{X}^{1,n} | \gen{Y}^{1,n} ) = \bar{\mathcal{S}_n} ( x_1^{\mathcal{C}}, \dots , x_n^{\mathcal{C}} | x_1^{\mathcal{B}}, \dots , x_n^{\mathcal{B}}) \prod_{i=1}^n \left( x_i^{\mathcal{B}} x_i^{\mathcal{C}} \right)^{-L} \; ,
\]
where $\bar{\mathcal{S}_n}$ is a doubly symmetric polynomial of order $2L$ in each one of its variables.
\end{lemma}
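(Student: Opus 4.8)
The plan is to read the claimed structure directly off the definition (\ref{scp}) together with the explicit form of the monodromy entries in (\ref{ABCD}), rather than from the functional equations. The key elementary observation is a per-factor normalization: writing $x=e^{2\lambda}$, every scalar entry of the $\mathcal{R}$-matrix and of the reflection matrix (\ref{kmat}) occurring in (\ref{ABCD}) is, as a function of $\lambda$, either the $\lambda$-independent constant $c(\lambda)=\sinh(\gamma)$ or a $\sinh$ of the form $\sinh(\lambda+a)$ or $\sinh(a-\lambda)$ for some $\lambda$-independent $a$. Since $\sinh(\lambda+a)=\tfrac12 e^{-\lambda}(e^{a}x-e^{-a})$ and $\sinh(a-\lambda)=\tfrac12 e^{-\lambda}(e^{a}-e^{-a}x)$, each such factor equals $\tfrac12 e^{-\lambda}$ times a degree-one polynomial in $x$, while the $c$-factors are $x$-independent. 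Thus, up to a single power of $e^{-\lambda}$, every $\lambda$-dependent factor is linear in $x$.

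First I would establish that, as operators on $\mathbb{V}_{\mathcal{Q}}$, one has $\mathcal{B}(\lambda)=x^{-L}\,P_{\mathcal{B}}(x)$ and $\mathcal{C}(\lambda)=x^{-L}\,P_{\mathcal{C}}(x)$, where $P_{\mathcal{B}},P_{\mathcal{C}}$ are operator-valued polynomials of degree at most $2L$. The entry $\mathcal{B}(\lambda)$ is a sum over ``paths'' in the auxiliary space $\mathbb{V}_0$ through the $2L$ $\mathcal{R}$-factors and the single diagonal $\mathcal{K}_0(\lambda)$ in (\ref{ABCD}), each path selecting one scalar factor from every matrix. The diagonal $\mathcal{K}_0$ always contributes one $\lambda$-dependent factor, whereas an $\mathcal{R}$-factor contributes a $\lambda$-dependent entry exactly when its selected entry is diagonal, and the constant $c$ otherwise. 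The decisive point is the parity constraint: $\mathcal{B}$ is the $(1,2)$ auxiliary entry, so along any contributing path the number $k$ of off-diagonal ($c$-type) $\mathcal{R}$-entries is odd. Hence the number $m=2L-k+1$ of $\lambda$-dependent factors is even for every term, the accumulated prefactor $e^{-m\lambda}=x^{-m/2}$ has integer exponent, and $m$ ranges over $\{2,4,\dots,2L\}$, with the most negative $x$-power $x^{-L}$ attained at $m=2L$ (a single flip). Multiplying through by $x^{L}$ therefore converts each term into a genuine nonnegative-power polynomial of degree at most $2L$, which gives the asserted factorization; the identical argument applies verbatim to $\mathcal{C}(\lambda)$.

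Next I would transport this to the scalar product. In (\ref{scp}) the variable $\lambda_j^{\mathcal{B}}$ enters only through the single factor $\mathcal{B}(\lambda_j^{\mathcal{B}})$, so freezing all other variables, $\mathcal{S}_n$ is the image of $\mathcal{B}(\lambda_j^{\mathcal{B}})=(x_j^{\mathcal{B}})^{-L}P_{\mathcal{B}}(x_j^{\mathcal{B}})$ under the fixed linear functional $\langle 0|(\cdots)\,\mathcal{B}(\lambda_j^{\mathcal{B}})\,(\cdots)|0\rangle$. This yields $(x_j^{\mathcal{B}})^{-L}$ times a scalar polynomial of degree at most $2L$ in $x_j^{\mathcal{B}}$, and the same holds for every $x_i^{\mathcal{C}}$. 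A function that in each variable separately has this shape factorizes globally as $\prod_{i=1}^{n}(x_i^{\mathcal{B}}x_i^{\mathcal{C}})^{-L}$ times a function polynomial of degree at most $2L$ in each variable; this defines $\bar{\mathcal{S}}_n$ and produces (\ref{poly}). Finally, the double symmetry of $\bar{\mathcal{S}}_n$ is immediate from Lemma \ref{symmetry}, since the extracted prefactor $\prod_i(x_i^{\mathcal{B}}x_i^{\mathcal{C}})^{-L}$ is itself invariant under permutations within the $\mathcal{B}$-variables and within the $\mathcal{C}$-variables.

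I expect the only delicate point to be the bookkeeping in the second step, namely verifying that the parity of the number of auxiliary-space spin flips is fixed for $\mathcal{B}$ (and for $\mathcal{C}$) so that a single uniform power $x^{-L}$ can be pulled out across all terms and the residual object is a bona fide polynomial rather than a Laurent polynomial. This hinges on the diagonality of $\mathcal{K}(\lambda)$ in (\ref{kmat}) and on the off-diagonal $\mathcal{R}$-entry $c(\lambda)=\sinh(\gamma)$ being $\lambda$-independent; everything else is routine once the normalization $\sinh(\lambda+a)=\tfrac12 e^{-\lambda}(e^{a}x-e^{-a})$ is in hand.
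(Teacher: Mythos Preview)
Your argument is correct and follows the same overall strategy as the paper: reduce the polynomial structure of $\mathcal{S}_n$ to the operator-level statement $\mathcal{B}(\lambda)=x^{-L}P_{\mathcal{B}}(x)$ and $\mathcal{C}(\lambda)=x^{-L}P_{\mathcal{C}}(x)$ with $P_{\mathcal{B},\mathcal{C}}$ polynomial of degree at most $2L$, then transport this to the scalar product by linearity and invoke Lemma~\ref{symmetry} for the double symmetry. The paper's proof in \Appref{sec:POL} performs exactly this reduction.

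The implementational difference lies in how the operator-level fact is established. The paper does not prove it from scratch: for $\mathcal{B}$ it refers to \cite{Galleas_Lamers_2014}, and for $\mathcal{C}$ it invokes the crossing-symmetry identity $\mathcal{C}(\lambda)=\mathcal{B}(\lambda)^{t}\big|_{\mu_i\to-\mu_i}$, established via the decomposition (\ref{BBCC}) together with (\ref{Tmono}). Your route is more elementary and self-contained: the parity observation (the $(1,2)$ auxiliary entry forces an odd number of off-diagonal $\mathcal{R}$-selections, so the number $m=2L-k+1$ of $\lambda$-dependent $\sinh$ factors is always even and bounded by $2L$) simultaneously handles $\mathcal{B}$ and $\mathcal{C}$ without appealing to crossing symmetry or external references. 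What the paper's route buys is the structural relation $\mathcal{C}=\mathcal{B}^{t}|_{\mu\to-\mu}$ itself, which is reused elsewhere (e.g.\ in \Appref{sec:ASYMP}); what your route buys is a transparent, one-shot argument that makes the role of the diagonality of $\mathcal{K}(\lambda)$ and the $\lambda$-independence of $c(\lambda)=\sinh(\gamma)$ explicit.
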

\begin{proof}
See \Appref{sec:POL}. 
\end{proof}

\begin{lemma}[Special zeroes] \label{zeroes} 
The function $\mathcal{S}_n (\gen{X}^{1,n} | \gen{Y}^{1,n} )$ vanishes for the specialization of variables
\[ \label{zeroesB}
\left( \lambda_1^{\mathcal{B}} , \lambda_2^{\mathcal{B}}   \right) \in \left\{ \left( \mu_1 - \gamma , \mu_1  \right), \left( \mu_1 - \gamma , -\mu_1 - \gamma  \right), \left( -\mu_1 , \mu_1  \right)  \right\} \; .
\]
Similarly, it also vanishes for 
\[ \label{zeroesC}
\left( \lambda_1^{\mathcal{C}} , \lambda_2^{\mathcal{C}}   \right) \in \left\{ \left( \mu_1 - \gamma , \mu_1  \right), \left( \mu_1 - \gamma , -\mu_1 - \gamma  \right), \left( -\mu_1 , \mu_1  \right)  \right\} \; .
\] 
\end{lemma}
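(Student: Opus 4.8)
The plan is to establish (\ref{zeroesB}) at the level of the Bethe vector and then obtain (\ref{zeroesC}) by a mirror argument. Since the operators $\mathcal{B}$ mutually commute, for any of the three listed pairs I may use Lemma \ref{symmetry} to reorder the product in (\ref{scp}) so that $\mathcal{B}(\lambda_1^{\mathcal{B}})\,\mathcal{B}(\lambda_2^{\mathcal{B}})$ acts first on $\ket{0}$; it then suffices to show that this two-fold product annihilates the reference state, $\mathcal{B}(\lambda_1^{\mathcal{B}})\,\mathcal{B}(\lambda_2^{\mathcal{B}})\ket{0}=0$, since the remaining $\mathcal{B}$-operators and the dual contraction $\bra{0}[\gen{X}^{1,n}]_{\mathcal{C}}$ then give $\mathcal{S}_n=0$ for arbitrary remaining variables. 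That the vanishing is really at the level of the vector is forced by the genericity of $\bra{0}[\gen{X}^{1,n}]_{\mathcal{C}}$: for $L=n=2$ the two-magnon space is one-dimensional, so $\mathcal{S}_2=0$ for generic $\lambda^{\mathcal{C}}$ already requires the ket to vanish. The statement (\ref{zeroesC}) follows by applying the same analysis to $\bra{0}\,\mathcal{C}(\lambda_1^{\mathcal{C}})\,\mathcal{C}(\lambda_2^{\mathcal{C}})$, equivalently through the transposition anti-automorphism exchanging $\mathcal{B}$ and $\mathcal{C}$ that already underlies the proof of Lemma \ref{symmetry}.

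The key observation is that the three pairs are built from the four points $\mu_1,\ \mu_1-\gamma,\ -\mu_1,\ -\mu_1-\gamma$, which are precisely the zeros of the first-site weights attached to the two factors $\mathcal{R}_{01}(\lambda-\mu_1)$ and $\mathcal{R}_{01}(\lambda+\mu_1)$ in the double-row product (\ref{ABCD}): one has $b(\lambda-\mu_1)=0$ at $\lambda=\mu_1$, $a(\lambda-\mu_1)=0$ at $\lambda=\mu_1-\gamma$, $b(\lambda+\mu_1)=0$ at $\lambda=-\mu_1$, and $a(\lambda+\mu_1)=0$ at $\lambda=-\mu_1-\gamma$. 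Equivalently, by (\ref{action})--(\ref{lambda}) these are the points where $\Lambda_{\tilde{\mathcal{D}}}$ (at $\mu_1,-\mu_1$) or $\Lambda_{\mathcal{A}}$ (at $\mu_1-\gamma,-\mu_1-\gamma$) vanish. I would first record the one-magnon vector $\mathcal{B}(\lambda)\ket{0}$ explicitly, splitting its site-$i$ amplitude into a contribution in which the auxiliary spin is reversed before reflection (weighted by $\sinh(h+\lambda)$) and one in which it is reversed after reflection (weighted by $\sinh(h-\lambda)$), the two boundary weights of (\ref{kmat}). Specializing $\lambda$ to any of the four points then makes a whole family of these amplitudes drop out, rigidifying the first-site content of $\mathcal{B}(\lambda)\ket{0}$.

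With these collapsed forms in hand, the three pairs are treated case by case, exploiting the reflection equation (\ref{Req}) governing the exchange of $\mathcal{B}(\lambda_1^{\mathcal{B}})$ and $\mathcal{B}(\lambda_2^{\mathcal{B}})$. For $(-\mu_1,\mu_1)$ the sum $\lambda_1^{\mathcal{B}}+\lambda_2^{\mathcal{B}}=0$ makes the auxiliary matrix $\mathcal{R}_{12}(\lambda_1^{\mathcal{B}}+\lambda_2^{\mathcal{B}})\propto\gen{P}_{12}$ a permutation, while for $(\mu_1-\gamma,\mu_1)$ the difference $\lambda_1^{\mathcal{B}}-\lambda_2^{\mathcal{B}}=-\gamma$ makes $\mathcal{R}_{12}(\lambda_1^{\mathcal{B}}-\lambda_2^{\mathcal{B}})$ proportional to the rank-one projector onto the $\mathcal{U}_q[\widehat{\alg{sl}}(2)]$-singlet of $\mathbb{V}_1\otimes\mathbb{V}_2$; combined with the first-site collapse of the one-magnon vectors, this forces both magnons onto the first quantum site, so that the two-fold product vanishes. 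The remaining pair $(\mu_1-\gamma,-\mu_1-\gamma)$ carries no such degeneracy—both entries are $\Lambda_{\mathcal{A}}$-zeros—so here the vanishing must instead be extracted from an explicit cancellation between the surviving first-site amplitudes of the two collapsed vectors.

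The main obstacle is precisely this double-row bookkeeping. In the periodic (single-row) case a single $\mathcal{B}$ evaluated at an inhomogeneity pins its magnon to the corresponding site, and two such operators vanish by exclusion; here each quantum site is threaded \emph{twice} and is dressed by the two inequivalent boundary matrices $\mathcal{K}$, $\bar{\mathcal{K}}$ of (\ref{kmat})--(\ref{bkmat}), so no single $\mathcal{B}$ pins and the effect is intrinsically a two-rapidity one. This boundary asymmetry between $+\mu_1$ and $-\mu_1$ is exactly what selects three vanishing pairs out of the six a priori candidates: the ``doubly reflected'' pair $(-\mu_1-\gamma,-\mu_1)$ and the two ``mixed'' pairs $(\mu_1-\gamma,-\mu_1)$, $(\mu_1,-\mu_1-\gamma)$ do not vanish. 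Carrying the degenerate projector through $\mathcal{K}_0$ while tracking the $\sinh(h\pm\lambda)$ weights, and performing the purely computational cancellation required for $(\mu_1-\gamma,-\mu_1-\gamma)$, is where the real work lies.
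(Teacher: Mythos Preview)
Your approach is fundamentally different from the paper's. You aim to prove the vanishing directly at the level of the Bethe vector, i.e.\ to show $\mathcal{B}(\lambda_1^{\mathcal{B}})\mathcal{B}(\lambda_2^{\mathcal{B}})\ket{0}=0$ for the three special pairs by exploiting degenerations of the $\mathcal{R}$-matrices (permutation at $\lambda=0$, rank-one projector at $\lambda=-\gamma$) together with the first-site weight collapses. The paper never touches the explicit Bethe vector: it extracts the special zeroes \emph{from the functional equations} (\ref{typeA})--(\ref{typeD}) themselves. One specializes $(\lambda_{n-1}^{\mathcal{C}},\lambda_n^{\mathcal{C}})=(\mu_1-\gamma,\mu_1)$ in (\ref{typeA}), observes that the coefficients $N^{(\mathcal{C})}_{\lambda_{n-1}^{\mathcal{C}}}$ and $N^{(\mathcal{C})}_{\lambda_n^{\mathcal{C}}}$ vanish, and then uses the permutations $\lambda_0\leftrightarrow\lambda_i^{\mathcal{B}}$ of Remark~\ref{rema} (together with (\ref{typeD}) and further $\lambda^{\mathcal{C}}$-specializations for $n>2$) to produce an overdetermined homogeneous linear system for the specialized scalar products; the coefficient matrix has nonvanishing determinant for generic parameters, forcing the result.

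The paper's route is internal to the algebraic-functional framework---no explicit $\mathcal{R}$-matrix action on $\ket{0}$ is needed, and the argument is uniform in $n$. Your route, if completed, would be more concrete and would explain structurally \emph{why} precisely these three pairs (and not the three non-vanishing ones you correctly flag) are singled out by the boundary. However, as written it remains a sketch: the degenerate $\mathcal{R}_{12}$ acts in the auxiliary tensor product, and pushing the projector or permutation through the double-row object with $\mathcal{K}_0$ sandwiched in between is not immediate from (\ref{Req}) alone; and the pair $(\mu_1-\gamma,-\mu_1-\gamma)$ is left to an unexecuted ``purely computational cancellation''. So the strategy is plausible and illuminating, but it is not yet a proof.
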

\begin{proof}
See \Appref{sec:ZEROES}. 
\end{proof}

\begin{lemma}[Asymptotic behavior] \label{asymptotic}
In the limit $x_i^{\mathcal{B}, \mathcal{C}} \to \infty \; \colon \; \forall i \in \{1,2, \dots , n\}$, the scalar product
(\ref{scp}) exhibits the following asymptotic behavior,
\<
\label{asymp}
\bar{\mathcal{S}}_n &\sim& \frac{q^{2n (L-1)}}{2^{2n(2L+1)}} (q - q^{-1})^{2n} \prod_{i=1}^{n} \left( x_i^{\mathcal{B}} x_i^{\mathcal{C}} \right)^{2L} \nonumber \\
&& \times \sum_{1 \leq r_i < r_{i+1}  \leq L} \prod_{s=1}^{n} \prod_{\epsilon \in \{ \pm \}} \left( t y_{r_s}^{- \epsilon \frac{1}{2}} q^{L - r_s} \Delta_{n-s}^{\epsilon} - t^{-1} y_{r_s}^{\epsilon \frac{1}{2}} q^{r_s - L} \Delta_{n-s}^{-\epsilon} \right) \; , \nonumber \\
\>
where $q \coloneqq e^{\gamma}$, $t \coloneqq e^{h}$, $y_i \coloneqq e^{2\mu_i}$ and $\Delta_{m}^{\pm} \coloneqq \sum_{l=0}^m q^{\pm 2l}$.
\end{lemma}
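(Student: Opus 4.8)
The plan is to treat Lemma~\ref{asymptotic} as a direct computation of the top-degree coefficient of the polynomial identified in Lemma~\ref{polynomial}, rather than as a consequence of the functional equations. By Lemma~\ref{polynomial} the function $\bar{\mathcal{S}}_n$ is a polynomial of degree $2L$ in each variable $x_i^{\mathcal{B},\mathcal{C}}$, so its behaviour as all $x_i^{\mathcal{B},\mathcal{C}}\to\infty$ is governed entirely by the coefficient of the top monomial $\prod_{i=1}^n (x_i^{\mathcal{B}} x_i^{\mathcal{C}})^{2L}$. Equivalently, writing $\mathcal{S}_n=\bar{\mathcal{S}}_n\prod_i (x_i^{\mathcal{B}} x_i^{\mathcal{C}})^{-L}$, I would extract the leading term of $\mathcal{S}_n$ from the definition (\ref{scp}) by replacing each operator $\mathcal{B}(\lambda_j^{\mathcal{B}})$ and $\mathcal{C}(\lambda_i^{\mathcal{C}})$ by the operator that is its coefficient of $x^{L}=e^{2L\lambda}$. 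Since the $\mathcal{B}$'s commute among themselves and likewise the $\mathcal{C}$'s, and the scalar product is polynomial of the stated degree in each $x_i$ separately, its top coefficient factorises into these individual leading pieces.

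Second, I would compute these leading operators explicitly from (\ref{ABCD}). Using $a(\lambda)\sim\sfrac{q}{2}\,e^{\lambda}$, $b(\lambda)\sim\sfrac{1}{2}\,e^{\lambda}$, $c(\lambda)=\sfrac{1}{2}(q-q^{-1})$ and $\mathcal{K}(\lambda)\sim\sfrac{1}{2}e^{\lambda}\,\mathrm{diag}(t,-t^{-1})$ as $e^{\lambda}\to\infty$, the diagonal auxiliary entries $\mathcal{A},\mathcal{D}$ grow like $e^{(2L+1)\lambda}$, while the off-diagonal entries carry one factor $c(\lambda)$ in place of a diagonal weight and therefore grow one power slower, like $x^{L}=e^{2L\lambda}$, matching the degree of Lemma~\ref{polynomial}. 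Reading off the $(1,2)$ and $(2,1)$ auxiliary entries of the ordered products $\prod_{j}\mathcal{R}_{0j}(\lambda-\mu_j)$ and $\prod_{j}\mathcal{R}_{0j}(\lambda+\mu_j)$ dressed by $\mathcal{K}_0(\lambda)$, the leading operators become weighted sums of single spin flips, $\mathcal{B}(\lambda)\sim x^{L}\sum_{k=1}^L w_k^{-}\,\sigma_k^{-}$ and $\mathcal{C}(\lambda)\sim x^{L}\sum_{k=1}^L w_k^{+}\,\sigma_k^{+}$, the weights $w_k^{\mp}$ collecting the $q$-powers from the diagonal $a,b$ factors to the left and right of site $k$, the inhomogeneity factors $y_k^{\pm\sfrac12}$ from $e^{\pm\mu_k}$ in $\mathcal{R}_{0j}(\lambda\pm\mu_j)$, and the boundary weights $t,-t^{-1}$ from the two diagonal entries of $\mathcal{K}$. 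The two signs $\epsilon\in\{\pm\}$ in (\ref{asymp}) are precisely the two single-row factors (the $\lambda+\mu_j$ and $\lambda-\mu_j$ products), and the two terms inside each bracket of (\ref{asymp}) are the two diagonal channels of $\mathcal{K}$.

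Third, I would evaluate $\bra{0}\prod_i \mathcal{C}^{(\infty)}(\lambda_i^{\mathcal{C}})\prod_j \mathcal{B}^{(\infty)}(\lambda_j^{\mathcal{B}})\ket{0}$ for these leading operators. As $\ket{0}$ and $\bra{0}$ are the all-spin-up reference states, the $n$ lowering operators must flip $n$ distinct sites down and the $n$ raising operators must flip exactly those sites back up; the surviving terms are therefore labelled by a choice of $n$ distinct sites, which produces the ordered sum $\sum_{1\le r_i<r_{i+1}\le L}$. The remaining task is the resummation of the $q$-weights: when a dressed $\sigma_k^{\mp}$ is commuted past sites that are already flipped it picks up factors $q^{\pm 2}$ per such site, and summing these over the allowed relative positions of the $n-s$ not-yet-placed excitations within the window collapses into the $q$-integers $\Delta_{n-s}^{\pm}=\sum_{l=0}^{n-s}q^{\pm 2l}$. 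Collecting the universal prefactors---$(\sfrac12)^{2L+1}$ per operator giving $2^{-2n(2L+1)}$, one factor $(q-q^{-1})$ per operator giving $(q-q^{-1})^{2n}$, and the accumulated diagonal $q$-powers giving $q^{2n(L-1)}$---then yields (\ref{asymp}).

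The main obstacle is the combinatorial resummation in the third step that produces the factorised form $\prod_{s=1}^n\prod_{\epsilon}(\dots)$ with the $s$-dependent $q$-integers $\Delta_{n-s}^{\pm}$. I would carry this out by induction on $n$, peeling off the largest-index excitation $r_n$: conditioning on its position turns the $q^{S^z}$-type dressing carried by the remaining leading operators into exactly the geometric sum defining $\Delta_{n-s}^{\pm}$, while the double-row (reflection) structure decouples into the product over $\epsilon$. A secondary, purely bookkeeping difficulty is tracking the precise powers of $q$ and $\sfrac12$ so that the scalar prefactor in (\ref{asymp}) comes out exactly; this I would pin down by checking the $n=1$ case against a direct evaluation and propagating it through the induction.
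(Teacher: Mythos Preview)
Your proposal follows essentially the same strategy as the paper's own proof in \Appref{sec:ASYMP}: extract the leading term of each $\mathcal{B}(\lambda)$ and $\mathcal{C}(\lambda)$ as a sum of dressed single-site spin flips, then evaluate the reference-state expectation value and reorganise the result into an ordered sum over sites. The paper writes these leading operators explicitly in terms of $\mathcal{U}_q[\alg{sl}(2)]$ generators as $P_j^{\pm}=\pm (ty_j^{1/2})^{\pm 1}\,\mbox{id}^{\otimes(j-1)}\otimes X^{-}\otimes (K^{\pm 1})^{\otimes(L-j)}$ and $\bar P_j^{\pm}$ (obtained via $\mathcal{C}(\lambda)=\mathcal{B}(\lambda)^t|_{\mu_i\to-\mu_i}$), and then performs the combinatorial step not by induction on $n$ but by exploiting the $q$-commutation relations $P_i^{\pm}P_j^{\pm}=q^{\mp 2}P_j^{\pm}P_i^{\pm}$ etc., packaged into auxiliary operators $Q_j^{(m)}=P_j^{+}q^{-2m}+P_j^{-}q^{2m}$ satisfying $Q_i^{(m)}Q_j^{(n)}=Q_j^{(n)}Q_i^{(m+1)}$ for $i<j$; this turns the symmetric sum $\sum_{r_1,\dots,r_n}\prod_s Q_{r_s}^{(0)}$ into the ordered sum $\sum_{r_1<\dots<r_n}\prod_s\sum_{l=0}^{n-s}Q_{r_s}^{(l)}$, which is exactly where the $\Delta_{n-s}^{\pm}$ appear. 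Your induction on $n$ would reproduce the same identity, so the difference is purely tactical.

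One small correction: your identification of the product over $\epsilon\in\{\pm\}$ with the two single rows is not quite how the factorisation arises. Each of the two brackets at a given site $r_s$ comes from one side of the scalar product (one from the $\mathcal{B}$-string via $\mathfrak{J}$, one from the $\mathcal{C}$-string via $\bar{\mathfrak{J}}$); the two terms \emph{inside} each bracket are what reflect the double-row structure $P_j^{+}+P_j^{-}$ (equivalently the two diagonal channels of $\mathcal{K}$, as you also say). This does not affect the validity of your computation, only the narrative.
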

\begin{proof}
See \Appref{sec:ASYMP}. 
\end{proof}

The above lemmas pave the way for the resolution of the system of equations (\ref{typeA}, \ref{typeD}).
The desired solution is given by the following theorem.

\begin{theorem} \label{off-shell}
The scalar product $\mathcal{S}_n$ can be written as the following multiple contour integral,
\< \label{sol}
&& \mathcal{S}_n (\gen{X}^{1,n} | \gen{Y}^{1,n} ) = \nonumber \\
&& c^{2 n} \oint \dots \oint   \frac{\prod_{1 \leq i < j \leq n} a(w_j - \mu_i) b(w_j + \mu_i) a(\bar{w}_j - \mu_i) b(\bar{w}_j + \mu_i)}{\prod_{i,j=1}^n b(w_i - \lambda_j^{C}) b(\bar{w}_i - \lambda_j^{B})} \prod_{i=1}^n \frac{b(2 \mu_i)}{a(2 \mu_i)} \nonumber \\
&& \qquad \quad \prod_{1 \leq i < j \leq n} b(w_j - w_i)^2 b(\bar{w}_j - \bar{w}_i)^2 \prod_{i=1}^n \frac{b(2 w_i)}{a(2 w_i)} \frac{b(2 \bar{w}_i)}{a(2 \bar{w}_i)} R_i^{-1} \; \gen{det} (\gen{\Phi}^{(i)}) \prod_{k=1}^n \frac{\dd w_k}{2 \ii \pi} \frac{\dd \bar{w}_k}{2 \ii \pi} \nonumber \\
\>
where 
\< 
\label{RI}
R_i &\coloneqq& \prod_{k=i}^n \frac{a(w_k - \mu_i)}{b(w_k - \mu_i)} \frac{b(w_k + \mu_i)}{a(w_k + \mu_i)} - \prod_{k=i}^n \frac{a(\bar{w}_k - \mu_i)}{b(\bar{w}_k - \mu_i)} \frac{b(\bar{w}_k + \mu_i)}{a(\bar{w}_k + \mu_i)} 
\>
and $\gen{\Phi}^{(i)}$ is a $2 \times 2$ matrix with entries
\<
\label{PhiI}
&& \gen{\Phi}^{(i)}_{lm} \coloneqq \nonumber \\
&& \frac{b(h + s_i^m)}{\omega_l (s_i^m - (-1)^{l-1} \mu_i)} \prod_{k=i}^L a(s_i^m - \mu_k) a(s_i^m + \mu_k) \prod_{j=i+1}^n \frac{a(s_j^m - s_i^m)}{b(s_j^m - s_i^m)} \frac{b(s_j^m + s_i^m)}{a(s_j^m + s_i^m)} \nonumber \\
&& - \frac{a(s_i^m - h)}{\bar{\omega}_l (s_i^m + (-1)^{l-1} \mu_i)} \prod_{k=i}^L b(s_i^m - \mu_k) b(s_i^m + \mu_k) \prod_{j=i+1}^n \frac{a(s_i^m - s_j^m)}{b(s_i^m - s_j^m)} \frac{a(s_i^m + s_j^m + \gamma)}{b(s_i^m + s_j^m + \gamma)} \; . \nonumber \\
\>
Formula (\ref{PhiI}) takes into account the conventions $\omega_l (x) \coloneqq \sinh{(x + (l-1)\gamma)}$,
$\bar{\omega}_l (x) \coloneqq \sinh{(x + (2-l)\gamma)}$ and $s_i^m \coloneqq \delta_{1m} w_i + \delta_{2m} \bar{w}_i$.
The variables $w_i$ and $\bar{w}_i$ in (\ref{sol}) are auxiliary integration variables and the integration
contour for each variable $w_i$ encloses only all points in the set $\gen{X}^{1,n}$. Similarly, each variable $\bar{w}_i$ is
integrated along a contour containing only all points in the set $\gen{Y}^{1,n}$.  
\end{theorem}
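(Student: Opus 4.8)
The plan is to combine the two functional equations of Theorems~\ref{funA} and \ref{funD} with the four structural Lemmas~\ref{symmetry}--\ref{asymptotic} into a \emph{uniqueness} statement, and then to verify that the right-hand side of (\ref{sol}) meets every one of those requirements; by uniqueness it must then equal $\mathcal{S}_n$. This mirrors the resolution of the periodic case in \cite{Galleas_SCP} and the general recipe of \cite{Galleas_2012, Galleas_2013}, the only genuine novelty being the $\lambda\pm\lambda_0$ structure of the reflection-algebra coefficients (\ref{coeffA}), (\ref{coeffD}).

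For the uniqueness I would argue as follows. By Lemma~\ref{polynomial} the reduced object $\bar{\mathcal{S}}_n$ is a doubly symmetric polynomial of degree $2L$ in each of its $2n$ variables, hence is specified by finitely many coefficients. The auxiliary variable $\lambda_0$ in (\ref{typeA}) is free, so the identity may be evaluated at convenient specializations: choosing $\lambda_0$ at a zero of $\Lambda_{\mathcal{A}}$ (and, for (\ref{typeD}), at a zero of $\Lambda_{\tilde{\mathcal{D}}}$) annihilates the leading coefficient $M_0$ (resp. $\widetilde{M}_0$) and leaves a relation purely among the shifted evaluations $\mathcal{S}_n(\vec{X}_{\lambda}^{0,n}\,|\,\cdots)$ and $\mathcal{S}_n(\cdots|\,\vec{Y}_{\lambda}^{0,n})$. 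Together with the permutation symmetry of Lemma~\ref{symmetry}, these relations link the two sectors and recursively express the higher coefficients of $\bar{\mathcal{S}}_n$ in terms of lower ones; the special zeroes of Lemma~\ref{zeroes} supply the boundary data that removes the homogeneous ambiguity, and the leading asymptotics of Lemma~\ref{asymptotic} fix the overall normalization. Verifying that this linear system leaves no residual freedom is the first point that requires care.

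For the verification I would read (\ref{sol}) through its residues. The contours enclose only the simple poles at $w_i=\lambda_j^{\mathcal{C}}$ and $\bar{w}_i=\lambda_j^{\mathcal{B}}$ produced by $\prod_{i,j} b(w_i-\lambda_j^{\mathcal{C}})\,b(\bar{w}_i-\lambda_j^{\mathcal{B}})$ in the denominator, so the integral collapses to a finite sum over assignments of $\{w_i\}$ to $\gen{X}^{1,n}$ and of $\{\bar{w}_i\}$ to $\gen{Y}^{1,n}$. In this representation the double symmetry is manifest from the invariance of the summand under permutations of the $w_i$ (and of the $\bar{w}_i$), the degree count in $x_i^{\mathcal{B},\mathcal{C}}$ reproduces Lemma~\ref{polynomial}, the leading powers match the determinantal sum (\ref{asymp}), and the pairs (\ref{zeroesB}), (\ref{zeroesC}) force the surviving residues to cancel through $\det(\gen{\Phi}^{(i)})$ together with the coupling $R_i^{-1}$ of (\ref{RI}). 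The two boundary matrices (\ref{kmat}) and (\ref{bkmat}) enter precisely through the two rows of $\gen{\Phi}^{(i)}$ built, respectively, from $b(h+s_i^m)\prod a(s_i^m\mp\mu_k)$ and $a(s_i^m-h)\prod b(s_i^m\mp\mu_k)$ in (\ref{PhiI}), i.e. through the $\Lambda_{\mathcal{A}}$-like and $\Lambda_{\tilde{\mathcal{D}}}$-like data of (\ref{lambda}).

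The main obstacle, and the step on which everything hinges, is showing that (\ref{sol}) actually solves (\ref{typeA}) and (\ref{typeD}). My plan is to insert the integral for each of the three argument patterns occurring in (\ref{typeA}) and to note that $\mathcal{S}_n(\vec{X}_{\lambda}^{0,n}\,|\,\vec{Y}^{1,n})$ differs from $\mathcal{S}_n(\vec{X}^{1,n}\,|\,\vec{Y}^{1,n})$ only in that one enclosed pole $\lambda_j^{\mathcal{C}}$ of the $w$-integrand has migrated to $\lambda_0$; the weighted combination in (\ref{typeA}) should then, after a contour deformation, reduce to a single integral whose integrand is an exact residue, with the coefficients $M_0$, $N_{\lambda}^{(\mathcal{C})}$, $N_{\lambda}^{(\mathcal{B})}$ being exactly what is needed to make that residue vanish. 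The laborious part is the bookkeeping of the many $a$, $b$, $c$ factors carrying both $\lambda-\lambda_0$ \emph{and} $\lambda+\lambda_0$ arguments — the signature of the reflection algebra absent from \cite{Galleas_SCP} — and checking that the single determinant $\det(\gen{\Phi}^{(i)})$ is simultaneously compatible with the $\mathcal{A}$-type equation (\ref{typeA}) and the $\tilde{\mathcal{D}}$-type equation (\ref{typeD}).
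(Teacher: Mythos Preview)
Your strategy—prove uniqueness and then verify that (\ref{sol}) satisfies (\ref{typeA}), (\ref{typeD}) and the four lemmas—is logically sound in principle but is \emph{not} what the paper does, and the verification step as you sketch it has a real obstruction. The integrand of (\ref{sol}) is not symmetric among the $w_i$ (or the $\bar w_i$): the factors $R_i^{-1}$ and $\det(\gen{\Phi}^{(i)})$ carry the inhomogeneity $\mu_i$ and only the ``later'' integration variables $w_i,\dots,w_n$, $\bar w_i,\dots,\bar w_n$. This nested, layer-by-layer structure means that shifting one enclosed pole $\lambda_j^{\mathcal C}\to\lambda_0$ does not act homogeneously on the $n$ integrations, so the ``single residue that vanishes'' picture you propose for (\ref{typeA}) does not close. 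In the periodic case of \cite{Galleas_SCP} the analogous integrand already has this nesting, and there too the functional equations are never checked directly on the final formula.

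The paper instead \emph{derives} (\ref{sol}) constructively. It specializes $\lambda_0$ to a zero of $\Lambda_{\mathcal A}$ or $\Lambda_{\tilde{\mathcal D}}$ (namely $\lambda_0=\mu_1-\gamma$ or $\lambda_0=\mu_1$) \emph{simultaneously} with one of $\lambda_n^{\mathcal B},\lambda_n^{\mathcal C}$ set to the companion value, so that Lemma~\ref{zeroes} kills several terms of (\ref{typeA})/(\ref{typeD}). A sequence of such specializations (Steps~1--8) produces the separation-of-variables identity
\[
\mathcal{S}_n(\gen X^{1,n}\,|\,\gen Y^{1,n})=\sum_{\lambda\in\gen X^{1,n}}\sum_{\bar\lambda\in\gen Y^{1,n}}\mathcal{K}_{\lambda\bar\lambda}\;V(\gen X_{\lambda}^{1,n}\,|\,\gen Y_{\bar\lambda}^{1,n}),
\]
and Steps~9--10 show that $V$ obeys the \emph{same} system (\ref{typeA})--(\ref{typeD}) with $L\mapsto L-1$, $n\mapsto n-1$, $\mu_i\mapsto\mu_{i+1}$; uniqueness (from \cite{Galleas_2012}) then forces $V$ to be $\mathcal S_{n-1}$ on the reduced lattice. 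The contour-integral ansatz (\ref{ansatz}) is tailored so that this recursion becomes the simple one-step relation (\ref{HH}) for $H$, whose iteration from the explicit $n=1$ solution of \Appref{sec:Ln1} gives (\ref{sol}); the asymptotics of Lemma~\ref{asymptotic} fix the overall constant. The upshot is that the special zeroes are not merely ``boundary data'' for a linear system—they are the mechanism that turns (\ref{typeA})--(\ref{typeD}) into a recursion in $n$, and the nested $\mu_i$-dependence in $R_i$ and $\gen\Phi^{(i)}$ is precisely the trace of that recursion. Your outline misses this reduction step, which is the heart of the argument.
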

\begin{proof}
The proof follows from the resolution of the system of equations (\ref{typeA}, \ref{typeD}) under the conditions
established by Lemmas \ref{polynomial} and \ref{asymptotic}. In fact, these two lemmas are the only ones that do not
follow from the functional equations (\ref{typeA}) and (\ref{typeD}), and they characterize the class of solution 
we are interested. Our procedure will be described by a sequence of systematic steps.

\vspace{0.3cm}
\noindent \textit{Step $1$.} Let us introduce the notation $\bar{\gen{Z}}^{i,j} \coloneqq \gen{Z}^{i,j} \cup \{ \mu_1 - \gamma \}$
and $\check{\gen{Z}}^{i,j} \coloneqq \gen{Z}^{i,j} \cup \{ \mu_1 \}$ for $\gen{Z}^{i,j} \in \{ \gen{X}^{i,j} , \gen{Y}^{i,j} \}$.
Thus, $\bar{\gen{Z}}^{i,j}$ and $\check{\gen{Z}}^{i,j}$ are sets of cardinalities $j-i+1$, and due to  Lemmas 
\ref{polynomial}, \ref{zeroes} and \ref{symmetry} we can write
\<
\label{SV}
\mathcal{S}_n ( \bar{\gen{X}}^{2,n} | \check{\gen{Y}}^{2,n} ) = \prod_{\lambda \in \gen{X}^{2,n} } b(\lambda - \mu_1) a(\lambda + \mu_1) \prod_{\bar{\lambda} \in \gen{Y}^{2,n} } a(\bar{\lambda} - \mu_1) b(\bar{\lambda} + \mu_1) \; V( \gen{X}^{2,n} | \gen{Y}^{2,n} ) \nonumber \\
\>
and
\<
\label{SW}
\mathcal{S}_n ( \check{\gen{X}}^{2,n} | \bar{\gen{Y}}^{2,n} ) = \prod_{\lambda \in \gen{X}^{2,n} } a(\lambda - \mu_1) b(\lambda + \mu_1) \prod_{\bar{\lambda} \in \gen{Y}^{2,n} } b(\bar{\lambda} - \mu_1) a(\bar{\lambda} + \mu_1) \; W( \gen{X}^{2,n} | \gen{Y}^{2,n} ) \; . \nonumber \\ 
\>
The functions $V$ and $W$ in (\ref{SV}) and (\ref{SW}) must have the same polynomial structure of $\mathcal{S}_n$, as described in (\ref{poly}), under
the mappings $L \mapsto L-1$ and $n \mapsto n-1$.

\vspace{0.3cm}
\noindent \textit{Step $2$.} Set $\lambda_0 = \mu_1 - \gamma$ and $\lambda_n^{B} = \mu_1$ in Eq. (\ref{typeA}).
By doing so we can readily make use of (\ref{zeroesB}) and (\ref{SV}). We are then left with the relation
\[
\label{FYOM}
\mathcal{S}_n (\gen{X}^{1,n} | \bar{\gen{Y}}^{1,n-1} ) = \mathcal{F}(\gen{Y}^{1,n-1}) \sum_{\lambda \in \gen{X}^{1,n}} \gen{\Omega}_{\lambda}(\gen{X}_{\lambda}^{1,n}) \; V(\gen{X}_{\lambda}^{1,n} | \gen{Y}^{1,n-1}) \; ,
\]
where
\<
\label{FOM}
\mathcal{F}(\gen{Z}^{1,n-1}) &\coloneqq& \left[ c \; b(2 \mu_1) b(h + \mu_1) \prod_{j=2}^L a(\mu_1 - \mu_j) a(\mu_1 + \mu_j) \right]^{-1} \prod_{\lambda \in \gen{Z}^{1,n-1}} b(\lambda - \mu_1) a(\lambda + \mu_1) \nonumber \\
\gen{\Omega}_{\lambda}(\gen{Z}_{\lambda}^{1,n}) &\coloneqq& \Lambda_{\mathcal{A}}(\lambda) \frac{c(\lambda - \mu_1)}{a(\lambda - \mu_1)} \frac{b(2\lambda)}{a(2\lambda)} \prod_{\bar{\lambda} \in \gen{Z}_{\lambda}^{1,n} } \frac{a(\bar{\lambda} - \lambda)}{b(\bar{\lambda} - \lambda)} \frac{b(\bar{\lambda} + \lambda)}{a(\bar{\lambda} + \lambda)} b(\bar{\lambda} - \mu_1) a(\bar{\lambda} + \mu_1) \nonumber \\
&& + \; \Lambda_{\tilde{\mathcal{D}}}(\lambda) \frac{c(\lambda + \mu_1)}{a(\lambda + \mu_1)} \prod_{\bar{\lambda} \in \gen{Z}_{\lambda}^{1,n} } 
\frac{a(\lambda - \bar{\lambda})}{b(\lambda - \bar{\lambda})} \frac{a(\bar{\lambda} + \lambda + \gamma)}{b(\bar{\lambda} + \lambda + \gamma)} b(\bar{\lambda} - \mu_1) a(\bar{\lambda} + \mu_1) \; . \nonumber \\
\>

\vspace{0.3cm}
\noindent \textit{Step $3$.} Set $\lambda_0 = \mu_1 - \gamma$ and $\lambda_n^{C} = \mu_1$ in Eq. (\ref{typeA}).
This particular specialization of variables paves the way for using (\ref{zeroesC}) and (\ref{SW}). We then obtain
the expression
\[
\label{FXOM}
\mathcal{S}_n (\bar{\gen{X}}^{1,n-1}  | \gen{Y}^{1,n} ) = \mathcal{F}(\gen{X}^{1,n-1}) \sum_{\lambda \in \gen{Y}^{1,n}} \gen{\Omega}_{\lambda}(\gen{Y}_{\lambda}^{1,n}) \; W(\gen{X}^{1,n-1} | \gen{Y}_{\lambda}^{1,n} ) \; ,
\]
with functions $\mathcal{F}$ and $\gen{\Omega}_{\lambda}$ previously defined in (\ref{FOM}).

\vspace{0.3cm}
\noindent \textit{Step $4$.} Now we draw our attention to Eq. (\ref{typeD}). Then we set $\lambda_0 = \mu_1$
and $\lambda_n^{B} = \mu_1 - \gamma$ in (\ref{typeD}) using (\ref{zeroesB}) and (\ref{SW}). This procedure
yields the formula
\[
\label{bFYOM}
\mathcal{S}_n ( \gen{X}^{1,n} | \check{\gen{Y}}^{1,n-1} ) = \bar{\mathcal{F}}(\gen{Y}^{1,n-1}) \sum_{\lambda \in \gen{X}^{1,n}} \bar{\gen{\Omega}}_{\lambda}(\gen{X}_{\lambda}^{1,n}) \; W(\gen{X}_{\lambda}^{1,n} | \gen{Y}^{1,n-1} ) \; ,
\]
where
\<
\label{bFOM}
\bar{\mathcal{F}}(\gen{Z}^{1,n-1}) &\coloneqq& \left[ c \; b(2 \mu_1 - 2 \gamma) b(h - \mu_1) \prod_{j=2}^L a(\mu_j - \mu_1) a(-\mu_j - \mu_1) \right]^{-1} \nonumber \\
&& \times \prod_{\lambda \in \gen{Z}^{1,n-1}} a(\lambda - \mu_1) b(\lambda + \mu_1) \nonumber \\
\bar{\gen{\Omega}}_{\lambda}(\gen{Z}_{\lambda}^{1,n}) &\coloneqq& \Lambda_{\mathcal{A}}(\lambda) \frac{c(\lambda + \mu_1)}{a(\lambda + \mu_1)} \frac{b(2\lambda)}{a(2\lambda)} \prod_{\bar{\lambda} \in \gen{Z}_{\lambda}^{1,n} } \frac{a(\bar{\lambda} - \lambda)}{b(\bar{\lambda} - \lambda)} \frac{b(\bar{\lambda} + \lambda)}{a(\bar{\lambda} + \lambda)} a(\bar{\lambda} - \mu_1) b(\bar{\lambda} + \mu_1) \nonumber \\
&& + \; \Lambda_{\tilde{\mathcal{D}}}(\lambda) \frac{c(\lambda - \mu_1)}{b(\lambda - \mu_1)} \prod_{\bar{\lambda} \in \gen{Z}_{\lambda}^{1,n} } \frac{a(\lambda - \bar{\lambda})}{b(\lambda - \bar{\lambda})} \frac{a(\bar{\lambda} + \lambda + \gamma)}{b(\bar{\lambda} + \lambda + \gamma)} a(\bar{\lambda} - \mu_1) b(\bar{\lambda} + \mu_1) \; . \nonumber \\
\>

\vspace{0.3cm}
\noindent \textit{Step $5$.} The next step consists in looking at (\ref{typeD}) under the specialization
$\lambda_0 = \mu_1$ and $\lambda_n^{C} = \mu_1 - \gamma$. The resulting expression can be simplified with the help
of (\ref{zeroesC}) and (\ref{SV}). We are thus left with the relation
\[
\label{bFXOM}
\mathcal{S}_n ( \check{\gen{X}}^{1,n-1} |  \gen{Y}^{1,n} ) = \bar{\mathcal{F}}(\gen{X}^{1,n-1}) \sum_{\lambda \in \gen{Y}^{1,n}} \bar{\gen{\Omega}}_{\lambda}(\gen{Y}_{\lambda}^{1,n}) \; V(\gen{X}^{1,n-1} | \gen{Y}_{\lambda}^{1,n} ) \; ,
\]
where $\bar{\mathcal{F}}$ and $\bar{\gen{\Omega}}_{\lambda}$ have been defined in (\ref{bFOM}).

\vspace{0.3cm}
\noindent \textit{Step $6$.} Next we set $\lambda_n^{C} = \mu_1$ in (\ref{FYOM}) and compare the result with (\ref{SW}). This allows us 
to conclude that $V = W$. The same conclusion can be obtained by setting $\lambda_n^{B} = \mu_1$ in (\ref{FXOM}) and comparing
the result with (\ref{SV}). It is worth remarking that a similar analysis using the results of \textit{Steps} $4$ and $5$ also
yields the same constraint.

\vspace{0.3cm}
\noindent \textit{Step $7$.} We set $\lambda_0 = \mu_1$ in Eq. (\ref{typeA}) and use the relations (\ref{bFYOM}) and  
(\ref{bFXOM}). From this point on, we are already assuming $V = W$ as obtained in \textit{Step $6$}. 
This procedure allows us to write
\[
\label{SXY}
\mathcal{S}_n (\gen{X}^{1,n} | \gen{Y}^{1,n}) = \sum_{\lambda \in \gen{X}^{1,n}} \sum_{\bar{\lambda} \in \gen{Y}^{1,n}} \mathcal{K}_{\lambda \bar{\lambda}} \; V(\gen{X}_{\lambda}^{1,n} | \gen{Y}_{\bar{\lambda}}^{1,n} ) \; .
\]
The function $\mathcal{K}_{\lambda \bar{\lambda}}$ is given by
\<
\label{KLbL}
\mathcal{K}_{\lambda \bar{\lambda}} \coloneqq \mathcal{C}_0 \; \gen{det} (\gen{\Gamma}_{\lambda \bar{\lambda}}) \frac{b(2 \lambda)}{a(2 \lambda)} \frac{b(2 \bar{\lambda})}{a(2 \bar{\lambda})} \prod_{\tilde{\lambda} \in \gen{X}_{\lambda}^{1,n} \cup \gen{Y}_{\bar{\lambda}}^{1,n}} a(\tilde{\lambda} - \mu_1) b(\tilde{\lambda} + \mu_1)  \; ,
\>
where
\<
\mathcal{C}_0^{-1} &\coloneqq&   b(h + \mu_1) b(h - \mu_1) b(2 \mu_1 - 2\gamma) b(2 \mu_1 + \gamma) \prod_{j=2}^L \prod_{\epsilon = \pm 1} a(\mu_1 - \epsilon \mu_j) a(\epsilon \mu_j - \mu_1) \nonumber \\
&& \times \left[ \prod_{\tilde{\lambda} \in \gen{X}^{1,n}} \frac{a(\tilde{\lambda} - \mu_1)}{b(\tilde{\lambda} - \mu_1)} \frac{b(\tilde{\lambda} + \mu_1)}{a(\tilde{\lambda} + \mu_1)} -
\prod_{\tilde{\lambda} \in \gen{Y}^{1,n}} \frac{a(\tilde{\lambda} - \mu_1)}{b(\tilde{\lambda} - \mu_1)} \frac{b(\tilde{\lambda} + \mu_1)}{a(\tilde{\lambda} + \mu_1)} \right]
\>
is a constant, i.e. independent of $\lambda$ and $\bar{\lambda}$, and $\gen{\Gamma}_{\lambda \bar{\lambda}}$ is a $2 \times 2$ matrix
with entries
\<
\label{gama}
(\gen{\Gamma}_{\lambda \bar{\lambda}})_{ij} &\coloneqq&  \frac{b(h+\kappa_j)}{\omega_i (\kappa_j - (-1)^{i-1} \mu_1)} \prod_{k=1}^L a(\kappa_j - \mu_k) a(\kappa_j + \mu_k) \prod_{\tilde{\lambda} \in (\gen{Z}_i)_{\lambda}^{1,n} } \frac{a(\tilde{\lambda} - \kappa_j)}{b(\tilde{\lambda} - \kappa_j)} \frac{b(\tilde{\lambda} + \kappa_j)}{a(\tilde{\lambda} + \kappa_j)}  \nonumber \\
&-&  \frac{a(\kappa_j - h)}{\bar{\omega}_i (\kappa_j + (-1)^{i-1} \mu_1)} \prod_{k=1}^L b(\kappa_j - \mu_k) b(\kappa_j + \mu_k)  \prod_{\tilde{\lambda} \in (\gen{Z}_i)_{\lambda}^{1,n} } \frac{a(\kappa_j - \tilde{\lambda})}{b(\kappa_j - \tilde{\lambda})} \frac{a(\tilde{\lambda} + \kappa_j + \gamma)}{b(\tilde{\lambda} + \kappa_j + \gamma)} \; . \nonumber \\
\>
In (\ref{gama}) we have employed the notation $\omega_i (x) \coloneqq \sinh{(x + (i-1)\gamma)}$ and 
$\bar{\omega}_i (x) \coloneqq \sinh{(x + (2-i)\gamma)}$. Formula (\ref{gama}) also considers 
$\kappa_j \coloneqq \delta_{1j} \lambda + \delta_{2j} \bar{\lambda}$ and
\[
\label{KZ}
\gen{Z}_j \coloneqq \begin{cases}
\gen{X} \quad \mbox{for} \; j=1 \cr
\gen{Y} \quad \mbox{for} \; j=2 \cr
\end{cases} \; .
\]

\vspace{0.3cm}
\noindent \textit{Step $8$.} Next we look at Eq. (\ref{typeD}) under the specialization $\lambda_0 = \mu_1 - \gamma$.
This particular specialization allows us to readily use formulae (\ref{FYOM}) and (\ref{FXOM}). By doing so we
obtain the relation
\[
\label{bSXY}
\mathcal{S}_n (\gen{X}^{1,n} | \gen{Y}^{1,n}) = \sum_{\lambda \in \gen{X}^{1,n}} \sum_{\bar{\lambda} \in \gen{Y}^{1,n}} \bar{\mathcal{K}}_{\lambda \bar{\lambda}} \; V(\gen{X}_{\lambda}^{1,n} | \gen{Y}_{\bar{\lambda}}^{1,n} ) \; ,
\]
where 
\<
\label{bKLbL}
\bar{\mathcal{K}}_{\lambda \bar{\lambda}} \coloneqq \bar{\mathcal{C}}_0 \; \gen{det} (\bar{\gen{\Gamma}}_{\lambda \bar{\lambda}}) \frac{b(2 \lambda)}{a(2 \lambda)} \frac{b(2 \bar{\lambda})}{a(2 \bar{\lambda})}  \prod_{\tilde{\lambda} \in \gen{X}_{\lambda}^{1,n} \cup \gen{Y}_{\bar{\lambda}}^{1,n}} b(\tilde{\lambda} - \mu_1) a(\tilde{\lambda} + \mu_1)  \; .
\>
The term $\bar{\mathcal{C}}_0$ is a constant and it explicitly reads
\<
\bar{\mathcal{C}}_0^{-1} &\coloneqq&  b(h + \mu_1) b(h - \mu_1) b(2 \mu_1 - 2\gamma) b(2 \mu_1 - \gamma) \prod_{j=2}^L \prod_{\epsilon = \pm 1} a(\mu_1 - \epsilon \mu_j) a(\epsilon \mu_j - \mu_1) \nonumber \\
&& \times \left[ \prod_{\tilde{\lambda} \in \gen{X}^{1,n}} \frac{b(\tilde{\lambda} - \mu_1)}{a(\tilde{\lambda} - \mu_1)} \frac{a(\tilde{\lambda} + \mu_1)}{b(\tilde{\lambda} + \mu_1)} -
\prod_{\tilde{\lambda} \in \gen{Y}^{1,n}} \frac{b(\tilde{\lambda} - \mu_1)}{a(\tilde{\lambda} - \mu_1)} \frac{a(\tilde{\lambda} + \mu_1)}{b(\tilde{\lambda} + \mu_1)} \right] \; .
\>
Here $\bar{\gen{\Gamma}}_{\lambda \bar{\lambda}}$ is also a $2 \times 2$ matrix with entries given by
\<
\label{bgama}
(\bar{\gen{\Gamma}}_{\lambda \bar{\lambda}})_{ij} &\coloneqq&  \frac{b(h+\kappa_j)}{\omega_i (\kappa_j + (-1)^{i-1} \mu_1)} \prod_{k=1}^L a(\kappa_j - \mu_k) a(\kappa_j + \mu_k) \prod_{\tilde{\lambda} \in (\gen{Z}_i)_{\lambda}^{1,n} } \frac{a(\tilde{\lambda} - \kappa_j)}{b(\tilde{\lambda} - \kappa_j)} \frac{b(\tilde{\lambda} + \kappa_j)}{a(\tilde{\lambda} + \kappa_j)}  \nonumber \\
&-&  \frac{a(\kappa_j - h)}{\bar{\omega}_i (\kappa_j - (-1)^{i-1} \mu_1)} \prod_{k=1}^L b(\kappa_j - \mu_k) b(\kappa_j + \mu_k)  \prod_{\tilde{\lambda} \in (\gen{Z}_i)_{\lambda}^{1,n} } \frac{a(\kappa_j - \tilde{\lambda})}{b(\kappa_j - \tilde{\lambda})} \frac{a(\tilde{\lambda} + \kappa_j + \gamma)}{b(\tilde{\lambda} + \kappa_j + \gamma)} \; . \nonumber \\
\>

\vspace{0.3cm}
\noindent \textit{Step $9$.} Substitute formula (\ref{SXY}) in Eq. (\ref{typeA}) and set $\lambda_n^{B} = \mu_1 - \gamma$
and $\lambda_n^{C} = \mu_1$. This yields an equation for the function $V$ exhibiting the same structure of (\ref{typeA}), but
with modified coefficients and $n \mapsto n-1$. The explicit form of the coefficients will not be relevant for our purposes
here as one can verify that the resulting equation consists of a linear combination of (\ref{typeA}) and (\ref{typeD})
under the mappings $L \mapsto L-1$, $n \mapsto n-1$ and $\mu_i \mapsto \mu_{i+1}$.

\vspace{0.3cm}
\noindent \textit{Step $10$.} We repeat the procedure of \textit{Step $9$} but substitute instead formula
(\ref{SXY}) in Eq. (\ref{typeD}). We then set $\lambda_n^{C} = \mu_1 - \gamma$ and $\lambda_n^{B} = \mu_1$
to find another equation for the function $V$ with the same structure of (\ref{typeD}). 
Similarly to the previous step, one can verify that the resulting equation consists of a linear combination of (\ref{typeA}) and (\ref{typeD})
taking into account the mappings $L \mapsto L-1$, $n \mapsto n-1$ and $\mu_i \mapsto \mu_{i+1}$.
\begin{figure} \centering
\scalebox{1}{
\begin{tikzpicture}[>=stealth]
\path (0,0) node[rectangle,fill=gray!20!white,draw] (p1) {Equation type A}
      (10,0) node[rectangle,fill=gray!20!white,draw] (p2) {Equation type D};
\begin{scope}[yshift=0cm]
\path (2,2) node[rectangle,rounded corners=7pt,fill=gray!20!white,draw] (p3) {Lemma $2$}
      (8,2) node[rectangle,rounded corners=7pt,fill=gray!20!white,draw] (p4) {Lemma $3$}
      (5,-2) node[rectangle,rounded corners=7pt,fill=gray!20!white,draw] (p5) {Lemma $4$};
\end{scope}
\begin{scope}[yshift=2cm]
\path (5,-2) node[rectangle,rounded corners=7pt,fill=gray!20!white,draw] (p6) {Step $1$};
\end{scope}
\begin{scope}[yshift=-3.5cm, xshift=-0.2cm]
\path (1,0) node[rectangle,rounded corners=7pt,fill=gray!20!white,draw] (p7) {Step $2$}
      (2.45,0) node[rectangle,rounded corners=7pt,fill=gray!20!white,draw] (p8) {Step $3$};
\end{scope}
\begin{scope}[yshift=-3.5cm, xshift=6.75cm]
\path (1,0) node[rectangle,rounded corners=7pt,fill=gray!20!white,draw] (p9) {Step $4$}
      (2.45,0) node[rectangle,rounded corners=7pt,fill=gray!20!white,draw] (p10) {Step $5$};
\end{scope}
\begin{scope}[yshift=-5.5cm]
\path (2.5,0) node[rectangle,rounded corners=7pt,fill=gray!20!white,draw] (p11) {Step $6$};
\end{scope}
\begin{scope}[yshift=-5.5cm, xshift=5cm]
\path (2.5,0) node[rectangle,rounded corners=7pt,fill=gray!20!white,draw] (p12) {Step $6$};
\end{scope}
\begin{scope}[yshift=-8cm]
\path (0,0) node[rectangle,rounded corners=7pt,fill=gray!20!white,draw] (p13) {Step $7$};
\end{scope}
\begin{scope}[yshift=-8cm, xshift=7cm]
\path (3.0,0) node[rectangle,rounded corners=7pt,fill=gray!20!white,draw] (p14) {Step $8$};
\end{scope}
\begin{scope}[yshift=-10cm]
\path (-1,0) node[rectangle,rounded corners=7pt,fill=gray!20!white,draw] (p15) {Step $9$};
\end{scope}
\begin{scope}[yshift=-10cm, xshift=7cm]
\path (4,0) node[rectangle,rounded corners=7pt,fill=gray!20!white,draw] (p16) {Step $10$};
\end{scope}
\draw [postaction=decorate,decoration={markings, mark=at position 3.1cm with {\arrow[black]{stealth}}}, thick]  (p3.south) -- (p6.north);
\draw [postaction=decorate,decoration={markings, mark=at position 3.1cm with {\arrow[black]{stealth}}}, thick]  (p4.south) -- (p6.north);
\draw [postaction=decorate,decoration={markings, mark=at position 1.3cm with {\arrow[black]{stealth}}}, thick]  (p5.north) -- (p6.south);
\draw [postaction=decorate,decoration={markings, mark=at position 3.4cm with {\arrow[black]{stealth}}}, thick]  (p5.south) -- (p8.north west);
\draw [postaction=decorate,decoration={markings, mark=at position 2.7cm with {\arrow[black]{stealth}}}, thick]  (p1.south) -- (p7.north east);
\draw [postaction=decorate,decoration={markings, mark=at position 3.3cm with {\arrow[black]{stealth}}}, thick]  (p6.south west) -- (p7.north east);
\draw [postaction=decorate,decoration={markings, mark=at position 3.4cm with {\arrow[black]{stealth}}}, thick]  (p5.south) -- (p9.north east);
\draw [postaction=decorate,decoration={markings, mark=at position 2.7cm with {\arrow[black]{stealth}}}, thick]  (p2.south) -- (p10.north west);
\draw [postaction=decorate,decoration={markings, mark=at position 3.3cm with {\arrow[black]{stealth}}}, thick]  (p6.south east) -- (p10.north west);
\draw [postaction=decorate,decoration={markings, mark=at position 4.8cm with {\arrow[black]{stealth}}}, thick]  (p6.south west) -- (p11.north);
\draw [postaction=decorate,decoration={markings, mark=at position 1.5cm with {\arrow[black]{stealth}}}, thick]  (p8.south west) -- (p11.north);
\draw [postaction=decorate,decoration={markings, mark=at position 4.8cm with {\arrow[black]{stealth}}}, thick]  (p6.south east) -- (p12.north);
\draw [postaction=decorate,decoration={markings, mark=at position 1.5cm with {\arrow[black]{stealth}}}, thick]  (p9.south east) -- (p12.north);
\draw [postaction=decorate,decoration={markings, mark=at position 6.9cm with {\arrow[black]{stealth}}}, thick]  (p1.south) -- (p13.north);
\draw [postaction=decorate,decoration={markings, mark=at position 2.2cm with {\arrow[black]{stealth}}}, thick]  (p11.south) -- (p13.north);
\draw [postaction=decorate,decoration={markings, mark=at position 8cm with {\arrow[black]{stealth}}}, thick]  (p9.south east) -- (p13.north);
\draw [postaction=decorate,decoration={markings, mark=at position 6.9cm with {\arrow[black]{stealth}}}, thick]  (p2.south) -- (p14.north);
\draw [postaction=decorate,decoration={markings, mark=at position 2.2cm with {\arrow[black]{stealth}}}, thick]  (p12.south) -- (p14.north);
\draw [postaction=decorate,decoration={markings, mark=at position 8cm with {\arrow[black]{stealth}}}, thick]  (p8.south west) -- (p14.north);
\draw [postaction=decorate,decoration={markings, mark=at position 9cm with {\arrow[black]{stealth}}}, thick]  (p1.south) -- (p15.north);
\draw [postaction=decorate,decoration={markings, mark=at position 1cm with {\arrow[black]{stealth}}}, thick]  (p13.south) -- (p15.north);
\draw [postaction=decorate,decoration={markings, mark=at position 9cm with {\arrow[black]{stealth}}}, thick]  (p2.south) -- (p16.north);
\draw [postaction=decorate,decoration={markings, mark=at position 1cm with {\arrow[black]{stealth}}}, thick]  (p14.south) -- (p16.north);
\end{tikzpicture}}
\caption{Interrelation among the Steps $1$-$10$.}
\label{steps}
\end{figure}
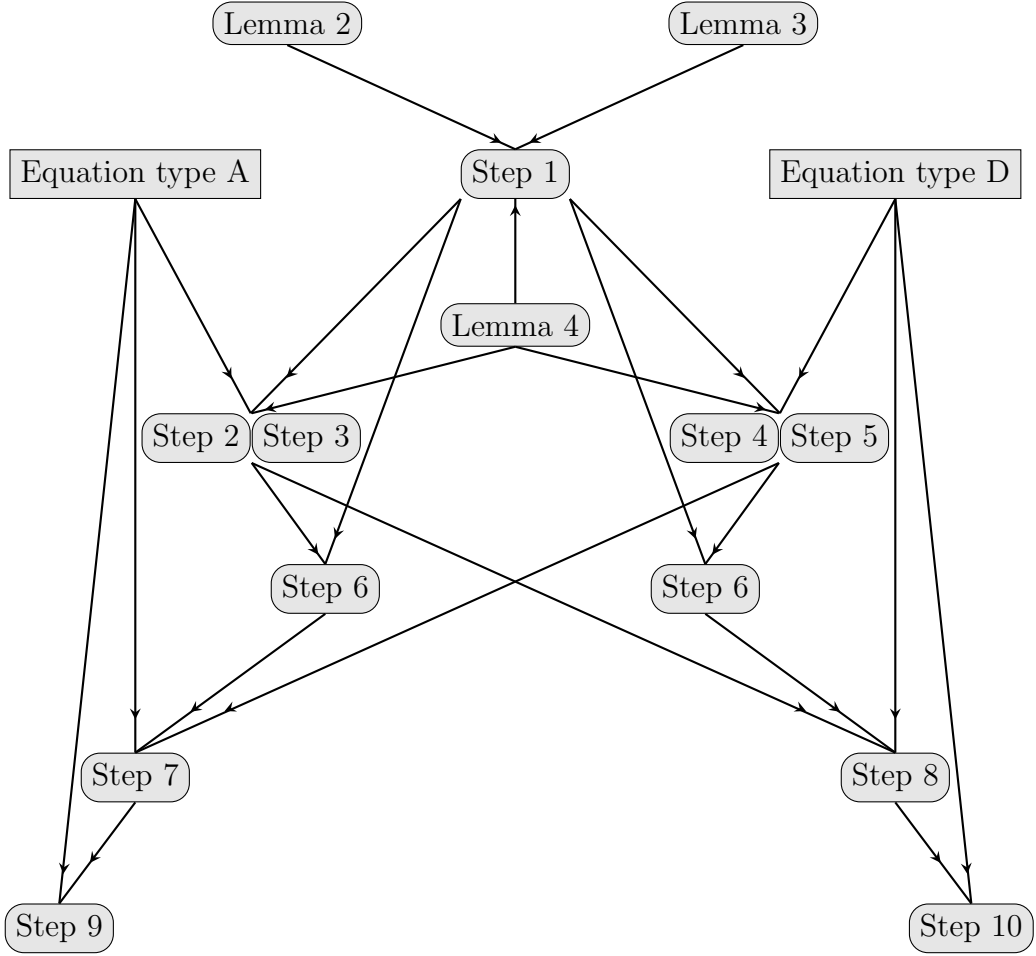

The sequence of steps $1$ through $10$ is interrelated and the dependence of a single step with the remaining ones
is schematically depicted in \Figref{steps}. Some comments are also required at this stage. For instance, in step $7$ we have
obtained formula (\ref{SXY}) expressing the scalar product $\mathcal{S}_n$ as a linear combination of auxiliary functions $V$. 
The coefficients of this linear combination is given by the function $\mathcal{K}_{\lambda \bar{\lambda}}$ defined in (\ref{KLbL}). Alternatively, in step $8$ we have also obtained
a similar formula expressing $\mathcal{S}_n$ in terms of the same functions $V$. This formula is given by (\ref{bSXY}) 
and it is expressed in terms of coefficients $\bar{\mathcal{K}}_{\lambda \bar{\lambda}}$ defined in (\ref{bKLbL}). 
Moreover, steps $9$ and $10$ tell us that the function $V$ satisfy the same system of equations (\ref{typeA}, \ref{typeD})
subjected to the mappings $L \mapsto L-1$, $n \mapsto n-1$ and $\mu_i \mapsto \mu_{i+1}$. This conclusion has been obtained
by inserting the expression (\ref{SXY}) back into the system (\ref{typeA}, \ref{typeD}). Alternatively, this same conclusion
can be obtained by replacing (\ref{bSXY}) into (\ref{typeA}, \ref{typeD}). In its turn, the function $V$ is essentially a
polynomial of order $2(L-1)$ as discussed in step $1$. Polynomial solutions of this type of linear functional equations are
unique as demonstrated in \cite{Galleas_2012} and this implies that $V$ is essentially the scalar product $\mathcal{S}_{n-1}$ on a lattice
of length $L-1$. Therefore, we can use (\ref{SXY}) or (\ref{bSXY}) to obtain $\mathcal{S}_n$ recursively starting
from the solution of (\ref{typeA}, \ref{typeD}) for $n =1$. In particular, the iteration procedure described by
(\ref{SXY}) and (\ref{bSXY}) seems to be naturally realized by multiple contour integrals. 

\vspace{0.3cm}
\noindent \textit{Multiple contour integrals.} In order to find an explicit expression realizing the iteration procedure described
by (\ref{SXY}), we assume the following ansatz for the scalar product $\mathcal{S}_n$,
\< \label{ansatz}
\mathcal{S}_n (\gen{X}^{1,n} | \gen{Y}^{1,n} ) = \oint \dots \oint   \frac{H(w_1, w_2, \dots, w_n | \bar{w}_1, \bar{w}_2, \dots, \bar{w}_n)}{\prod_{i,j=1}^n b(w_i - \lambda_j^{C}) b(\bar{w}_i - \lambda_j^{B})}  \prod_{k=1}^n \frac{\dd w_k}{2 \ii \pi} \frac{\dd \bar{w}_k}{2 \ii \pi} \; ,
\> 
where $H$ is a function to be determined. Also, the integration contour for each variable $w_i$ in the formula (\ref{ansatz})
must only enclose all variables in the set $\gen{X}^{1,n}$. Analogously, each variable $\bar{w}_i$ is integrated along a contour containing
only all variables in the set $\gen{Y}^{1,n}$. It is important to remark here that formula (\ref{ansatz}) concentrates the dependence with
all variables $\lambda_j^{\mathcal{B}}$ and $\lambda_j^{\mathcal{C}}$ in the denominator of its integrand. In this way, we are left with an
integral representation for $\mathcal{S}_n$ if we are able to exhibit a function $H$ implementing the iteration described by (\ref{SXY}).
This approach has been already discussed in \cite{Galleas_SCP} and here we shall restrict ourselves to presenting only the main 
points of this procedure. The substitution of (\ref{ansatz}) in (\ref{SXY}) then yields the following recursion relation for $H$,
\<
\label{HH}
&& H(w_1 , \dots , w_n | \bar{w}_1 , \dots , \bar{w}_n ) = \frac{b(2 w_1)}{a(2 w_1)} \frac{b(2 \bar{w}_1)}{a(2 \bar{w}_1)} R_1^{-1} \; \gen{det} (\gen{\Phi}^{(1)}) \bar{H} (w_2 , \dots , w_n | \bar{w}_2 , \dots , \bar{w}_n )  \nonumber \\
&& \left[ b(2 \mu_1 - 2\gamma) b(2 \mu_1 + \gamma) \prod_{\epsilon = \pm 1} b(h + \epsilon \mu_1) \prod_{j=2}^L a(\mu_1 - \epsilon \mu_j) a(\epsilon \mu_j -\mu_1) \right]^{-1} \nonumber \\
&& \times \prod_{k=2}^n b(w_k - w_1)^2 b(\bar{w}_k - \bar{w}_1)^2 a(w_k - \mu_1) b(w_k + \mu_1) a(\bar{w}_k - \mu_1) b(\bar{w}_k + \mu_1) \; .
\>
The function $\bar{H}$ in (\ref{HH}) consists of $H$, up to an overall multiplicative constant factor, under
the mappings $L \mapsto L-1$, $n \mapsto n-1$ and $\mu_i \mapsto \mu_{i+1}$. The function $R_1$ reads
\< 
\label{R1}
R_1 &\coloneqq& \prod_{k=1}^n \frac{a(w_k - \mu_1)}{b(w_k - \mu_1)} \frac{b(w_k + \mu_1)}{a(w_k + \mu_1)} - \prod_{k=1}^n \frac{a(\bar{w}_k - \mu_1)}{b(\bar{w}_k - \mu_1)} \frac{b(\bar{w}_k + \mu_1)}{a(\bar{w}_k + \mu_1)} \; ,
\>
while the $2 \times 2$ matrix $\gen{\Phi}^{(1)}$ has entries given by
\<
\label{Phi1}
\gen{\Phi}^{(1)}_{lm} &\coloneqq& \frac{b(h + s_1^m)}{\omega_l (s_1^m - (-1)^{l-1} \mu_1)} \prod_{k=1}^L a(s_1^m - \mu_k) a(s_1^m + \mu_k) \prod_{j=2}^n \frac{a(s_j^m - s_1^m)}{b(s_j^m - s_1^m)} \frac{b(s_j^m + s_1^m)}{a(s_j^m + s_1^m)} \nonumber \\
&& - \frac{a(s_1^m - h)}{\bar{\omega}_l (s_1^m + (-1)^{l-1} \mu_1)} \prod_{k=1}^L b(s_1^m - \mu_k) b(s_1^m + \mu_k) \prod_{j=2}^n \frac{a(s_1^m - s_j^m)}{b(s_1^m - s_j^m)} \frac{a(s_1^m + s_1^m + \gamma)}{b(s_1^m + s_j^m + \gamma)}  \nonumber \\
\> 
for variables $s_i^m \coloneqq \delta_{1m} w_i + \delta_{2m} \bar{w}_i$. The iteration of (\ref{HH}) starting with the solution
for the case $n =1$ obtained in the \Appref{sec:Ln1} yields formula (\ref{sol}). In fact, equations (\ref{typeA})
and (\ref{typeD}) are homogeneous and they can only determine the scalar product $\mathcal{S}_n$ up to an overall multiplicative
constant. This constant is then fixed by the asymptotic behavior (\ref{asymp}). This completes our proof.
\end{proof}

\begin{remark}
The relation (\ref{bSXY}) could also have been used instead of (\ref{SXY}) to produce a multiple contour
integral representation for the scalar product $\mathcal{S}_n$. However, one can notice that $\bar{\mathcal{K}}_{\lambda \bar{\lambda}}$ 
corresponds essentially to $\mathcal{K}_{\lambda \bar{\lambda}}$ under the map $\mu_i \mapsto - \mu_i$. This property is only
violated by the constant factors $\mathcal{C}_0$ and $\bar{\mathcal{C}}_0$. Since the overall constant factor needs to be fixed
by the asymptotic behavior (\ref{asymp}), this issue will not be relevant. In this way, the use of (\ref{bSXY}) produces
the same representation (\ref{sol}) under the map $\mu_i \mapsto - \mu_i$. This also implies that the transformation
$\mu_i \mapsto - \mu_i$ is a discrete symmetry of the scalar product $\mathcal{S}_n$.
\end{remark}

\section{Concluding remarks}
\label{sec:CONCLUSION}

This work was devoted to the analysis of scalar products of Bethe vectors associated
with the $XXZ$ spin chain with open boundary conditions. In particular, our study is based
on the description of scalar products by means of functional equations.

The desired scalar products are shown to satisfy a system of functional relations originated
from the reflection algebra. This approach for scalar products of Bethe vectors was firstly
proposed in \cite{Galleas_SCP} using the Yang-Baxter algebra as the source of functional equations.
The feasibility of using the reflection algebra in a similar way has been put forward recently
in \cite{Galleas_Lamers_2014}. Although the quantity computed in the present paper is different
from the one considered in \cite{Galleas_Lamers_2014}, the corresponding functional equations
are derived from the same reflection algebra relation. More precisely, the higher-degree relation
(\ref{AB}) is the same one employed in \cite{Galleas_Lamers_2014}. Here, however, we need to use a
total of four reflection algebra relations, namely (\ref{AB}), (\ref{CA}), (\ref{DB}) and (\ref{CD}), and
the main difference compared to \cite{Galleas_Lamers_2014} is the choice of map $\gen{\pi}_n$.
In the present work, two different realizations $\gen{\pi}_n^{\mathcal{B}, \mathcal{C}}$
was required in order to describe the desired scalar product.  

The methodology used to solve this type of functional equations has been introduced in \cite{Galleas_2011}
for an equation describing the partition function of a Solid-on-Solid model with domain wall boundaries.
The key point of this method is that the location of special zeroes of the partition function or scalar products
induces a separation of variables. This separation of variables can also be regarded as a recurrence relation allowing
us to readily obtain the desired solution. Here the solution of our system of equations is obtained in a compact form.
It is given by a multiple contour integral as stated in Theorem \ref{off-shell}. Interestingly, this type of integral formula seems to be
a general structure accommodating quantities such as scalar products and partition functions associated with
integrable vertex models. For instance, similar integral formulae have appeared in \cite{deGier_Galleas_2011, Galleas_2012, Galleas_2013, Galleas_Lamers_2014}
for partition functions with domain-wall boundaries and in \cite{deGier_Galleas_2011, Galleas_SCP} for scalar products.

Functional equations such as (\ref{typeA}) and (\ref{typeD}) also to encode a family of partial differential equations
describing certain quantities of interest. This has been shown in \cite{Galleas_2011, Galleas_proc, Galleas_2014, Galleas_Lamers_2014}
by recasting our functional relations in an operatorial form. We have not pursued that direction in the present paper but we hope
to report on that in a future publication.

\section{Acknowledgements}
\label{sec:ACK}
The author thanks G. Arutyunov for discussions and comments on this manuscript.
The work of W.G. was supported by the German Science Foundation (DFG) under the Collaborative
Research Center (SFB) 676 Particles, Strings and the Early Universe.

\appendix

\section{$\mathcal{S}_n$ as a doubly symmetric function}
\label{sec:SYM}

This appendix is concerned with the proof of Lemma \ref{symmetry}. This lemma states that the function 
$\mathcal{S}_n ( \lambda_1^{\mathcal{C}}, \dots , \lambda_n^{\mathcal{C}} | \lambda_1^{\mathcal{B}}, \dots , \lambda_n^{\mathcal{B}})$
is invariant under the permutation of variables $\lambda_i^{\mathcal{C}} \leftrightarrow \lambda_j^{\mathcal{C}}$
and $\lambda_k^{\mathcal{B}} \leftrightarrow \lambda_l^{\mathcal{B}}$ independently. We thus say that $\mathcal{S}_n$ is a doubly symmetric function. 
The definition of $\mathcal{S}_n$, as given in (\ref{scp}), and the commutation
rules $\left[ \mathcal{B}(\lambda_1) , \mathcal{B}(\lambda_2) \right] = \left[ \mathcal{C}(\lambda_1) , \mathcal{C}(\lambda_2) \right] = 0$
encoded in the reflection algebra, already requires this property to be fulfilled. However, once we assume the scalar product
to satisfy equations (\ref{typeA}) and (\ref{typeD}), we would like their solution to naturally inherit this symmetry property.
That is precisely what we intend to show here and our proof will follow the same arguments employed in \cite{Galleas_SCP, Galleas_Lamers_2014}. 

Firstly, we assume $\mathcal{S}_n$ to be an analytic function and inspect (\ref{typeA}) in the limit 
$\lambda_0 \to \lambda_k^{\mathcal{C}}$. From (\ref{coeffA}) we can see that $M_0$ and $N_{\lambda_k^{\mathcal{C}}}^{(\mathcal{C})}$
are the only singular coefficients in that limit. Thus the integration of (\ref{typeA}) with respect to the variable
$\lambda_0$ around a contour enclosing solely the variable $\lambda_k^{\mathcal{C}}$ yields the following
identity,
\<
&& \gen{Res}_{\lambda_0 = \lambda_k^{\mathcal{C}}} \left( M_0 \right) \; \mathcal{S}_n (\lambda_1^{\mathcal{C}} , \dots , \lambda_{k-1}^{\mathcal{C}}, \lambda_{k}^{\mathcal{C}}, \lambda_{k+1}^{\mathcal{C}}, \dots , \lambda_{n}^{\mathcal{C}} \; | \; \vec{Y}^{1,n} ) \nonumber \\
&& \qquad \qquad + \; \gen{Res}_{\lambda_0 = \lambda_k^{\mathcal{C}}} \left( N_{\lambda_k^{\mathcal{C}}}^{(\mathcal{C})} \right) \; \mathcal{S}_n (\lambda_{k}^{\mathcal{C}} , \lambda_1^{\mathcal{C}} , \dots , \lambda_{k-1}^{\mathcal{C}}, \lambda_{k+1}^{\mathcal{C}}, \dots , \lambda_{n}^{\mathcal{C}} \; | \; \vec{Y}^{1,n} ) = 0 \; . \nonumber \\
\>
Moreover, from formulae (\ref{coeffA}) we can readily verify the property $\gen{Res}_{\lambda_0 = \lambda_k^{\mathcal{C}}} \left( M_0 \right) = - \gen{Res}_{\lambda_0 = \lambda_k^{\mathcal{C}}} \left( N_{\lambda_k^{\mathcal{C}}}^{(\mathcal{C})} \right)$
which allows us to conclude that 
\[
\label{cyclicC}
\mathcal{S}_n (\lambda_1^{\mathcal{C}} , \dots , \lambda_{k-1}^{\mathcal{C}}, \lambda_{k}^{\mathcal{C}}, \lambda_{k+1}^{\mathcal{C}}, \dots , \lambda_{n}^{\mathcal{C}} \; | \; \vec{Y}^{1,n} )  = \mathcal{S}_n (\lambda_{k}^{\mathcal{C}} , \lambda_1^{\mathcal{C}} , \dots , \lambda_{k-1}^{\mathcal{C}}, \lambda_{k+1}^{\mathcal{C}}, \dots , \lambda_{n}^{\mathcal{C}} \; | \; \vec{Y}^{1,n} ) \; .
\]
Next we integrate Eq. (\ref{typeA}) over the variable $\lambda_0$ around a contour containing
solely the variable $\lambda_k^{\mathcal{B}}$. Similarly to (\ref{cyclicC}), this procedure allows us to conclude that
\[
\label{cyclicB}
\mathcal{S}_n (\vec{X}^{1,n} \; | \; \lambda_1^{\mathcal{B}} , \dots , \lambda_{k-1}^{\mathcal{B}}, \lambda_{k}^{\mathcal{B}}, \lambda_{k+1}^{\mathcal{B}}, \dots , \lambda_{n}^{\mathcal{B}}  ) = \mathcal{S}_n (\vec{X}^{1,n} \; | \;   \lambda_{k}^{\mathcal{B}} , \lambda_1^{\mathcal{B}} , \dots , \lambda_{k-1}^{\mathcal{B}}, \lambda_{k+1}^{\mathcal{B}}, \dots , \lambda_{n}^{\mathcal{B}}  ) \; .
\]
The relations (\ref{cyclicC}) and (\ref{cyclicB}) tell us that $\mathcal{S}_n$ is invariant under cyclic permutations of
$\lambda_1^{\mathcal{C},\mathcal{B}} , \dots , \lambda_k^{\mathcal{C},\mathcal{B}}$ for any $k$ in the interval $1 \leq k \leq n$. 
Then, since the symmetric group of order $n$ is generated by any cycle of length $n$, in addition to any single transposition, we can conclude
that $\mathcal{S}_n$ is invariant under the action of $\gen{Sym} \left( \gen{X}^{1,n} \right) \otimes \gen{Sym} \left( \gen{Y}^{1,n} \right)$. 
This concludes our proof. Although here we have considered only Eq. (\ref{typeA}), it is worth remarking that the same results
could have been obtained from a similar analysis of Eq. (\ref{typeD}).

\section{Polynomial structure}
\label{sec:POL}

In order to prove Lemma \ref{polynomial} it is convenient to introduce the following
extra definitions. 

\begin{mydef}
Let $\mathbb{C}[x]$ denote the polynomial ring in the variable $x$. Then recall the definitions
$y_i \coloneqq e^{2 \mu_i}$, $q \coloneqq e^{\gamma}$ and $t \coloneqq e^{h}$, and let
$\mathbb{C}[y_1^{\pm 1} , y_2^{\pm 1} , \dots , y_n^{\pm 1}, q^{\pm 1} , t^{\pm 1}]$ denote the space
of regular functions in the variables $y_1, \dots , y_n, q, t$. The space of polynomials in the variable
$x$ with coefficients in $\mathbb{C}[y_1^{\pm 1} , y_2^{\pm 1} , \dots , y_n^{\pm 1}, q^{\pm 1} , t^{\pm 1}]$
is then denoted by $\mathbb{C}[y_1^{\pm 1} , y_2^{\pm 1} , \dots , y_n^{\pm 1}, q^{\pm 1} , t^{\pm 1}][x]$.
Lastly we define $\mathbb{P}_m [x] \subseteq \mathbb{C}[y_1^{\pm 1} , y_2^{\pm 1} , \dots , y_n^{\pm 1}, q^{\pm 1} , t^{\pm 1}][x]$  
as the subspace formed by polynomials of degree $m$ in the variable $x$.
\end{mydef}

As we can see from (\ref{scp}), the dependence of $\mathcal{S}_n$ with the set of variables $\gen{X}^{1,n}$ and $\gen{Y}^{1,n}$
is only due to the operators $\mathcal{C}$ and $\mathcal{B}$ respectively. More precisely, the whole dependence with the 
variable $\lambda_i^{\mathcal{C}}$ comes from the operator $\mathcal{C}(\lambda_i^{\mathcal{C}})$ entering the definition (\ref{scp}), while
the dependence with $\lambda_i^{\mathcal{B}}$ is characterized by the operator $\mathcal{B}(\lambda_i^{\mathcal{B}})$. 
Therefore, Lemma \ref{polynomial} follows from the functional dependence of these operators with their variables. Using the notation 
$x \coloneqq e^{2 \lambda}$, it is then sufficient to show that
\<
\label{BC}
\mathcal{B}(x) = x^{-L} f_{\mathcal{B}}^{2 L}(x) \qquad \mbox{and} \qquad \mathcal{C}(x) = x^{-L} f_{\mathcal{C}}^{2 L}(x) \; ,
\>
where $f_{\mathcal{B}, \mathcal{C}}^{2 L}(x) \in \mathbb{P}_{2L}[x] \otimes \gen{End}\left( (\mathbb{C}^2)^{\otimes L} \right)$. 
The proof of (\ref{BC}) for the operator $\mathcal{B}$ can be found in \cite{Galleas_Lamers_2014} and we shall not repeat it here.
On the other hand, the functional dependence (\ref{BC}) for the operator $\mathcal{C}$ is a direct consequence of the proof presented in 
\cite{Galleas_Lamers_2014} and the fact that $\mathcal{C}(\lambda)$ corresponds to $\mathcal{B}(\lambda)^t$ under the map $\mu_i \to -\mu_i$ as demonstrated in 
\cite{Galleas_2008}. In what follows we shall describe the main points leading to the desired property.

\begin{mydef}
Let $A, B, C, D, \bar{A}, \bar{B}, \bar{C}, \bar{D} \colon \mathbb{C} \to \gen{End} \left( (\mathbb{C}^2)^{\otimes L} \right)$
be operator-valued functions defined as
\<
\label{mono}
\mathop{\overleftarrow\prod}\limits_{1 \leq j \leq L} \mathcal{R}_{0 j}(\lambda - \mu_j) &\eqqcolon& 
\left(  \begin{matrix}
A(\lambda) & B(\lambda) \cr
C(\lambda) & D(\lambda) \end{matrix} \right) \nonumber \\
\mathop{\overrightarrow\prod}\limits_{1 \leq j \leq L} \mathcal{R}_{0 j}(\lambda - \mu_j) &\eqqcolon& 
\left(  \begin{matrix}
\bar{A}(\lambda) & \bar{B}(\lambda) \cr
\bar{C}(\lambda) & \bar{D}(\lambda) \end{matrix} \right) \; .
\>
\end{mydef}
Taking into account (\ref{ABCD}) and (\ref{mono}) we can then write
\<
\label{BBCC}
\mathcal{B}(\lambda) &=& \sinh{(h + \lambda)} A(\lambda) \bar{B}(\lambda) + \sinh{(h - \lambda)} B(\lambda) \bar{D}(\lambda) \nonumber \\
\mathcal{C}(\lambda) &=& \sinh{(h + \lambda)} C(\lambda) \bar{A}(\lambda) + \sinh{(h - \lambda)} D(\lambda) \bar{C}(\lambda) \; .
\>
Moreover, in \cite{Galleas_2008} we have demonstrated that the operators defined in (\ref{mono}) satisfy the following properties:
\begin{align}
\label{Tmono}
A(\lambda)^t =& \left. \bar{A}(\lambda) \right|_{\mu_i \to -\mu_i} & B(\lambda)^t &= \left. \bar{C}(\lambda) \right|_{\mu_i \to -\mu_i} \cr
C(\lambda)^t =& \left. \bar{B}(\lambda) \right|_{\mu_i \to -\mu_i} & D(\lambda)^t &= \left. \bar{D}(\lambda) \right|_{\mu_i \to -\mu_i} \; ,
\end{align}
as a consequence of the crossing symmetry exhibited by the $\mathcal{U}_q [\widehat{\alg{sl}}(2)]$ 
invariant $\mathcal{R}$-matrix employed in the present work. Using (\ref{BBCC}) and (\ref{Tmono}) we can
readily show that the relation $\mathcal{C}(\lambda) = \left. \mathcal{B}(\lambda)^t \right|_{\mu_i \to - \mu_i}$ holds,
which concludes our proof.

\section{Asymptotic behavior}
\label{sec:ASYMP}

The full determination of the scalar product $\mathcal{S}_n$ as solution of the system of functional
equations (\ref{typeA}, \ref{typeD}) requires we are able to evaluate $\mathcal{S}_n$ for a particular value
of its variables. This is due to the fact that (\ref{typeA}) and (\ref{typeD}) are homogeneous equations and, therefore,
they are only able to determine the solution up to an overall multiplicative factor independent of the variables
$\lambda_i^{\mathcal{B} , \mathcal{C} }$. Here we find that the points $x_i^{\mathcal{B} , \mathcal{C} } \coloneqq e^{2 \lambda_i^{\mathcal{B} , \mathcal{C} }} \to \infty$
are suitable for that purpose, and the first step to compute $\mathcal{S}_n$ in that limit is the analysis of the behavior
of $\mathcal{B}(x)$ and $\mathcal{C}(x)$ for $x \to \infty$.

In \cite{Galleas_Lamers_2014} we have shown that the operator $\mathcal{B}(x)$ exhibits the asymptotic behavior
\[ \label{BX}
\mathcal{B}(x) \sim \frac{q^{L-1}}{2^{2L+1}} (q - q^{-1}) x^L \sum_{j=1}^L \left( P_j^{+} + P_j^{-} \right) \qquad \mbox{for} \; x \to \infty \; ,
\]
where 
\[
\label{PJ}
P_j^{\pm} \coloneqq \pm ( t y_j^{\frac{1}{2}} )^{\pm 1} \mbox{id}^{\otimes(j-1)} \otimes X^{-} \otimes \left( K^{\pm 1} \right)^{\otimes (L-j)} \; .
\]
Here we are using the conventions $q \coloneqq e^{\gamma}$, $t \coloneqq e^{h}$, $y_i \coloneqq e^{2 \mu_i}$
and, due to the property $\mathcal{C}(\lambda) = \left. \mathcal{B}(\lambda)^t \right|_{\mu_i \to - \mu_i}$
shown in \Appref{sec:POL}, we readily find that
\[ \label{CX}
\mathcal{C}(x) \sim \frac{q^{L-1}}{2^{2L+1}} (q - q^{-1}) x^L \sum_{j=1}^L \left( \bar{P}_j^{+} + \bar{P}_j^{-} \right) 
\]
in the limit $x \to \infty$. The operators $\bar{P}_j^{\pm}$ appearing in (\ref{CX}) are in their turn defined as
\[
\label{bPJ}
\bar{P}_j^{\pm} \coloneqq \pm ( t y_j^{-\frac{1}{2}} )^{\pm 1} \mbox{id}^{\otimes(j-1)} \otimes X^{+} \otimes \left( K^{\pm 1} \right)^{\otimes (L-j)} \; .
\]
The symbol $\mbox{id}$ in (\ref{PJ}) and (\ref{bPJ}) stands for the identity matrix in $\gen{End}(\mathbb{C}^2)$,
while $X^{\pm}$ and $K$ denote the generators of the $\mathcal{U}_q[ \alg{sl}(2) ]$ algebra in the fundamental
representation as described in \cite{Galleas_Lamers_2014}.
As a consequence of the $\mathcal{U}_q[ \alg{sl}(2) ]$ algebra satisfied by $X^{\pm}$ and $K$, one can show that the
operators $P_j^{\pm}$ are subjected to the following commutation rules:
\begin{align}
\label{Ppm}
P_i^{\pm} P_j^{\pm} &= q^{\mp2}  P_j^{\pm} P_i^{\pm} \; , \qquad P_i^{\pm} P_j^{\mp} = q^{\mp 2}  P_j^{\mp} P_i^{\pm} \; , && \text{for } \; i<j \; , \nonumber \\
P_i^{s} P_i^{s'} & =0  \; && \text{for } \;  s,s' \in \{\pm\} \; . 
\end{align}
The operators $\bar{P}_j^{\pm}$ obey similar relations, namely
\begin{align}
\label{bPpm}
\bar{P}_i^{\pm} \bar{P}_j^{\pm} &= q^{\pm2}  \bar{P}_j^{\pm} \bar{P}_i^{\pm} \; , \qquad \bar{P}_i^{\pm} \bar{P}_j^{\mp} = q^{\pm 2}  \bar{P}_j^{\mp} \bar{P}_i^{\pm} \; , && \text{for } \; i<j \; , \nonumber \\
\bar{P}_i^{s} \bar{P}_i^{s'} & =0  \; && \text{for } \;  s,s' \in \{\pm\} \; . 
\end{align}

For latter convenience we also define the operators
\<
\label{QQ}
Q_j^{(m)} \coloneqq P_j^{+} q^{-2m} + P_j^{-} q^{2m} \qquad  \qquad \bar{Q}_j^{(m)} \coloneqq \bar{P}_j^{+} q^{2m} + \bar{P}_j^{-} q^{-2m} \; ,
\>
in such a way that $Q_j^{(m)}$ and $\bar{Q}_j^{(m)}$ satisfy the following set of commutation
relations,
\begin{align}
\label{QbQ}
Q_i^{(m)} Q_j^{(n)} &= Q_j^{(n)} Q_i^{(m+1)}  & \bar{Q}_i^{(m)} \bar{Q}_j^{(n)} &= \bar{Q}_j^{(n)} \bar{Q}_i^{(m+1)}  \qquad  \text{for } \; i<j  \nonumber \\
Q_i^{(m)} Q_i^{(n)} &= 0 &  \bar{Q}_i^{(m)} \bar{Q}_i^{(n)} &= 0 \; ,
\end{align}
as a consequence of (\ref{Ppm}) and (\ref{bPpm}), 

Now the expressions (\ref{BX}), (\ref{CX}) and (\ref{QQ}) allow us to write
\<
\label{Pasymp}
\bar{\mathcal{S}}_n &\sim& \frac{q^{2n (L-1)}}{2^{2n(2L+1)}} (q - q^{-1})^{2n} \bra{0} \bar{\mathfrak{J}} \mathfrak{J} \ket{0} \prod_{i=1}^{n} \left( x_i^{\mathcal{B}} x_i^{\mathcal{C}} \right)^{2L} 
\>
in the limit $x_i^{\mathcal{B} , \mathcal{C} } \to \infty$ for all $i \in \{1, 2, \dots , n\}$. Formula (\ref{Pasymp}) is given in terms of the operators $\mathfrak{J}$ and 
$\bar{\mathfrak{J}}$ defined respectively as
\<
\mathfrak{J} \coloneqq \sum_{r_1 = 1}^L \dots \sum_{r_n = 1}^L \mathop{\overrightarrow\prod}\limits_{1 \le s \le n } Q_{r_s}^{(0)} 
\>
and 
\<
\bar{\mathfrak{J}} \coloneqq \sum_{\bar{r}_1 = 1}^L \dots \sum_{\bar{r}_n = 1}^L \mathop{\overrightarrow\prod}\limits_{1 \le s \le n } \bar{Q}_{\bar{r}_s}^{(0)} \; .
\>
Next we use the methodology described in \cite{Galleas_Lamers_2014} to find that $\mathfrak{J}$ and $\bar{\mathfrak{J}}$
can be rewritten as
\<
\label{JbJ}
\mathfrak{J} = \sum_{1 \leq r_i < r_{i+1}  \leq L} \mathop{\overleftarrow\prod}\limits_{1 \le s \le n } \left( \sum_{l=0}^{n-s} Q_{r_s}^{(l)} \right)
\qquad \mbox{and} \qquad \bar{\mathfrak{J}} = \sum_{1 \leq \bar{r}_i < \bar{r}_{i+1} \leq L} \mathop{\overleftarrow\prod}\limits_{1 \le s \le n } \left( \sum_{l=0}^{n-s} \bar{Q}_{\bar{r}_s}^{(l)} \right) \; . \nonumber \\
\>
The above expressions are more manageable and the definitions (\ref{QQ}) allow us to write the summations in (\ref{JbJ}) as
\<
\label{QbQ}
\sum_{l=0}^{n-s} Q_{r_s}^{(l)} &=& P_{r_s}^{+} \Delta_{n-s}^{-} + P_{r_s}^{-} \Delta_{n-s}^{+} \nonumber \\
\sum_{l=0}^{n-s} \bar{Q}_{\bar{r}_s}^{(l)} &=& \bar{P}_{\bar{r}_s}^{+} \Delta_{n-s}^{+} + \bar{P}_{\bar{r}_s}^{-} \Delta_{n-s}^{-} \; ,
\>
where $\Delta_l^{\pm}  \coloneqq  \sum_{m=0}^{l} q^{\pm 2m}$.

In order to find an explicit expression for the leading coefficient (\ref{Pasymp}) we still need to evaluate 
the quantity $\bra{0} \bar{\mathfrak{J}} \mathfrak{J} \ket{0}$. This task can be readily performed with the help
of (\ref{PJ}), (\ref{bPJ}), (\ref{JbJ}) and (\ref{QbQ}). By doing so we find 
\<
\label{OJJO}
\bra{0} \bar{\mathfrak{J}} \mathfrak{J} \ket{0} =  \sum_{1 \leq r_i < r_{i+1}  \leq L} \prod_{s=1}^{n} \prod_{\epsilon \in \{ \pm \}} \left( t y_{r_s}^{- \epsilon \frac{1}{2}} q^{L - r_s} \Delta_{n-s}^{\epsilon} - t^{-1} y_{r_s}^{\epsilon \frac{1}{2}} q^{r_s - L} \Delta_{n-s}^{-\epsilon} \right) \; , \nonumber \\
\>
which leads us directly to the statement of Lemma \ref{asymptotic}.

\section{Special zeroes}
\label{sec:ZEROES}

The resolution of the system of Eqs. (\ref{typeA}, \ref{typeD}) follows a sequence of systematic
steps relying heavily on Lemma \ref{zeroes}. This lemma gives us the location of special zeroes
of the scalar product $\mathcal{S}_n$ inducing the separation of variables expressed by the 
relations (\ref{SXY}) and (\ref{bSXY}). More precisely, we have considered three pairs of
points, namely $( \lambda_1^{\mathcal{C}} , \lambda_2^{\mathcal{C}} ) = ( \mu_1 - \gamma , \mu_1 )$,
$( \lambda_1^{\mathcal{C}} , \lambda_2^{\mathcal{C}} ) = ( \mu_1 - \gamma , -\mu_1 - \gamma )$
and $( \lambda_1^{\mathcal{C}} , \lambda_2^{\mathcal{C}} ) = ( -\mu_1 , \mu_1 )$,  for which the specialization
of $\mathcal{S}_n$ vanishes. The same property with variables $\lambda_i^{\mathcal{C}}$ exchanged by 
$\lambda_i^{\mathcal{B}}$ also holds.

The location of these special points is encoded in our system of equations and this is what we intend to unveil here.
Also, we shall only present a detailed proof of this property for $( \lambda_1^{\mathcal{C}} , \lambda_2^{\mathcal{C}} ) = ( \mu_1 - \gamma , \mu_1 )$
as the remaining cases can be obtained along the same lines. For illustrative purposes we shall firstly discuss the 
cases $n =2,3$, and only then describe the general case. 

\paragraph{Case $n=2$.} We consider (\ref{typeA}) under the specializations $\lambda_1^{\mathcal{C}} = \mu_1 - \gamma$
and $\lambda_2^{\mathcal{C}} = \mu_1$. Under this particular specialization one can verify that the coefficients
$N_{\lambda_1^{\mathcal{C}}}^{(\mathcal{C})}$ and $N_{\lambda_2^{\mathcal{C}}}^{(\mathcal{C})}$ vanish. We are then left with
the equation
\<
\label{A2}
\left. M_0 \right|_{1,2} \mathcal{S}_2 (\mu_1 - \gamma, \mu_1 | \lambda_1^{\mathcal{B}} , \lambda_2^{\mathcal{B}} ) &+& \left. N_{\lambda_1^{\mathcal{B}}}^{(\mathcal{B})} \right|_{1,2} \mathcal{S}_2 (\mu_1 - \gamma, \mu_1 | \lambda_0 , \lambda_2^{\mathcal{B}} ) \nonumber \\
&+& \left. N_{\lambda_2^{\mathcal{B}}}^{(\mathcal{B})} \right|_{1,2} \mathcal{S}_2 (\mu_1 - \gamma, \mu_1 | \lambda_0 , \lambda_1^{\mathcal{B}} ) = 0 \; ,
\>
where $\left. \right|_{1,2}$ denotes the prescribed specialization of $\lambda_1^{\mathcal{C}}$ and
$\lambda_2^{\mathcal{C}}$. Next we consider Remark \ref{rema} and apply the maps $\lambda_0 \leftrightarrow \lambda_1^{\mathcal{B}}$
and $\lambda_0 \leftrightarrow \lambda_2^{\mathcal{B}}$ taking into account Lemma \ref{symmetry}. This produces two additional equations with
the same structure of (\ref{A2}) but modified coefficients. We then define the coefficients
\[
N_j^{i} \coloneqq \begin{cases}
\left( \left. M_0 \right|_{1,2} \right)_{\lambda_0 \leftrightarrow \lambda_i^{\mathcal{B}} } \qquad \quad \mbox{for} \; j=0 \cr
\left( \left. N_{\lambda_j^{\mathcal{B}}}^{(\mathcal{B})} \right|_{1,2}  \right)_{\lambda_0 \leftrightarrow \lambda_i^{\mathcal{B}} } \qquad \mbox{for} \; j=1,2 \cr
\end{cases}
\]
in such a way that the resulting equations can be conveniently written as
\<
\left( \begin{matrix}
N_0^{(0)} & N_1^{(0)} & N_2^{(0)} \cr
N_1^{(1)} & N_0^{(1)} & N_2^{(1)} \cr
N_2^{(2)} & N_1^{(2)} & N_0^{(2)} \end{matrix} \right)
\left( \begin{matrix}
\mathcal{S}_2 (\mu_1 - \gamma, \mu_1 | \lambda_1^{\mathcal{B}} , \lambda_2^{\mathcal{B}} ) \cr
\mathcal{S}_2 (\mu_1 - \gamma, \mu_1 | \lambda_0 , \lambda_2^{\mathcal{B}} ) \cr
\mathcal{S}_2 (\mu_1 - \gamma, \mu_1 | \lambda_0 , \lambda_1^{\mathcal{B}} )
\end{matrix} \right) = 0 \; .
\>
Now one can verify that $\mbox{det}\left( N_j^i \right) \neq 0$ for generic values of the variables and this allows us to conclude that 
$\mathcal{S}_n (\mu_1 - \gamma, \mu_1 | \lambda_1^{\mathcal{B}} , \lambda_2^{\mathcal{B}} )$ vanishes.
It is worth remarking here that we have used only Eq. (\ref{typeA}) for the case $n=2$. For $n > 2$ we shall also need
(\ref{typeD}).

\paragraph{Case $n=3$.} In that case we look at (\ref{typeA}) and (\ref{typeD}) under the specializations
$\lambda_2^{\mathcal{C}} = \mu_1 - \gamma$ and $\lambda_3^{\mathcal{C}} = \mu_1$. The coefficients 
$N_{\lambda_2^{\mathcal{C}}}^{(\mathcal{C})}$, $N_{\lambda_3^{\mathcal{C}}}^{(\mathcal{C})}$,
$\widetilde{N}_{\lambda_2^{\mathcal{C}}}^{(\mathcal{C})}$ and $\widetilde{N}_{\lambda_3^{\mathcal{C}}}^{(\mathcal{C})}$
vanish for this particular specialization and we are left with the following equations:
\< \label{A3}
&& \left. M_0 \right|_{2,3} \mathcal{S}_3 ( \lambda_1^{\mathcal{C}} , \mu_1 - \gamma, \mu_1 | \lambda_1^{\mathcal{B}}, \lambda_2^{\mathcal{B}} , \lambda_3^{\mathcal{B}}  )
+ \left. N_1^{(\mathcal{C})} \right|_{2,3} \mathcal{S}_3 ( \lambda_0 , \mu_1 - \gamma, \mu_1 | \lambda_1^{\mathcal{B}}, \lambda_2^{\mathcal{B}} , \lambda_3^{\mathcal{B}}  ) \nonumber \\
&&+ \left. N_1^{(\mathcal{B})} \right|_{2,3} \mathcal{S}_3 ( \lambda_1^{\mathcal{C}} , \mu_1 - \gamma, \mu_1 | \lambda_0, \lambda_2^{\mathcal{B}} , \lambda_3^{\mathcal{B}}  )
+ \left. N_2^{(\mathcal{B})} \right|_{2,3} \mathcal{S}_3 ( \lambda_1^{\mathcal{C}} , \mu_1 - \gamma, \mu_1 | \lambda_0, \lambda_1^{\mathcal{B}} , \lambda_3^{\mathcal{B}}  ) \nonumber \\
&& + \left. N_3^{(\mathcal{B})} \right|_{2,3} \mathcal{S}_3 ( \lambda_1^{\mathcal{C}} , \mu_1 - \gamma, \mu_1 | \lambda_0, \lambda_1^{\mathcal{B}} , \lambda_2^{\mathcal{B}}  ) = 0
\>
\< \label{D3}
&& \left. \widetilde{M}_0 \right|_{2,3} \mathcal{S}_3 ( \lambda_1^{\mathcal{C}} , \mu_1 - \gamma, \mu_1 | \lambda_1^{\mathcal{B}}, \lambda_2^{\mathcal{B}} , \lambda_3^{\mathcal{B}}  )
+ \left. \widetilde{N}_1^{(\mathcal{C})} \right|_{2,3} \mathcal{S}_3 ( \lambda_0 , \mu_1 - \gamma, \mu_1 | \lambda_1^{\mathcal{B}}, \lambda_2^{\mathcal{B}} , \lambda_3^{\mathcal{B}}  ) \nonumber \\
&&+ \left. \widetilde{N}_1^{(\mathcal{B})} \right|_{2,3} \mathcal{S}_3 ( \lambda_1^{\mathcal{C}} , \mu_1 - \gamma, \mu_1 | \lambda_0, \lambda_2^{\mathcal{B}} , \lambda_3^{\mathcal{B}}  )
+ \left. \widetilde{N}_2^{(\mathcal{B})} \right|_{2,3} \mathcal{S}_3 ( \lambda_1^{\mathcal{C}} , \mu_1 - \gamma, \mu_1 | \lambda_0, \lambda_1^{\mathcal{B}} , \lambda_3^{\mathcal{B}}  ) \nonumber \\
&& + \left. \widetilde{N}_3^{(\mathcal{B})} \right|_{2,3} \mathcal{S}_3 ( \lambda_1^{\mathcal{C}} , \mu_1 - \gamma, \mu_1 | \lambda_0, \lambda_1^{\mathcal{B}} , \lambda_2^{\mathcal{B}}  ) = 0
\>
Similarly to the case $n=2$, in (\ref{A3}) and (\ref{D3}) we have used the symbol $\left. \right|_{2,3}$ to denote the 
aforementioned specializations of $\lambda_2^{\mathcal{C}}$ and $\lambda_3^{\mathcal{C}}$. We then eliminate the term
$\mathcal{S}_3 ( \lambda_0 , \mu_1 - \gamma, \mu_1 | \lambda_1^{\mathcal{B}}, \lambda_2^{\mathcal{B}} , \lambda_3^{\mathcal{B}}  ) $
from the system of equations formed by (\ref{A3}) and (\ref{D3}). In addition to that we also consider the maps 
$\lambda_0 \leftrightarrow \lambda_i^{\mathcal{B}}$ for $1 \leq i \leq 3$ to produce three extra equations. We are thus left
with a total of four linear equations which can be conveniently written as
\<
\label{CM4}
\left( \begin{matrix}
N_0^{0} & N_1^{0} & N_2^{0} & N_3^{0} \cr
N_1^{1} & N_0^{1} & N_2^{1} & N_3^{1} \cr
N_2^{2} & N_1^{2} & N_0^{2} & N_3^{2} \cr
N_3^{3} & N_1^{3} & N_2^{3} & N_0^{3} \end{matrix} \right)
\left( \begin{matrix}
\mathcal{S}_3 ( \lambda_1^{\mathcal{C}} , \mu_1 - \gamma, \mu_1 | \lambda_1^{\mathcal{B}}, \lambda_2^{\mathcal{B}} , \lambda_3^{\mathcal{B}}  ) \cr
\mathcal{S}_3 ( \lambda_1^{\mathcal{C}} , \mu_1 - \gamma, \mu_1 | \lambda_0, \lambda_2^{\mathcal{B}} , \lambda_3^{\mathcal{B}}  ) \cr
\mathcal{S}_3 ( \lambda_1^{\mathcal{C}} , \mu_1 - \gamma, \mu_1 | \lambda_0, \lambda_1^{\mathcal{B}} , \lambda_3^{\mathcal{B}}  ) \cr
\mathcal{S}_3 ( \lambda_1^{\mathcal{C}} , \mu_1 - \gamma, \mu_1 | \lambda_0, \lambda_1^{\mathcal{B}} , \lambda_2^{\mathcal{B}}  ) \end{matrix} \right) = 0 \; .
\>
The coefficients $N_j^i$ in (\ref{CM4}) are then defined as
\[
\label{CM4coeff}
N_j^i \coloneqq \begin{cases}
\left( \frac{\left. M_0 \right|_{2,3}}{\left. N_1^{(\mathcal{C})} \right|_{2,3}} - \frac{\widetilde{M}_0 |_{2,3}}{\left. \widetilde{N}_1^{(\mathcal{C})} \right|_{2,3}} \right)_{\lambda_0 \leftrightarrow \lambda_i^{\mathcal{B}}} \qquad \quad \mbox{for} \; j=0 \cr
\left( \frac{\left. N_j^{(\mathcal{B})} \right|_{2,3}}{\left. N_1^{(\mathcal{C})} \right|_{2,3}} - \frac{\left. \widetilde{N}_j^{(\mathcal{B})} \right|_{2,3}}{\left. \widetilde{N}_1^{(\mathcal{C})} \right|_{2,3}} \right)_{\lambda_0 \leftrightarrow \lambda_i^{\mathcal{B}}} \qquad \quad \mbox{for} \; j=1,2,3 \cr 
\end{cases} \; ,
\]
and along the lines used for $n=2$, one can also verify here that $\mbox{det} \left( N_j^i \right) \neq 0$ for generic
values of the variables. In this way we can conclude that $\mathcal{S}_3 ( \lambda_1^{\mathcal{C}} , \mu_1 - \gamma, \mu_1 | \lambda_1^{\mathcal{B}}, \lambda_2^{\mathcal{B}} , \lambda_3^{\mathcal{B}})=0$.

\paragraph{General case.} We consider $n$ levels of specializations of Eqs. (\ref{typeA}) and (\ref{typeD}),
and at each $k$-level we set $\lambda_{n-k}^{\mathcal{C}} = \mu_1 - k \gamma$ for $0 \leq k \leq n-1$, keeping the specializations at the 
previous levels. At the final level $k=n-1$ we are left with the following equations,
\<
\label{sys}
\left. M_0 \right|_{*} \mathcal{S}_n (\gen{X}^{*} | \gen{Y}^{1,n}) + \left. N_{\lambda_1^{\mathcal{C}}}^{(\mathcal{C})} \right|_{*} \mathcal{S}_n (\gen{X}^{**} | \gen{Y}^{1,n}) + \sum_{\lambda \in \gen{Y}^{1,n}}  \left. N_{\lambda}^{(\mathcal{B})} \right|_{*} \mathcal{S}_n (\gen{X}^{*} | \gen{Y}_{\lambda}^{0,n}) &=& 0 \nonumber \\
\left. \widetilde{M}_0 \right|_{*} \mathcal{S}_n (\gen{X}^{*} | \gen{Y}^{1,n}) + \left. \widetilde{N}_{\lambda_1^{\mathcal{C}}}^{(\mathcal{C})} \right|_{*} \mathcal{S}_n (\gen{X}^{**} | \gen{Y}^{1,n}) + \sum_{\lambda \in \gen{Y}^{1,n}}  \left. \widetilde{N}_{\lambda}^{(\mathcal{B})} \right|_{*} \mathcal{S}_n (\gen{X}^{*} | \gen{Y}_{\lambda}^{0,n}) &=& 0 \; , \nonumber \\
\>
where $\left. \right|_{*}$ denotes the aforementioned specializations and
\<
\gen{X}^{*} &\coloneqq& \{ \mu_1 - k \gamma \; | \; 0 \leq k \leq n-1 \} \nonumber \\
\gen{X}^{**} &\coloneqq& \{ \lambda_0  \} \cup \{ \mu_1 - k \gamma \; | \; 0 \leq k < n-1 \} \; . 
\>
Next we eliminate the term $\mathcal{S}_n (\gen{X}^{**} | \gen{Y}^{1,n})$ from the system of equations (\ref{sys})
and consider the $n$ additional equations obtained from the map $\lambda_0 \leftrightarrow \lambda_i^{\mathcal{B}}$ for $1 \leq i \leq n$.
The system of equations obtained through this procedure can then be written as
\<
\label{sysma}
\left( \begin{matrix}
N_0^0 & \dots & N_n^0 \cr
\vdots & \ddots & \vdots \cr
N_n^n & \dots & N_0^n \end{matrix} \right)
\left( \begin{matrix}
\mathcal{S}_n (\gen{X}^{*} | \gen{Y}^{1,n}) \cr
\vdots \cr
\mathcal{S}_n (\gen{X}^{*} | \gen{Y}^{0,n-1}) \end{matrix}
\right) = 0 
\>
with coefficients
\[
\label{CMGcoeff}
N_j^i \coloneqq \begin{cases}
\left( \frac{\left. M_0 \right|_{*}}{\left. N_1^{(\mathcal{C})} \right|_{*}} - \frac{\widetilde{M}_0 |_{*}}{\left. \widetilde{N}_1^{(\mathcal{C})} \right|_{*}} \right)_{\lambda_0 \leftrightarrow \lambda_i^{\mathcal{B}}} \qquad \quad \mbox{for} \; j=0 \cr
\left( \frac{\left. N_j^{(\mathcal{B})} \right|_{*}}{\left. N_1^{(\mathcal{C})} \right|_{*}} - \frac{\left. \widetilde{N}_j^{(\mathcal{B})} \right|_{*}}{\left. \widetilde{N}_1^{(\mathcal{C})} \right|_{*}} \right)_{\lambda_0 \leftrightarrow \lambda_i^{\mathcal{B}}} \qquad \quad \mbox{for} \; 1 \leq j \leq n \cr 
\end{cases} \; .
\]
Now one can verify in (\ref{sysma}) that $\mbox{det} \left( N_j^i \right) \neq 0$ for arbitrary values
of the variables, and we can conclude that $\mathcal{S}_n (\gen{X}^{*} | \gen{Y}^{1,n})=0$. This is still not the relation
we want to prove. For that we revisit each level of specialization backwards, taking into account the result obtained
at the final level, and perform a similar analysis. By doing so we finally obtain the desired
property $\mathcal{S}_n ( \mu_1 - \gamma, \mu_1 , \lambda_3^{\mathcal{C}}, \dots , \lambda_n^{\mathcal{C}} | \lambda_1^{\mathcal{B}}, \dots , \lambda_n^{\mathcal{B}}) = 0$.

\section{Solution for $n=1$}
\label{sec:Ln1}

The relations (\ref{SXY}) and (\ref{bSXY}) consist of a separation of variables induced 
by the location of certain zeroes of the scalar product $\mathcal{S}_n$.
Those zeroes are given in Lemma \ref{zeroes} and, alternatively, formulae (\ref{SXY}) and (\ref{bSXY})
can also be regarded as recurrence relations. This is due to the fact that the function $V$ corresponds
to the function $\mathcal{S}_{n-1}$ under the map $L \mapsto L-1$ and $\mu_i \mapsto \mu_{i+1}$
as shown in \Secref{sec:SOL}. Here we shall consider that $n \leq L$ in such a way that the last step of the iteration procedure
described by (\ref{SXY}) and (\ref{bSXY}) will require the solution of (\ref{typeA}, \ref{typeD}) for $n=1$ and arbitrary $L$. 
In that case the solution can be obtained through simple algebraic manipulations. The sequence of steps required to obtain 
the desired solution will be described in what follows.

The equations (\ref{typeA}) and (\ref{typeD}) for $n=1$ simply read
\<
\label{eq1}
M_0 \; \mathcal{S}_1 ( \lambda_1^{\mathcal{C}} | \lambda_1^{\mathcal{B}} ) +  N_{\lambda_1^{\mathcal{C}}}^{(\mathcal{C})} \; \mathcal{S}_1 ( \lambda_0 | \lambda_1^{\mathcal{B}}) + N_{\lambda_1^{\mathcal{B}}}^{(\mathcal{B})} \; \mathcal{S}_1 ( \lambda_1^{\mathcal{C}}| \lambda_0 ) &=& 0 \nonumber \\
\widetilde{M}_0 \; \mathcal{S}_1 ( \lambda_1^{\mathcal{C}} | \lambda_1^{\mathcal{B}} ) +  \widetilde{N}_{\lambda_1^{\mathcal{C}}}^{(\mathcal{C})} \; \mathcal{S}_1 ( \lambda_0 | \lambda_1^{\mathcal{B}}) + \widetilde{N}_{\lambda_1^{\mathcal{B}}}^{(\mathcal{B})} \; \mathcal{S}_1 ( \lambda_1^{\mathcal{C}}| \lambda_0 ) &=& 0 \; ,
\>
with coefficients explicitly given by
\<
M_0 &=& c \frac{b(2 \lambda_0) b(h + \lambda_0) b(\lambda_1^{\mathcal{C}} - \lambda_1^{\mathcal{B}}) a(\lambda_1^{\mathcal{C}} + \lambda_1^{\mathcal{B}})}{b(\lambda_0 - \lambda_1^{\mathcal{C}}) b(\lambda_0 - \lambda_1^{\mathcal{B}}) a(\lambda_0 + \lambda_1^{\mathcal{C}}) a(\lambda_0 + \lambda_1^{\mathcal{B}})} \prod_{j=1}^L a(\lambda_0 - \mu_j) a(\lambda_0 + \mu_j) \nonumber \\
N_{\lambda_1^{\mathcal{C}}}^{(\mathcal{C})} &=& c \frac{b(2 \lambda_1^{\mathcal{C}})}{a(2 \lambda_1^{\mathcal{C}})} \frac{b(h + \lambda_1^{\mathcal{C}})}{b(\lambda_1^{\mathcal{C}} - \lambda_0)} \prod_{j=1}^L a(\lambda_1^{\mathcal{C}} - \mu_j) a(\lambda_1^{\mathcal{C}} + \mu_j) \nonumber \\
&& - \; c \frac{b(2 \lambda_1^{\mathcal{C}})}{a(2 \lambda_1^{\mathcal{C}})} \frac{a(\lambda_1^{\mathcal{C}} - h)}{a(\lambda_1^{\mathcal{C}} + \lambda_0)} \prod_{j=1}^L b(\lambda_1^{\mathcal{C}} - \mu_j) b(\lambda_1^{\mathcal{C}} + \mu_j)  \nonumber \\
N_{\lambda_1^{\mathcal{B}}}^{(\mathcal{B})} &=& - \; c \frac{b(2 \lambda_1^{\mathcal{B}})}{a(2 \lambda_1^{\mathcal{B}})} \frac{b(h + \lambda_1^{\mathcal{B}})}{b(\lambda_1^{\mathcal{B}} - \lambda_0)} \prod_{j=1}^L a(\lambda_1^{\mathcal{B}} - \mu_j) a(\lambda_1^{\mathcal{B}} + \mu_j) \nonumber \\
&& + \; c \frac{b(2 \lambda_1^{\mathcal{B}})}{a(2 \lambda_1^{\mathcal{B}})} \frac{a(\lambda_1^{\mathcal{B}} - h)}{a(\lambda_1^{\mathcal{B}} + \lambda_0)} \prod_{j=1}^L b(\lambda_1^{\mathcal{B}} - \mu_j) b(\lambda_1^{\mathcal{B}} + \mu_j) 
\>
and
\<
\widetilde{M}_0 &=& c \frac{b(2 \lambda_0)}{a(2 \lambda_0)} \frac{a(2 \lambda_0 + \gamma) a(\lambda_0 - h) b(\lambda_1^{\mathcal{C}} - \lambda_1^{\mathcal{B}}) a(\lambda_1^{\mathcal{C}} + \lambda_1^{\mathcal{B}})}{b(\lambda_0 - \lambda_1^{\mathcal{C}}) b(\lambda_0 - \lambda_1^{\mathcal{B}}) a(\lambda_0 + \lambda_1^{\mathcal{C}}) a(\lambda_0 + \lambda_1^{\mathcal{B}})} \prod_{j=1}^L b(\lambda_0 - \mu_j) b(\lambda_0 + \mu_j) \nonumber \\
\widetilde{N}_{\lambda_1^{\mathcal{C}}}^{(\mathcal{C})} &=& - c \frac{b(2 \lambda_1^{\mathcal{C}})}{a(2 \lambda_1^{\mathcal{C}})} \frac{b(h + \lambda_1^{\mathcal{C}}) a(2\lambda_0 + \gamma)}{a(\lambda_1^{\mathcal{C}} + \lambda_0) b(2\lambda_0 + \gamma)} \prod_{j=1}^L a(\lambda_1^{\mathcal{C}} - \mu_j) a(\lambda_1^{\mathcal{C}} + \mu_j) \nonumber \\
&& - \; c \frac{b(2 \lambda_1^{\mathcal{C}})}{a(2 \lambda_1^{\mathcal{C}})} \frac{a(\lambda_1^{\mathcal{C}} - h) a(2\lambda_0 + \gamma)}{b(\lambda_0 - \lambda_1^{\mathcal{C}}) b(2 \lambda_0 + \gamma)  } \prod_{j=1}^L b(\lambda_1^{\mathcal{C}} - \mu_j) b(\lambda_1^{\mathcal{C}} + \mu_j) \nonumber \\
\widetilde{N}_{\lambda_1^{\mathcal{B}}}^{(\mathcal{B})} &=&  c \frac{b(2 \lambda_1^{\mathcal{B}})}{a(2 \lambda_1^{\mathcal{B}})} \frac{b(h + \lambda_1^{\mathcal{B}}) a(2\lambda_0 + \gamma)}{a(\lambda_1^{\mathcal{B}} + \lambda_0) b(2\lambda_0 + \gamma)} \prod_{j=1}^L a(\lambda_1^{\mathcal{B}} - \mu_j) a(\lambda_1^{\mathcal{B}} + \mu_j) \nonumber \\
&& + \; c \frac{b(2 \lambda_1^{\mathcal{B}})}{a(2 \lambda_1^{\mathcal{B}})} \frac{a(\lambda_1^{\mathcal{B}} - h) a(2\lambda_0 + \gamma)}{b(\lambda_0 - \lambda_1^{\mathcal{B}}) b(2 \lambda_0 + \gamma)  } \prod_{j=1}^L b(\lambda_1^{\mathcal{B}} - \mu_j) b(\lambda_1^{\mathcal{B}} + \mu_j) \; . 
\>
We then proceed by eliminating the term $\mathcal{S}_1 ( \lambda_0 | \lambda_1^{\mathcal{B}})$ from (\ref{eq1}).
By doing so we find the separated relation,
\<
\label{first}
\frac{a(2 \lambda_1^{\mathcal{B}})}{b(2 \lambda_1^{\mathcal{B}})} \frac{b(\lambda_1^{\mathcal{B}} - \lambda_1^{\mathcal{C}}) a(\lambda_1^{\mathcal{B}} + \lambda_1^{\mathcal{C}})}{\psi(\lambda_1^{\mathcal{B}} , \lambda_1^{\mathcal{C}})} \mathcal{S}_1 ( \lambda_1^{\mathcal{C}} | \lambda_1^{\mathcal{B}} ) = \frac{a(2 \lambda_0)}{b(2 \lambda_0)} \frac{b(\lambda_0 - \lambda_1^{\mathcal{C}}) a(\lambda_0 + \lambda_1^{\mathcal{C}})}{\psi(\lambda_0 , \lambda_1^{\mathcal{C}})} \mathcal{S}_1 ( \lambda_1^{\mathcal{C}} | \lambda_0 ) \; , \nonumber \\
\>
with function $\psi(\lambda , \bar{\lambda})$ reading
\<
\label{psi}
\psi(\lambda , \bar{\lambda}) &\coloneqq& b(h + \lambda) \prod_{j=1}^L a(\lambda - \mu_j) a(\lambda + \mu_j) \left[b(h + \bar{\lambda}) b(\lambda - \bar{\lambda})  \prod_{j=1}^L a(\bar{\lambda} - \mu_j) a(\bar{\lambda} + \mu_j) \right. \nonumber \\
&& \qquad \qquad \qquad \qquad \qquad \qquad \qquad  + \left. a(\bar{\lambda} -h) a(\lambda + \bar{\lambda})  \prod_{j=1}^L b(\bar{\lambda} - \mu_j) b(\bar{\lambda} + \mu_j) \right] \nonumber \\
&& - a(\lambda - h) \prod_{j=1}^L b(\lambda - \mu_j) b(\lambda + \mu_j) \left[b(h + \bar{\lambda}) a(\lambda + \bar{\lambda})  \prod_{j=1}^L a(\bar{\lambda} - \mu_j) a(\bar{\lambda} + \mu_j) \right. \nonumber \\
&& \qquad \qquad \qquad \qquad \qquad \qquad \qquad  + \left. a(\bar{\lambda} -h) b(\lambda - \bar{\lambda})  \prod_{j=1}^L b(\bar{\lambda} - \mu_j) b(\bar{\lambda} + \mu_j) \right] \; . \nonumber \\
\>
The variable $\lambda_1^{\mathcal{C}}$ plays the role of a parameter in (\ref{first}) and we can readily conclude that
\[
\label{second}
\mathcal{S}_1 ( \lambda_1^{\mathcal{C}} | \lambda_1^{\mathcal{B}} ) = \frac{b(2 \lambda_1^{\mathcal{B}})}{a(2 \lambda_1^{\mathcal{B}})} \frac{F(\lambda_1^{\mathcal{C}}) \; \psi(\lambda_1^{\mathcal{B}} , \lambda_1^{\mathcal{C}})} {b(\lambda_1^{\mathcal{B}} - \lambda_1^{\mathcal{C}}) a(\lambda_1^{\mathcal{B}} + \lambda_1^{\mathcal{C}})} \; ,
\]
where $F\colon \mathbb{C} \to \mathbb{C}$ is a function yet to be determined.

Next we solve (\ref{eq1}) for $\mathcal{S}_1 ( \lambda_1^{\mathcal{C}} | \lambda_1^{\mathcal{B}})$ instead and use formula (\ref{second}). This leaves
us with the following equation for the function $F$,
\[
\label{FF}
\frac{a(2 \lambda_1^{\mathcal{C}})}{b(2 \lambda_1^{\mathcal{C}})} F(\lambda_1^{\mathcal{C}}) = \frac{a(2 \lambda_0)}{b(2 \lambda_0)} F(\lambda_0) \; .
\]
Eq. (\ref{FF}) is readily solved by $F(\lambda) = \kappa \frac{b(2 \lambda)}{a(2 \lambda)}$ where $\kappa$ is a constant. Gathering our results
we then find the solution
\[
\label{third}
\mathcal{S}_1 ( \lambda_1^{\mathcal{C}} | \lambda_1^{\mathcal{B}} ) = c \frac{b(2 \lambda_1^{\mathcal{C}})}{a(2 \lambda_1^{\mathcal{C}})} \frac{b(2 \lambda_1^{\mathcal{B}})}{a(2 \lambda_1^{\mathcal{B}})}  \frac{\psi(\lambda_1^{\mathcal{B}} , \lambda_1^{\mathcal{C}})} {b(\lambda_1^{\mathcal{B}} - \lambda_1^{\mathcal{C}}) a(\lambda_1^{\mathcal{B}} + \lambda_1^{\mathcal{C}})} \; .
\]
In (\ref{third}) we have already considered $\kappa = c$ in accordance with the asymptotic behavior (\ref{asymp}).

\bibliographystyle{hunsrt}
\bibliography{references}

\end{document}